\newcommand{\ds}{\displaystyle}
\newcommand{\mc}{\mathcal}
\DeclareMathOperator{\argmax}{argmax}
\DeclareMathOperator{\supp}{supp}
\DeclareMathOperator{\argsup}{argsup}
\newcommand{\bbm}{\begin{bmatrix}}
\newcommand{\bpm}{\begin{pmatrix}}
\newcommand{\ebm}{\end{bmatrix}}
\newcommand{\epm}{\end{pmatrix}}
 \newcommand{\del}[2]{\frac{\partial #1}{\partial #2}}
 \newcommand{\dsdel}[2]{\displaystyle\frac{\partial #1}{\partial #2}}
\newcommand{\ddx}[2]{\frac{d #1}{d #2}}
\newcommand{\ddt}[1]{\frac{d #1}{dt}}
\newcommand{\dsddx}[2]{\displaystyle\frac{d #1}{d #2}}
\newcommand{\dsddt}[1]{\displaystyle\frac{d #1}{dt}}
\newcommand{\gronwalls}{Gr\"onwall's  \hspace{0.05mm}}
\newcommand{\holder}{Hölder  \hspace{0.05mm}}
\newcommand*{\defeq}{\mathrel{\vcenter{\baselineskip0.5ex \lineskiplimit0pt
                     \hbox{\scriptsize.}\hbox{\scriptsize.}}}%
                     =}
\renewcommand{\abstractname}{Abstract}
\numberwithin{equation}{section}
\newcommand{\tGamma}{T_{\Gamma}^{\delta}}
\title{\large{Analysis of Multilevel Replicator Dynamics for General Two-Strategy Social Dilemmas}}
\author[1]{Daniel B. Cooney}
\affil[1]{Program in Applied and Computational Mathematics, Princeton University}
\newcommand{\myindent}{\hspace{10mm}}
\begin{document}

\renewcommand{\baselinestretch}{1.1}
\newtheorem{definition}{Definition}[section]
\newtheorem{theorem}{Theorem}[section]
\newtheorem{lemma}[theorem]{Lemma}
\newtheorem{corollary}[theorem]{Corollary}
\newtheorem{claim}[theorem]{Claim}
\newtheorem{fact}[theorem]{Fact}
\newtheorem{proposition}{Proposition}[section]
\newtheorem{remark}{Remark}[section]
\newtheorem{example}{Example}[section]
\newtheorem{conjecture}{Conjecture}[section]

\newcommand{\qedsymb}{\mbox{ }~\hfill~{\rule{2mm}{2mm}}}
\newenvironment{proof1}{\begin{trivlist}
\item[\hspace{\labelsep}{\bf\noindent Proof: }]
}{\qedsymb\end{trivlist}}

\newenvironment{hackyproof}{\begin{trivlist}
\item[\hspace{\labelsep}{\bf\noindent Proof: }]
}{
\end{trivlist}}

\maketitle

\begin{abstract}
Here we consider a game-theoretic model of multilevel selection in which individuals compete based on their payoff and groups also compete based on the average payoff of group members. Our focus is on multilevel social dilemmas: games in which individuals are best off cheating, while groups of individuals do best when composed of many cooperators. We analyze the dynamics of the two-level replicator dynamics, a nonlocal hyperbolic PDE describing deterministic birth-death dynamics for both individuals and groups. While past work on such multilevel dynamics has restricted attention to scenarios with exactly solvable within-group dynamics, we use comparison principles and an invariant property of the tail of the population distribution to extend our analysis to all possible two-player, two-strategy social dilemmas.  In the Stag-Hunt and similar games with coordination thresholds, we show that any amount of between-group competition allows for fixation of cooperation in the population. For the Prisoners' Dilemma and Hawk-Dove game, we characterize the threshold level of between-group selection dividing a regime in which the population converges to a delta function at the equilibrium of the within-group dynamics from a regime in which between-group competition facilitates the existence of steady-state densities supporting greater levels of cooperation. In particular, we see that the threshold selection strength and average payoff at steady state depend on a tug-of-war between the individual-level incentive to be a defector in a many-cooperator group and the group-level incentive to have many cooperators over many defectors. %
We also find that lower-level selection casts a long shadow: if groups are best off with a mix of cooperators
and defectors, then there will always be fewer cooperators than optimal at steady state, even in the limit of infinitely strong competition between groups.
\end{abstract}

\singlespacing
{\hypersetup{linkbordercolor=black, linkcolor = black}
\begin{spacing}{0.01}
\renewcommand{\baselinestretch}{0.1}\normalsize
\tableofcontents
\addtocontents{toc}{\protect\setcounter{tocdepth}{2}}
\end{spacing}

\section{Introduction}

\myindent Across a variety of natural systems, a common theme of study is the conflict of selective pressures acting at mutliple oraganizational levels. Multilevel selection has been used as a framework to study problems of plasmid loss and plasmid segregation \cite{paulsson2002multileveled}, the evolution of RNA viruses \cite{turner1999prisoner}, the evolution of virulence \cite{gilchrist2004optimizing,coombs2007evaluating,levin1981selection}, and the cooperative founding of ant colonies \cite{shaffer2016foundress}. Forms of complex life exist as nested hierarchies of self-replicating units, and, throughout the course of evolution, individual entities have joined together to create higher-level self-replicating units corresponding to an evolutionary transition in biological complexity \cite{szathmary1995major}. In the process of that transition, and even in the coexistence of two levels of selection thereafter, there can be a conflict between the incentives of individuals at the lower level and the incentives of groups of individuals at the higher level. 

\myindent As a mathematical framework to compare the incentives of the individual and the incentives of the group, we look to evolutionary game theory \cite{nowak2006evolutionary,nowak2006five}. In the classical setting of the Prisoners' Dilemma, individuals can either cooperate by paying a cost to confer a benefit to their coplayers, or they can defect by paying no cost and conferring no benefit. The ``dilemma'' faced by the two players is that each is individually better off defecting regardless of their opponent's action, but the individuals would receive a greater payoff by both cooperating than they would by both defecting. Scaling up to questions about the interplay of individuals and groups, groups tend to have high average or total payoff when there are many cooperators, so what is benefecial to the group runs against what is beneficial to the individual. Multilevel selection was studied as a mechanism to promote cooperation by Traulsen and Nowak \cite{traulsen2005stochastic,traulsen2006evolution,traulsen2008analytical}, showing that group-level reproduction or fission events could help to promote the fixation probability of a single cooperator in a group-structured population otherwise composed of defectors. Further work in this area has included extensions describibg the role of spatial structure in between-group competition \cite{akdeniz2019cancellation}, allowing for more general payoff matrices and frequency-dependence, and including mechanisms like a group extinction inversely proportional to collective payoff \cite{bottcher2016promotion}.

\myindent Luo introduced a stochastic framework for multilevel selection \cite{luo2014unifying}, in which a group-structured population contained two types of individuals: one that had a selective advantage for individual-level replication and another that conferred an advantage to the group during group-level replication events. Taking the limit as group size and number of groups to infinity, Luo derived a nonlocal PDE describing the changing probability density of the composition of groups. This model was then applied to discuss the debate about individual selection, group selection, and kin selection \cite{van2014simple} and to discuss fixation probabilities in a multilevel model of Hawk-Dove games \cite{mcloone2018stochasticity}. Luo and Mattingly further analyzed the long-time behavior of the frequency-independent multilevel selection model \cite{luo2017scaling}, showing that there was a threshold level of relative selection intensity below which the individually-advantageous type fixed in all groups, and above which a density of all compositions of groups survived in steady state. 

\myindent Simon and coauthors have also studied two-level selection models derived from underlying stochastic descriptions, exploring a variety of group-level evolutionary mechanisms including fission and fusion events, as well as allowing for heterogeneous distributions of group sizes and changing number of groups \cite{simon2010dynamical,simon2012numerical,simon2013towards,simon2016group,puhalskii2017large}. A similar model was introduced to study host-parasite evolution in the microbiome \cite{van2019role}, incorporating within-host replication of microbes, between-host horizontal transmission, and vertical transmission via host reproduction. Velleret has further explored quasi-stationary distributions in an alternate stochastic Fleming-Viot scaling limit of Luo and Mattingly's model \cite{velleret2019two}. Pokalyuk and coauthors have used Luo's ball-and-urn framework to describe a multilevel host-pathogen system, and derived limiting ODE descriptions of the multilevel dynamics which show the long-time persistence of pathogen diversity in the presence of stabilizing selection \cite{pokalyuk2019maintenance} and in a case of neutral competition between pathogens \cite{pokalyuk2019diversity}.

\myindent In a recent paper, Luo's framework was extended by Cooney to describe multilevel selection in which the underlying birth rates at the within-group and between-group levels depends on payoffs from an underlying game \cite{cooney2019replicator}. In the large population limit, within-group dynamics favor defectors in the Prisoners' Dilemma and an equilibrium mix of cooperators and defectors in the Hawk-Dove game. Competition between groups was based on the average payoff of group members, and, depending on the game's payoff matrix, it was possible for a group's payoff to be maximized by a full-cooperator group or by a group featuring a mix of cooperators and defectors. %
For games in which average group payoff was maximized by full-cooperator groups, arbitrary levels of cooperation could be achieved at steady state in the presence of sufficiently strong between-group competition. For games in which average payoff was maximized by a mix of cooperators and defectors, no level of between-group selection intensity could result in an optimal level of cooperation. In such games, within-group competition cast a long shadow: the individual incentives promoting defection remain evident even in the limit in which between-group competition is infinitely stronger than within-group competition. 

\myindent In that paper, the analysis of long-time behavior of multilevel dynamics was restricted to a family of games for which the within-group replicator dynamics were exactly solvable. While this assumption was convenient to allow for direct use of the method of characteristics, it did not provide the possibility of characterizing the long-time behavior of all multilevel two-strategy games and to determine if the shadow of lower-level selection is a generic feature of mulitlevel replicator dynamics. In this paper, we make use of comparison principles to extend our analysis to games without solvable within-group dynamics, which allows for the analysis all two-strategy social dilemmas and introduces an approach for exploring more general deterministic models of multilevel selection across a variety of biological settings. By considering more general two-strategy social dillemas, we are able to analytically characterize how the conflict between individual incentive to defect %
and the group incentive to have many cooperators%
determines the average payoff of steady state population and the threshold level of selection intensity required to sustain cooperation at steady state.

 \myindent The rest of the paper is structured as follows. In Section \ref{sec:Baseline}, we define our evolutionary game, describe the two-level Moran process governing individual and group birth and deaths events, and then discuss the PDE description of the multilevel system in the limit of many groups and large group size. In Section \ref{sec:MultilevelDynamics}, we discuss strategies for studying the multilevel dynamics for cases in which the within-group dynamics are not exactly solvable. In Section \ref{sec:GeneralPD}, we focus on the Prisoners' Dilemma, proving that there is a threshold strength of between-group competition separating a regime in which the population converges to a delta concentration at the equilibrium of the within-group dynamics from a regime in which there exist steady-state densities that can support greater levels of cooperation. In Section \ref{sec:GeneralHD}, we perform a similar analysis for the Hawk-Dove game, and in Section \ref{sec:SHandother}, we consider the multilevel dynamics for the Stag Hunt game and the other two-strategy two-player social dilemmas. We discuss the results and implications for future research in Section \ref{sec:Discussion},  %
and in the Appendix we address well-posedness of solutions to our measure-valued PDE in Section \ref{sec:measureexistence} %
and include detailed calculations of integrals along characteristic curves in Section \ref{sec:integrals}.

\section{Baseline Model} \label{sec:Baseline}

We consider a two-player game with two strategies: cooperate ($C$) and defect ($D$). Individuals receive payoff according to the following payoff matrix 
\begin{equation} \label{eq:generalpayoffmatrix}
\begin{blockarray}{ccc}
& C & D \\
\begin{block}{c(cc)}
C & R & S \\
D & T & P \\
\end{block}
\end{blockarray},
\end{equation}
where a cooperator receives a reward $R$ for cooperating with a cooperator and receives a sucker payoff $S$ for cooperating with a defector, while a defector receives a temptation payoff $T$ for defecting against a cooperation and receives a punishment $P$ for defecting against another defector. 
The Prisoners' Dilemma (PD), Hawk-Dove (HD) game, and Stag-Hunt (SH) are characterized by the following rankings of payoffs 
\begin{subequations} \begin{align} PD  :& \: T > R > P > S  \\ HD :&  \: T > R > S > P \\ SH :& \: R > T > P > S \end{align} \end{subequations} 
We assume that individuals play a game with the payoff matrix of Equation \ref{eq:generalpayoffmatrix} against each other member of their group. 
In a group with a fraction $x$ of cooperators, a cooperator and defector receive average payoffs of \begin{subequations} \begin{align} \pi_C(x) &= Rx + S(1-x) \label{eq:piC} \\  \pi_D(x) &= Tx + P(1-x) \label{eq:piD} \end{align} \end{subequations} The average payoff in a group with fraction $x$ cooperators is \begin{dmath} \label{eq:grouppayoff} G(x) \defeq x \pi_C(x) + (1-x) \pi_D(x) \\ = P +  \left( S + T - 2 P \right) x + \left( R - S - T + P \right) x^2  \end{dmath} 

To describe the simultaneous competition between individuals and competition between groups, one can introduce a nested birth-death process with reproduction events occuring both for individuals and for groups \cite{luo2014unifying,cooney2019replicator}. Individuals birth-death dynamics follow a continuous-time Moran process, in which individual cooperators and defectors in an $x$-cooperator group give birth and replace a randomly chosen individual at rate $1 + w_I \pi_C(x)$ and $1 + w_I \pi_D(x)$,  where $w_I$ is the selection strength of individual-level competition relative to the neutral birth rate $1$. Group-level reproduction events follow a Moran-type process as well, with $x$-cooperator groups reproducing and replacing a randomly chosen group with rate $\Lambda \left( 1 + w_G G(x) \right)$, where  $w_G$ denotes the selection strength of between-group competition  and $\Lambda$ describes a relative timescale of within-group and between-group replication events. %

In the limit of infinitely many groups and of infinite group size, one can use the approach of Luo \cite{luo2014unifying} or Cooney \cite{cooney2019replicator} to derive a determinisitc PDE description of the two-level population dynamics, which is given by
\begin{equation} \label{eq:replicatorpde} \dsdel{f(t,x)}{t} = - \dsdel{}{x} \left[ x(1-x) \left( \pi_C(x) - \pi_D(x) \right) f(t,x) \right] + \lambda f(t,x)  \left[G(x) - \int_0^1 G(y) f(t,y) dy \right], \end{equation}
where $f(t,x)$ is the probability density of groups with fraction $x$ cooperators at time $t$ and $\lambda := \frac{w_G \Lambda}{w_I}$ describes the relative effects of competition between individuals and competition between groups. This equation can be thought of as the two-level version of the replicator equation, and can be used to describe multilevel selection for a broad range of models for the payoffs of cooperators $\pi_C(x)$ and defectors $\pi_D(x)$ and on the average payoff of group members $G(x)$. It is a first-order PDE with a linear advection term $\del{}{x}\left[x(1-x) \left(\pi_C(x) - \pi_D(x) \right) \right]$, describing the effect of individual birth and death due to within-group competition, and a nonlocal and nonlinear term $\lambda f(t,x) \left[G(x) - \int_0^1 G(y) f(t,y) dy \right]$ describing group-level birth and death due to between-group competition. To further illustrate the role of within-group competition, we can rewrite Equation \ref{eq:replicatorpde} in the suggestive form 
\begin{align} \label{eq:replicatorpdediv} \dsdel{f(t,x)}{t} + \left[x(1-x)\left( \pi^C(x) -  \pi^D(x) \right) \right] \dsdel{f(t,x)}{x} &= - f(t,x) \dsdel{}{x} \left[ x(1-x) \left( \pi^C(x) - \pi^D(x) \right) \right]  \nonumber \\ &+  \lambda f(t,x)  \left[G(x) - \int_0^1 G(y) f(t,y) dy \right] \end{align}
In this form, we see that the characteristic curves of Equation \ref{eq:replicatorpde} are given by the replicator dynamics for individual-level competition in a single group 
\begin{equation} \label{eq:replicatorODE} \dsddt{x(t)} = x (1-x) \left( \pi_C(x) - \pi_D(x) \right)
\end{equation}
Introduced by Taylor and Jonker, the replicator equation is a classic tool to describe deterministic strategic dynamics in evolutionary game theory \cite{taylor1978evolutionary,schuster1983replicator,cressman2014replicator}. For two-strategy games with the payoff matrix of Equation \ref{eq:generalpayoffmatrix}, there are four generic behaviors of the within-group replicator equation: dominance of defectors and global convergence to the all-defector equilibrium (Prisoners' Dilemma), stable coexistence of cooperators and defectors at equilibrium $x^{eq}$ such that $\pi_C(x^{eq}) = \pi_D(x^{eq})$ (Hawk-Dove), bistability of all-defector equilibrium and all-cooperator equilibrium (Stag-Hunt), or dominance of cooperators and global convergence to all-cooperator equilibrium (Prisoners' Delight). In the presence of between-group selection, we expect the multilevel dynamics of Equation \ref{eq:replicatorpde} to cause tension between individual-level dynamics favoring stable equilibrium compositions of Equation \ref{eq:replicatorODE} and between-group dynamics favoring compositions with high average payoff. 

In the scenario we study with payoff matrix given by Equation \ref{eq:generalpayoffmatrix}, we can use the shorthand from \cite{cooney2019replicator} $\alpha = R - S -T + P$, $\beta = S-P$, and $\gamma = S + T - 2P$ to describe the key quantities characterizing within-group and between-group competition. For within-group dynamics where it will be useful for us to note that \begin{equation} \label{eq:picminuspid} \pi_C(x) - \pi_D(x) = \beta + \alpha x  \end{equation} represents the relative advantage (or disadvantage) of a cooperator's payoff relative to a defector in a group with a fraction of $x$ cooperators. In this notation, the average payoff of members in a group composed of fraction $x$ cooperators can be written as
\begin{equation} \label{eq:grouppayoffparam} G(x) = P + \gamma x + \alpha x^2, \end{equation}
while the average payoff of the whole population can be found by integrating $G(x)$ against the probability density of group compositions $f(t,x)$ in the population, yielding 
\begin{equation} \label{eq:payoffmomentsparam} \langle G(\cdot) \rangle_{f(t,x)} := \int_0^1  G(y)  f(t,y)  dy = P + \gamma \int_0^1 y f(t,y) dy + \alpha \int_0^1 y^2 f(t,y) dy \end{equation}
 Denoting the $j$th moments of $f(t,x)$ by $M_j^f(t) = \int_0^1 y^j f(t,y) dy$, we can use our expressions for $\pi_C(x) - \pi_D(x)$, $G(x)$, and $\int_0^1 G(y) f(t,y) dy$ from Equations \ref{eq:picminuspid}, \ref{eq:grouppayoffparam}, and \ref{eq:payoffmomentsparam} to describe the dynamics of our multilevel system by the equation 
\begin{equation} \label{eq:replicatorpdeparam} \dsdel{f(t,x)}{t} = -  \dsdel{}{x} \left[ x(1-x) \left( \beta + \alpha x \right) f(t,x) \right] + \lambda f(t,x)  \left[\gamma x + \alpha x^2 - \left( \gamma M_1^f(t) + \alpha M_2^f(t) \right) \right], \end{equation} 
Because the constant term $P$ has no effect on the maximizer $x^*$ of $G(x)$ and cancels out from the expression $G(x) - \int_0^1 G(y) f(t,y) dy$ in the righthand side of Equation \ref{eq:replicatorpdeparam}, we can choose to understand the dynamics of between-group competition through the simplified group average payoff function 
\begin{equation} \label{eq:grouppayoffparamsimplified} G(x) = \gamma x + \alpha x^2. \end{equation}
This will allow us to more succintly characterize the effect of between-group competition along the characteristic curves of Equation \ref{eq:replicatorpdeparam}, so we will use this as our expression for $G(x)$ in all subsequent analysis.

In the absence of between-group competition, the densities are expected to cocentrate at the equilibria of the within-group replicator dynamics of Equation \ref{eq:replicatorODE}. To further study the possibility that the distribution of group types can concetrate as delta-functions at a given composition of cooperators, we can introduce, as in \cite{luo2017scaling,cooney2019replicator}, a weak, measure-valued formulation of the multilevel dynamics. Introducing $\mu_t(dx)$, the measure of group compositions $x$ at time $t$, and a $C^1$ test function $\psi(x)$, we can recharacterize the dynamics of our multilevel system as follows
\begin{equation} \label{eq:replicatormeasurepde} \dsdel{}{t} \ds\int_0^1 \psi(x) \mu_t(dx) =  \ds\int_0^1 \left\{ \dsdel{\psi(x)}{x} \left[ x(1-x) \left(\beta + \alpha x\right) \right] + \lambda \left[\gamma x + \alpha x^2 - \left(\gamma M_1^{\mu}(t) + \alpha M_2^{\mu}(t)  \right) \right] \right\} \mu_t(dx) \end{equation} 
where $M_j^{\mu}(t) = \int_0^1 x^j \mu_t(dx)$ denotes the $j$th moment with respect to $\mu_t(dx)$. As shown in \cite{cooney2019replicator}, delta-functions at the within-group equilibria ($\delta(x)$, $\delta(1-x)$, and $\delta\left(\beta - \alpha x\right)$) are fixed points of Equation \ref{eq:replicatormeasurepde}. By choosing the test function $\psi(x) \equiv 1$, we see that a measure $\mu_t(dx)$ solving Equation \ref{eq:replicatormeasurepde} will satisfy $\ddt{} \int_0^1 \mu_t(dx) = 0$ if $\int_0^1 \mu_t(dx)$, and therefore the measure $\mu_t(dx)$ will remain normalized if the initial measure satisfies $\int_0^1 \mu_0(dx) = 1$. 

To explore solutions of Equation \ref{eq:replicatormeasurepde}, we can use the idea of the method of characteristics to solve between-group dynamics along the solution curves of the within-group dynamics. We can describe the effect of between-group competition on solutions along characteristic curves $\phi_t(x)$ as 

\begin{equation} \label{eq:wtx} w_t(\phi_t(x)) =  \exp\left( \lambda \int_0^t \left[ G(\phi_t(x)) - \langle G(\cdot) \rangle_{\mu_s} \right]  ds \right), \end{equation}
which can also be written more explicitly by writing $G(x) = P +  \gamma x + \alpha x^2$ as 
\begin{equation} \label{eq:wtxmoment} w_t(\phi_t(x)) =  \exp\left( \lambda \int_0^t \left[\gamma \phi_t(x) + \alpha \phi_t(x)^2 - \left( \gamma M_1^{\mu_s} + \alpha M_2^{\mu_s} \right) \right] ds   \right). \end{equation}
We can use the idea of a push-forward measure to describe the effect of within-group dynamics on the change from the initial distribution $\mu_0(dx)$ along characteristic curves $\phi_t(x)$ \cite{luo2017scaling,evers2016mild,canizo2011well}. We characterize the push-forward measure of $\mu_0(dx)$ under the dynamics of $\phi_t(x)$ using the equivalent notations of 
\begin{equation*} \label{eq:pushforward} P_t \mu_0(dx)  = \left[ \mu_0 \circ \phi_t^{-1}  \right](dx) \end{equation*}
In the weak, measure-valued formulation, we can characterize the effect of the push-forward measure by its effect on a test function $\psi(x)$, which takes the form 
\[ \int_0^1 \psi(x) P_t \mu_0(dx) = \int_0^1 \psi(x) \left[ \mu_0 \circ \phi_t^{-1} \right] (dx) = \int_0^1 \psi(\phi_t(x)) \mu_0(dx) \]
Combining the effects of the within-group and between-group dynamics, we can arrive at the following implicit representation formula for the population distribution at time $t$ 
\begin{equation} \label{eq:mutimplicit}  \mu_t(dx) = w_t(x) \left[\mu_0 \circ \phi_t^{-1} \right] (dx). \end{equation}
In the weak formulation, this relation can be expressed as 
\begin{equation} \label{eq:mutimplicitweak} \int_0^1 \psi(x) \mu_t(dx) = \int_0^1  \psi(x) w_t(x) \left[\mu_0 \circ \phi_t^{-1} \right] (dx) = \int_0^1 \psi(\phi_t(x)) w_t(\phi_t(x)) \mu_0(dx)   \end{equation}
The representation formula in Equation \ref{eq:mutimplicitweak} is particularly helpful in describing the long-time behavior of the two-level dynamics. Because it is an implicit expression, we would like to prove that their exists such a $\mu_t(dx)$ satisfying Equation \ref{eq:mutimplicitweak}, and that this $\mu_t(dx)$ is the unique solution of Equation \ref{eq:replicatormeasurepde}. We address the well-posedness of Equation \ref{eq:replicatormeasurepde} in Section \ref{sec:measureexistence}, making use of the method of characteristics and the Banach fixed-point theorem.

\subsection{Possible Group-Level Reproduction Functions} \label{sec:grouppayoff}

A relevant property of the underlying game for solutions of Equation \ref{eq:replicatorpde} is whether the group-level reproduction rate $G(x) =  \gamma x + \alpha x^2$ is maximized by full-cooperator groups or by a group composition with a mix of cooperators and defectors. If $x^* = \argmax_{x \in [0,1]}\left(G(x)\right) = 1$, then all-cooperator groups are most favored at the between-group level and between-group selection pushes for as much cooperation as possible. %
When $x^* < 1$,  between-group selection instead most favors groups composed of $x^*$ fraction cooperators and $1-x^*$ fraction defectors, pushing most strongly for a groups with an intermediate level of cooperation $x^*$. 

For the Prisoners' Dilemma, our classification of the composition $x^*$ maximizing collective payoff depends on the signs of $\gamma$ and $\alpha$. Here we list the four possible cases of $\gamma$ and $\alpha$ and the impact of payoff maximizer $x^*$. 
\begin{itemize}
\item[Case I:] If $\gamma > 0$, $\alpha < 0$, %
then $G''(x) = \alpha < 0$ and $G(x)$ has a local maximum at $- \frac{\gamma}{2 \alpha} > 0$, so either $G(x)$ has an interior maximum or $G(x)$ is increasing for every $x \in [0,1]$. We can characterize the group type $x^*$ with maximal average payoff in terms of either its dependence on $\gamma$ and $\alpha$ or the values of the payoff matrix, yielding
   \begin{displaymath}
   x^* = \left\{
     \begin{array}{cr}
       - \ds\frac{\gamma}{2 \alpha} & :  \gamma < - 2 \alpha \\
       1 & : \gamma \geq- 2 \alpha
     \end{array}
   \right. \Longrightarrow
   x^* = \left\{
     \begin{array}{cr}
        \ds\frac{S + T - 2P}{2 (-R + S + T - P)} & :  2R < T + S \\
       1 & :  2R \geq T  + S
     \end{array}
   \right. 
\end{displaymath} 
In particular, we notice that $x^* = 1$ when $2R \geq T + S$, so between-group competition best favors full-cooperator groups when the total payoff from the interaction of two cooperators contributes more to the collective payoff than the total payoff generated by the interaction of a cooperator and a defector. When $2R < T + S$ and a cooperator-defector pair can outcontribute two cooperators at the group level, we instead have that the most favored group composition $x^*$ feaures a nonzero level of defection. To highlight these two possible optimal compositions, we can further subdivide Case I into Case Ia in which there is intermediate collective payoff optimum $x^* < 1$ and Case Ib in which full cooperator groups $x^* = 1$ maximized collective payoff. 
\item[Case II:] If $\gamma > 0$, $\alpha = 0$, we have $G(x) =  \gamma x$ and we obtain a rescaled version of the Luo-Mattingly model \cite{luo2014unifying,luo2017scaling}. For this case, group payoff is maximized by the full-cooperator group with $x^* = 1$, $\forall \gamma > 0$. 
\item[Case III:] If $\gamma, \alpha > 0$, then $G(x)$ is increasing on $[0,1]$ and is maximized at $x^* = 1$.
\item[Case IV:] If $\gamma < 0$, then $\alpha > 0$, as %
$\alpha = (R - P) - (S + T - 2P) > 0$. We again have $x^* = 1$,  because $G(1) =   \gamma + \alpha  = R - P > 0 = G(0)$ and $G''(x) = 2 \alpha >  0$, so $-\frac{\gamma}{2 \alpha}$ would be a local minimum even if it were in $(0,1)$.
%

%
\end{itemize}
In Figure \ref{fig:pdcases}, we illustrate example group payoff functions for Cases Ia-IV (left) and display the regions in $(\gamma,\alpha$) parameter space which constitute different cases of the PD. The gray region below the dotted line corresponds to games which are not PDs, while the vertical dashed line corresponds to the measure-zero region describe the Case II PD with $\gamma > 0$, $\alpha = 0$. 
\begin{figure}[H]
    \centering
 \hspace{-5mm}\includegraphics[width=0.5\textwidth]{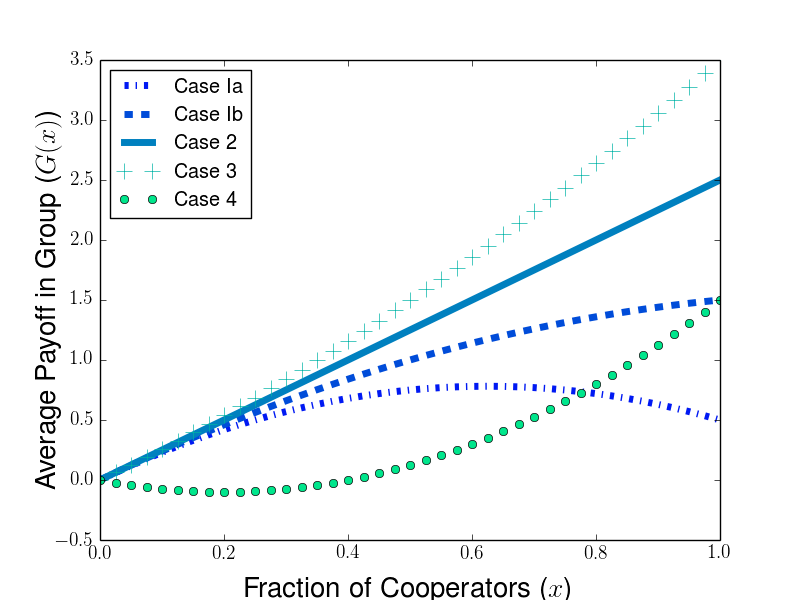} 
\hspace{-5mm}  \includegraphics[width=0.495\textwidth]{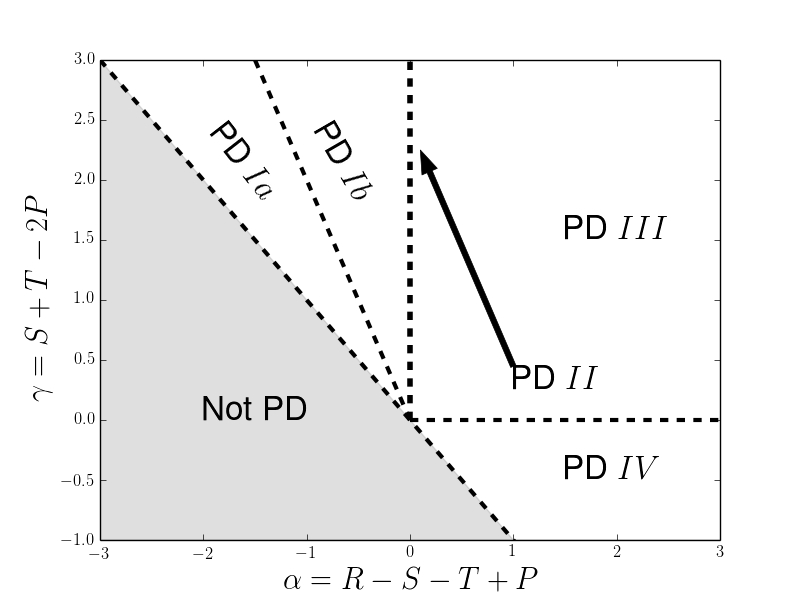} 
    \caption{Illustration of different cases of the Prisoners' Dilemma. (Left) Example group payoff functions $G(x)$ for each case. For Cases Ia-III, we chose $\gamma = 2.5$, and $\gamma = -1$ for Case IV. Case Ia (dash-dotted line, $\alpha = -2$), group payoff is maximized at $x^* = 5/8$ ; Case Ib (dashed line, $\alpha = -1.0$), group payoff increases sublinearily and is maximized at $x^* = 1$; Case II (solid line, $\alpha = 0$), group payoff increases linearly to maximum at $x^* = 1$; Case III (plus signs, $\alpha = 1$), group payoff increases superlinearly to maxomum at $x^*=1$; Case IV (green dots, $\alpha = 2.5$), group payoff is minimized at $x = 0.2$ and maximized at $x^* = 1$. (Right) Different cases of the PD illustrated for different values of $\gamma$ and $\alpha$. Below the dashed line $\gamma = - \alpha$, games are not PDs, and the boundary between Case Ia with interior group payoff optima $x^* \in (0,1)$ and Case Ib with edge optimum $x^* = 1$ is given by the dashed line $\gamma = - 2 \alpha$. The arrow indicates that Case II corresponds to the line segment on which $\gamma > 0$ and $\alpha = 0$.} 
    \label{fig:pdcases}
\end{figure}

For the Hawk-Dove game, %
the possible scenarios are simpler because we know that $\gamma  = (S-P) + (T-P) > 0$ and 
$\alpha = (R-T) + (P-S) < 0$. As for the Case I PD, we find that $G(x)$ has a local maximum  at $x^* = \frac{\gamma}{2 |\alpha|}$ when $G'(x) = \gamma - 2 |\alpha| = 0$, and we can find that the group type maximizing average payoff is given by
 \begin{displaymath}
   x^* = \left\{
     \begin{array}{cr}
       - \ds\frac{\gamma}{2 \alpha} & :  \gamma < - 2 \alpha \\
       1 & : \gamma \geq- 2 \alpha
     \end{array}
   \right.%
\end{displaymath} 
Because $\alpha < 0$ for the HD game, we can rewrite our expression for the most fit group type as $x^* = \min\left(\tfrac{\gamma}{2 |\alpha|},1 \right)$. In addition, we note that $G''(x) = 2 \alpha < 0$, and therefore the critical point $x^*$ is a maximum and we know that $G(x)$ is increasing $\forall x < x^*$. The within-group dynamics of the HD game have a stable fixed point at $x^{eq} = \tfrac{\beta}{|\alpha|}$. Because $T > S$ for the HD game, we have that $\gamma > 2 \beta$ (as $\gamma = S + T - 2P > 2(S-P) = 2 \beta$), and therefore we have that $x^* = \frac{\gamma}{2 |\alpha|} > \frac{2 \beta}{2 |\alpha|} = x^{eq}$. Because the average payof $G(x)$ is increasing for $x < x^*$, we know that $G(x)$ is increasing for all $x \in [0,x^{eq}]$, and therefore average group payoff is better at $x^{eq}$ than for any group composition with fewer cooperators that the within-group equilibrium.

For the SH game, we know from the defining payoff rankings that $\alpha = \left( R  - T\right) + \left( P - S \right) > 0$, that $\beta = S - P < 0$, and that the sign of $\gamma = \underbrace{\left( S - P \right)}_{< 0} + \underbrace{\left( T - P \right)}_{> 0}$ is inconclusive. Therefore we must consider separately the cases in which $\gamma > 0$ and $\gamma < 0$. When $\gamma > 0$, we see that $G(x) = \gamma x + \alpha x^2$ is increasing for all $x \in [0,1]$, and therefore $x^* = 1$. When $\gamma < 0$, we see from computing $G'(x) = -|\gamma| + 2 \alpha x$ that $G(x)$ is decreasing for $x = 0$, and, recalling the within-group equilibrium $x^{eq} = \frac{|\beta|}{\alpha}$, that \[G'(x^{eq}) = G'\left(\tfrac{|\beta|}{\alpha}\right) = -|\gamma| + 2 |\beta| = \left(S + T - 2P \right) + 2 \left( P - S \right) = T - S > 0  \]
so we know that $G(x)$ is increasing for $x \in [x^{eq},1]$. Furthermore, by checking the value of $G(x)$ at $x^{eq}$, we see that 
\[ G(x^{eq}) = -|\gamma| \left( \tfrac{|\beta|}{\alpha} \right) + \alpha \left(\tfrac{|\beta|}{\alpha} \right)^2 = \underbrace{\left(\frac{|\beta|}{\alpha}\right)}_{> 0} \underbrace{\left[|\beta| - |\gamma| \right]}_{= T - P > 0} > 0 = G(0),  \]
and we have for $y > x^{eq}$ that the group payoff at a given level of cooperation $y$ exceeds all of group payoffs for group compositions with fewer cooperators. As a consequence, group average payoff $G(x)$ is maximized by full-cooperator groups ($x^* = 1$) in the $\gamma < 0$ case as well. 

Having studied the possible cases for $x^*$ for the PD, HD, and SH games, we will see in subsequent sections that whether $x^* = 1$ or $x^* \in (0,1)$ gives a qualitative change in steady state behavior in the limit when $\lambda \to \infty$. In the case $x^* \in (0,1)$, the level of cooperation at steady state can never achieve optimal levels of cooperation in this limit, as the multilevel system cannot avoid the shadow of lower-level selection. 

\section{Dynamics of Multilevel System} \label{sec:MultilevelDynamics}

In this section, we characterize properties of the multilevel dynamics of Equation \ref{eq:replicatormeasurepde} that will be useful for analyzing the long-time behavior of both the PD and HD games. In particular, we explore tools which allow us to study the multilevel dynamics for cases in which the within-group replicator dynamics are not necessarily exactly solvable. These include comparison principles relating the exact within-group dynamics with simpler logistic ODEs with explicit solutions (Section \ref{sec:Comparison}), integrating solutions along the simpler solution curves (Section \ref{sec:solutionalong}), and shows the preservation of the tail behavior of the population distribution near the full-cooperator equilibrium (Section \ref{sec:Holderpreserve}). 

\subsection{Comparison Principles for Characteristic Curves} \label{sec:Comparison}

Because we aim to analyze Prisoners' Dilemmas and Hawk-Dove games without solvable within-group replicator dynamics, we will make use of ODE comparision principles to compare solutions of within-group dynamics to solutions of ODEs for which we have exact solutions. For the Prisoners' Dilemma, we denote by $\Psi_t(x_0;k)$ the solution to the logistic equation given by \begin{equation} \label{eq:logisticODE} \dsddt{x(t)} = - k x(1-x) \: \: , \: \: x(0) = x_0 \end{equation} We can solve this ODE to find that \begin{equation} \label{eq:Psit} \Psi_t(x_0;k) = \frac{x_0}{x_0 + (1-x_0) e^{kt}} \end{equation} We further denote by $\Psi^{-1}_t(x;k)$ as the solution of Equation \ref{eq:logisticODE} backwards in time given $t$ and $x(t)$, which is found to be \begin{equation} \label{eq:Psiinvt} \Psi^{-1}_t(x;k) = \frac{x}{x + (1-x) e^{-kt}} \end{equation}

For the Prisoners' Dilemma, it is convenient to highlight the stable fixed point at 0 and unstable fixed point at 1 by rewriting the replicator dynamics as 
\begin{equation} \label{eq:replicatorPDbetaabs} \dsddt{x} = - x ( 1 - x) \left( |\beta| - \alpha x \right) \end{equation} Note that we choose to write $|\beta|$ because $\beta < 0$ for all Prisoners' Dilemmas, while the sign of $\alpha$ varies by case. We can examine the third term $|\beta| - \alpha x$ to understand how solutions of the within-group replicator dynamics can be compared to solutions of logistic ODEs for given $k$. 

\subsubsection{Case I PD}

For Case I, we have that $\alpha < 0$, and we can rewrite Equation \ref{eq:replicatorPDbetaabs} as \begin{equation} \label{eq:replicatorPDbetaalphaabs} \dsddt{x} = - x ( 1 - x) \left( |\beta| + |\alpha| x \right) \end{equation} We now observe that the following inequalities are satisfied for all $x \in [0,1]$, 
\[ - \left( |\beta| + |\alpha| \right) x (1-x) \leq -x(1-x) \left( |\beta| + |\alpha| x \right) \leq - |\beta| x (1-x) \leq 0 \]
This inequality can be rewritten in terms of $\phi_t(\cdot)$ and $\Psi_t(k;\cdot)$ as 
\[\dsddt{} \left[ \Psi_t \left(|\beta| + |\alpha|; \cdot \right) \right] \leq \dsddt{} \left[ \Psi_t(\cdot) \right] \leq \dsddt{} \left[ \Psi_t \left(|\beta|; \cdot \right) \right] \leq 0 \] 
Because all of the within-group dynamics for $\phi_t(\cdot)$ and $\Psi_t(k;\cdot)$ result in decreasing fractions of cooperation, we see that solution for $\Psi_t(|\beta| + |\alpha|;\cdot)$ moves faster than and $\Psi_t(\cdot)$ and $\Psi_t(|\beta| + |\alpha|;\cdot)$. From a shared initial condition $x_0$, it then follows that
\begin{equation} \label{eq:PDcharacteristicsrankingsCaseI} 0 \leq   \Psi_t \left(|\beta| + |\alpha|; x_0 \right) \leq  \Psi_t(x_0)\leq  \Psi_t \left(|\beta|; x_0 \right) \leq 1 \end{equation}
Correspondingly, for a given group located $x$ at time $t$, we see that the backward-in-time solutions of the above equations satisfy
\begin{equation} \label{eq:PDcharacteristicsbackwardrankingsCaseI} 0 \leq   \Psi_t^{-1} \left(|\beta|; x \right) \leq  \Psi_t^{-1}(x)\leq  \Psi_t^{-1} \left(|\beta|+|\alpha|; x \right) \leq 1 \end{equation}

In Figure \ref{fig:pdcomparison}(left), we illustrate the ranking of trajectories from Equation \ref{eq:PDcharacteristicsrankingsCaseI} with an example numerical solution to the forward characteristic curves $\phi_t(x_0)$ for the PD and exactly solvable faster ($\Phi_t(|\beta|+|\alpha|;x_0)$) and slower trajectories ($\Phi_t(|\beta|;x_0)$). In Figure \ref{fig:pdcomparison}(right), we illustrate the equivalent ranking of backwards trajectories from Equation \ref{eq:PDcharacteristicsbackwardrankingsCaseI}.

\begin{figure}[H]
    \centering
   \hspace{-5mm} \includegraphics[width=0.505\textwidth]{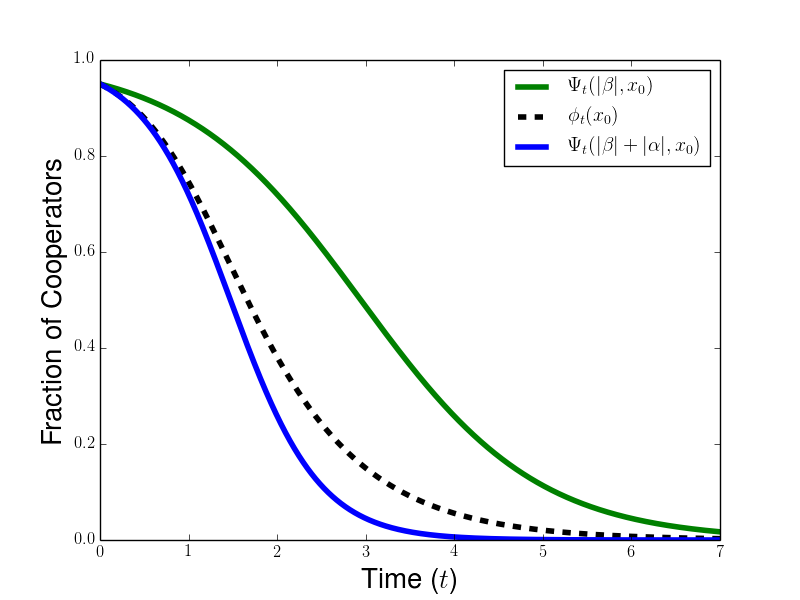} 
    \hspace{-5mm} \includegraphics[width=0.505\textwidth]{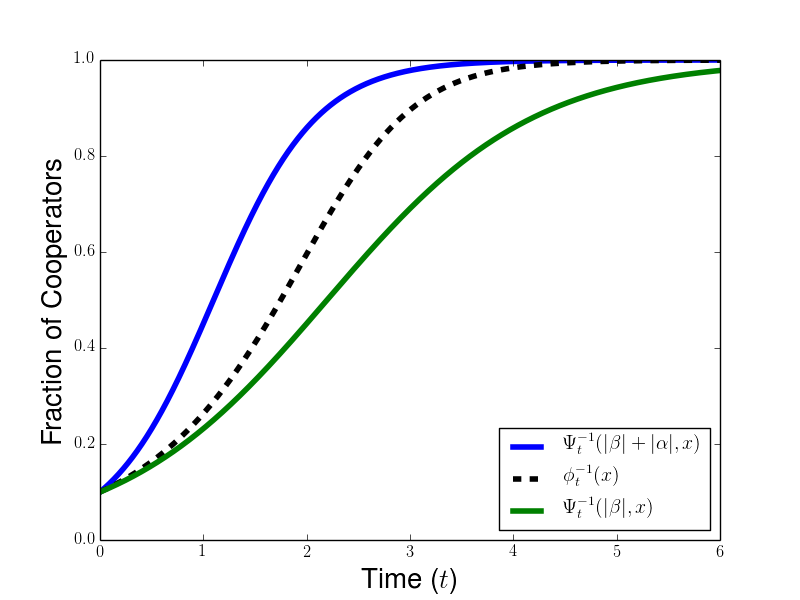}
    \caption{Illustration of comparison principle for solutions of within-group replicator dynamics with logistic ODEs. (Left) Dotted lines correspond to numerical solution for exact characteristic curve $\phi_t(x_0)$ (left) and backwards characteristic $\phi_t^{-1}(x)$ (right), blue lines correspond to faster logistic solutions $\Psi_t(|\beta| + |\alpha|;x_0)$ and $\Psi^{-1}_t(|\beta| + |\alpha|;x_0)$, and green lines correspond to slower logistic solutions $\Psi_t(|\beta|;x)$ and $\Psi^{-1}_t(|\beta|;x)$. Solutions correspond to parameters $\beta = \alpha = -1$ and initial conditions $x_0 = 0.95$ and $x = 0.1$.  }
    \label{fig:pdcomparison}
\end{figure}

We can generalize this comparison principle to within-group dynamics taking the form 
\begin{equation} \label{eq:replicatorF} \dsddt{x(t)} = x \left(1-x\right) F \left[ \pi_C(x) - \pi_D(x) \right]  \end{equation}
for an odd, increasing function $F(\cdot)$. An example of such an equation is the Fermi dynamics \cite{traulsen2006stochastic} used to model decision making based on pairwise payoff comparisons, for which $F(w) = \tanh \left( sw/2 \right)$. Within-group  dynamics often take the form of Equation \ref{eq:replicatorF} because the product $x(1-x)$ corresponds to the probability of finding a cooperator and defector needed for an imitative change of strategy, and the properties of $F(\cdot)$ mean that greater payoff differences result in faster strategy changes and that reversing the payoff differences should reverse the direction of strategy imitation. 

Using Equation \ref{eq:picminuspid} and the oddness of $F(\cdot)$, we see that solutions $\tilde{\phi}_t(x_0)$ to Equation \ref{eq:replicatorF} with initial condition $x_0$ satisfy
\[ \dsddt{} \tilde{\phi}_t(x_0) = x \left( 1 - x\right) F \left(- |\beta| - |\alpha| x \right)  = - x \left(1-x\right) F\left(|\beta| + |\alpha| x \right) \]
Because $F(\cdot)$ is increasing, we further see that 
\[- x \left(1-x\right) F\left( |\beta| + |\alpha| \right) \leq -x \left( 1 - x \right) F\left(|\beta| + |\alpha| x \right)  \leq -x \left( 1 - x \right) F\left(|\beta| \right) .    \]
Then we are able to find the ranking of characteristic curves 
\begin{equation} \label{eq:PDcaseIgeneralFranking} \Psi_t\left(F\left(|\beta| + |\alpha|\right);x_0 \right) \leq \tilde{\phi}_t(x_0) \leq  \Psi_t\left(F\left(|\beta|\right);x_0 \right) \end{equation}
and a similiar ranking of characteristics backwards in time
\begin{equation} \label{eq:PDcaseIgeneralFranking} \Psi_t^{-1}\left(F\left(|\beta|\right);x \right) \leq \tilde{\phi}_t^{-1}(x) \leq  \Psi_t^{-1}\left(F\left(|\beta| + |\alpha|\right);x \right) .  \end{equation}
Because we can prove these comparison principles for more general within-group dynamics, many of the results we prove in the following sections can be extended to a much broader class of two-level replicator equations. This approach can provide a strategy for studying how robust the resulting phenomena like the shadow of lower-level selection are to different assumptions about the rules for birth and death rates for individuals and groups.

\subsubsection{Cases III-IV PD}

For Case III and Case IV PD's, $\alpha > 0$, and we can use Equation \ref{eq:replicatorPDbetaabs} to find the following inequality for $x \in [0,1]$ 
\[ -|\beta| x (1-x) \leq - x(1-x)\left(|\beta| - \alpha x\right) \leq -x(1-x) \left( |\beta| - \alpha \right) \leq  0 \]
Using the definitions of $\phi_t(x_0)$ and $\Psi_t(k;x_0)$, we can deduce the following inequality for the characteristic curves with shared initial condition $x$
\begin{equation} \label{eq:PDcharacteristicsrankingsCaseIII} \Psi_t(|\beta|;x_0) \leq \phi_t(x_0) \leq \Psi_t(|\beta| - \alpha;x_0)  \end{equation} and find the corresponding inequality for backward-in-time characteristic curves
\begin{equation} \label{eq:PDcharacteristicsbackwardrankingsCaseIII} \Psi^{-1}_t(|\beta| - \alpha;x) \leq \phi^{-1}_t(x) \leq \Psi^{-1}_t(|\beta|;x) \end{equation}
In these cases of the PD, we can again generalize these comparison results to the class of within-group dynamics given by Equation \ref{eq:replicatorF}.

\subsubsection{Hawk-Dove Game}

For the Hawk-Dove game, $\beta > 0$ and $\alpha < 0$, so we can rewrite the within-group dynamics as 
\begin{equation} \label{eq:HDrepwithingroup} \dsddx{x(t)}{t} =  x \left(1-x\right) \left( \beta - |\alpha| x \right), \: \: x(0) = x_0, \end{equation} which has a stable interior equilibrium at $x^{eq} = \tfrac{\beta}{|\alpha|}$ and untable equilibria at $0$ and $1$. When $x \in \left[ \tfrac{\beta}{|\alpha|}, 1 \right]$, we can try to describe solutions $\phi_t(x_0)$ to Equation \ref{eq:HDrepwithingroup} by comparison to the function $\Xi_t\left(k;x_0\right)$ which is defined as the solution to 
\begin{equation} \label{eq:HDrightequation}  \dsddx{x(t)}{t} =  k \left(1-x\right) \left( \beta - |\alpha| x \right), \: \: x(0) = x_0,  \end{equation}
 Solving this ODE yields the following formula for $\Xi_t(k;x_0)$
 \begin{equation} \label{eq:Xit} \Xi_t(k;x_0) = \frac{\left(1 - x_0\right) \beta + \left(|\alpha| x_0 - \beta \right) e^{\left( \beta - |\alpha| \right) kt}}{\left(1-x_0 \right)|\alpha| + \left( |\alpha| x_0 - \beta \right) e^{\left( \beta - |\alpha| \right) kt} } \end{equation}
 and we can also solve this ODE backwards in time to see that
  \begin{equation} \label{eq:Xiinvt} \Xi_t^{-1}(k;x) = \frac{\left(1 - x\right) \beta + \left(|\alpha| x - \beta \right) e^{-\left( \beta - |\alpha| \right) kt}}{\left(1-x \right) |\alpha| + \left( |\alpha| x - \beta \right) e^{-\left( \beta - |\alpha| \right) kt} } \end{equation}
 
 Then we note that $\left(1-x \right) \left( \beta - |\alpha| x \right) \leq 0$ for $x \in [\tfrac{\beta}{|\alpha|},1]$, so, for this range of $x$-values, we have that 
 \[ \left(1 - x \right) \left(\beta - |\alpha| x \right) \leq x \left( 1 - x \right) \left( \beta - |\alpha| x \right) \leq \frac{\beta}{|\alpha|} \left( 1 - x \right) \left( \beta - |\alpha| x \right) \leq 0 \] 
 In terms of our named solutions $\phi_t(k;\cdot)$ and $\Xi_t(k;\cdot)$, we can rewrite these inequalities as 
 \[ \dsddx{}{t} \Xi_t(1;\cdot) \leq \dsddx{}{t} \phi_t(\cdot) \leq \dsddx{}{t} \Xi_t\left(\frac{\beta}{|\alpha|};\cdot\right) \leq 0 \]
 so we can deduce that 
 \begin{equation} \label{eq:HDcharranking} \Xi_t(1;x) \leq \phi_t(x) \leq \Xi_t\left(\tfrac{\beta}{|\alpha|};x\right) \end{equation} for $x \in \left[\tfrac{\beta}{|\alpha|},1\right]$. Similarly, we have a corresponding inequality for the characteristic curves backwards in time
  \begin{equation} \label{eq:HDcharbackwardranking} \Xi_t^{-1}(\tfrac{\beta}{|\alpha|};x) \leq \phi_t^{-1}(x) \leq \Xi_t^{-1}\left(1;x\right) \end{equation}
 We can also study simpler solutions for within-group dynamics below the Hawk-Dove equalibrium $\frac{\beta}{|\alpha|}$. We denote by $\Pi_t(k:x_0)$ the solution of the quadratic ODE
 \begin{equation} \label{eq:HDleftequation} \dsddt{x(t)} = k x \left( \beta - |\alpha| x \right), \: \: x(0) = x_0  \end{equation}
 Solving this ODE forward in time, we can write $\Pi_t(k;x_0)$ as 
 \begin{equation} \label{eq:Pit} \Pi_t(k;x_0) = \frac{\beta x_0}{|\alpha| x_0 + \left( \beta - |\alpha| x_0 \right) e^{- \beta k t}} \end{equation}
 and we can solve the ODE backward in time from a given $t$ and $x(t)$ to get
 \begin{equation} \label{eq:Piinvt} \Pi_t^{-1}\left(k;x\right) = \frac{\beta x}{|\alpha| x + \left( \beta - |\alpha| x \right) e^{\beta k t}} \end{equation}
 We know that $x\left( \beta - |\alpha| x \right) \geq 0$ for $x \in \left[0,\tfrac{\beta}{|\alpha|}\right]$, so for $x$ in this range we have the following inequalities
 \[ 0 \leq \left(1 - \frac{\beta}{|\alpha|} \right) x \left( \beta - |\alpha| x \right) \leq \left(1 - x \right) x \left( \beta - |\alpha| x \right)  \leq  x \left( \beta - |\alpha| x \right), \]
 which can be expressed in terms of our named solutions as 
 \[0 \leq \ddt{}\Pi_t\left(1 - \tfrac{\beta}{|\alpha|};\cdot \right) \leq \ddt{} \phi_t(\cdot) \leq \ddt{} \Pi_t\left(1;x_0 \right) \]
This gives us the following ordering of the solutions to the characteristic curves  
\begin{equation} \label{eq:HDPirankings} \Pi_t\left(1 - \tfrac{\beta}{|\alpha|};x_0 \right) \leq  \phi_t(x_0) \leq  \Pi_t\left(1;x_0 \right) \end{equation}
We can get an analogous ranking of characteristic curves solved backwards in time from $x(t)$ as  
\begin{equation} \label{eq:HDPirankingsbackwards}   \Pi^{-1}_t\left(1;x \right)  \leq  \phi^{-1}_t(x) \leq  \Pi^{-1}_t\left(1 - \tfrac{\beta}{|\alpha|};x \right) \end{equation}

Now we illustrate the rankings of characteristics for the HD game. In Figure \ref{fig:hdcomparisonupper}, we represent the dynamics above the within-group HD equilibrium, using the rankings forward in time from Equation \ref{eq:HDcharranking} with the simplified $\Xi_t(k;x)$ curves (left) and the rankings backward in time from Equation \ref{eq:HDcharbackwardranking} with the backward curves $\Xi_t^{-1}(k;x)$ (right). In Figure \ref{fig:hdcomparisonlower}, we illustrate the dynamics below the within-group HD equilibrium, using the rankings forward in time from Equation \ref{eq:HDPirankings} with the $\Pi_t(k;x)$ curves (left) and the rankings backward in time from Equation \ref{eq:HDPirankingsbackwards} with the backward curves $\Xi_t^{-1}(k;x)$ (right).

\begin{figure}[htbp]
    \centering
  \hspace{-5mm} \includegraphics[width=0.505\textwidth]{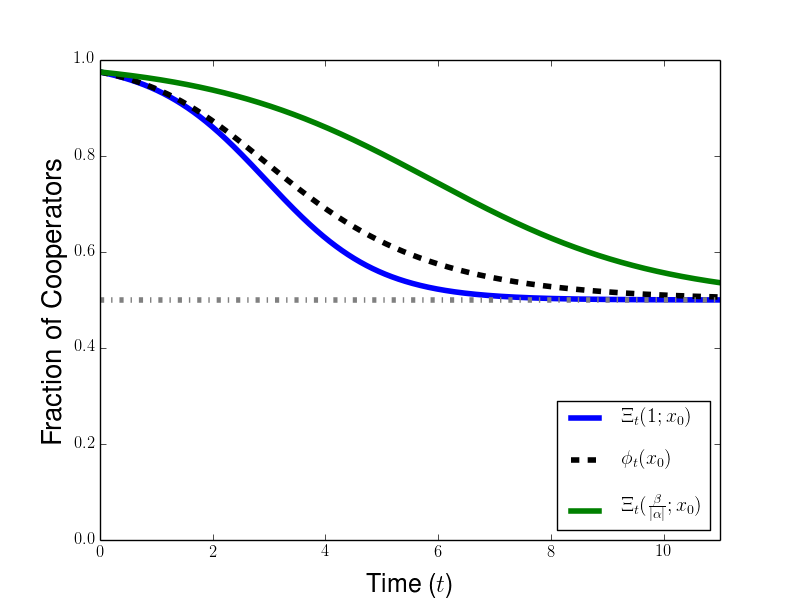} 
   \hspace{-5mm} \includegraphics[width=0.505\textwidth]{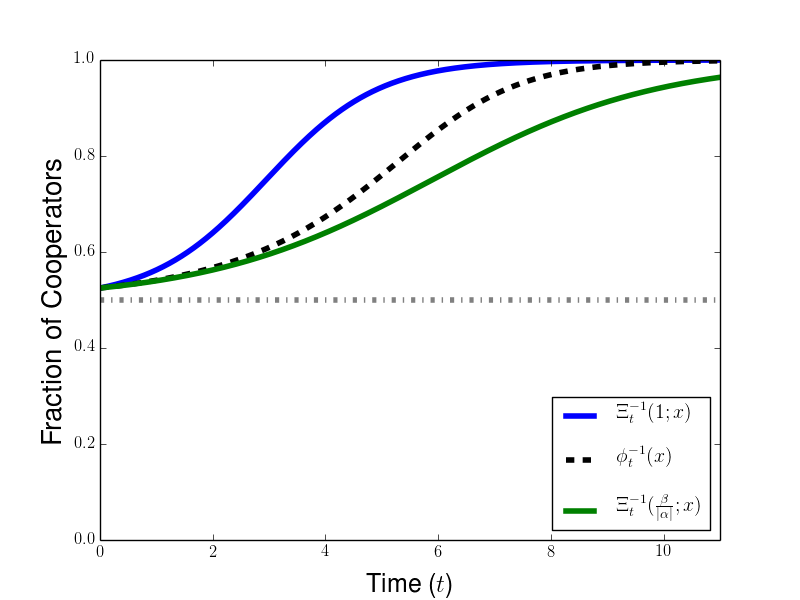} 
    \caption{Illustration of comparison principle for solutions of within-group replicator equation for HD game for initial conditions above the within-group equilibrium at $\frac{\beta}{|\alpha|}$. (Left) Solutions forward in time: dashed black line describes numerical solution of exact characteristic curves $\phi_t(x_0)$, while blue and green lines correspond to faster curve $\Xi_t\left(\tfrac{\beta}{|\alpha|};x_0\right)$ and slower curve $\Xi_t\left(1;x_0 \right)$, respectively. (Right) Solutions backward in time: dashed black line corresponds to numerical solution of characteristic curves $\phi_t^{-1}(x)$, while blue and green lines correspond to faster and slower solutions exactly solvable logistic curves $\Xi_t^{-1}\left(1;x\right)$ and $\Xi_t^{-1} \left(\tfrac{\beta}{|\alpha|};x \right)$, respectively.}
    \label{fig:hdcomparisonupper}
\end{figure}

\begin{figure}[H]
    \centering
  \hspace{-5mm} \includegraphics[width=0.505\textwidth]{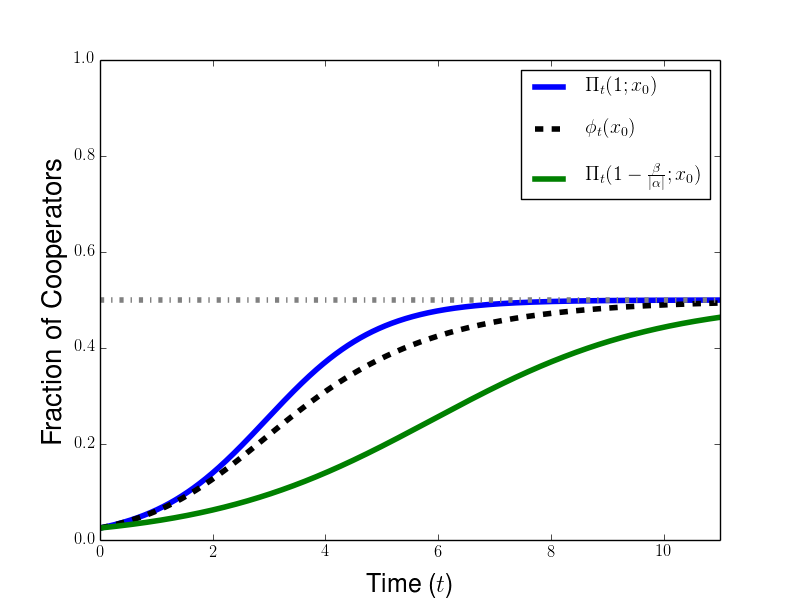} 
   \hspace{-5mm} \includegraphics[width=0.505\textwidth]{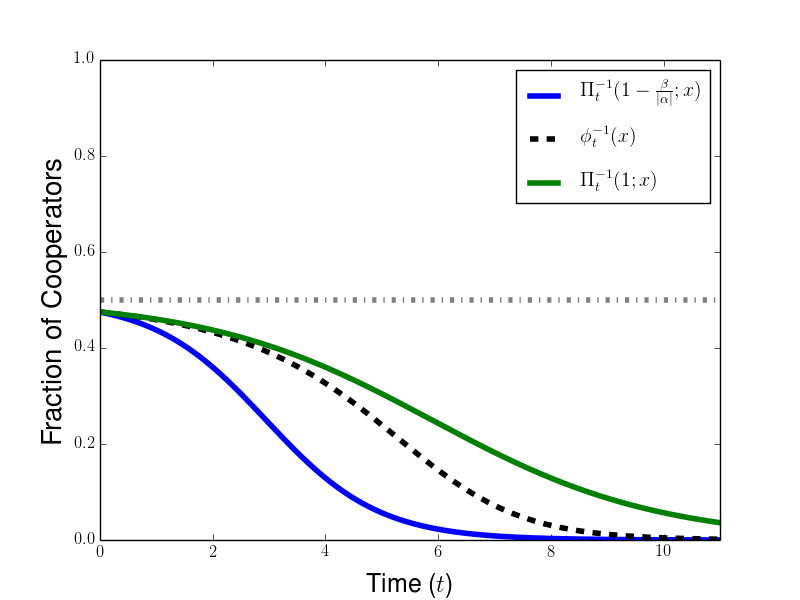} 
    \caption{Illustration of comparison principle for solutions of within-group replicator equation for HD game for initial conditions below the within-group equilibrium at $\frac{\beta}{|\alpha|}$. (Left) Solutions forward in time: dashed black line describes numerical solution of exact characteristic curves $\phi_t(x_0)$, while blue and green lines correspond to faster curve $\Pi_t\left(1;x_0\right)$ and slower curve $\Xi_t\left(1- \tfrac{\beta}{|\alpha|};x_0 \right)$, respectively. (Right) Solutions backward in time: dashed black line corresponds to numerical solution of characteristic curves $\phi_t^{-1}(x)$, while blue and green lines correspond to faster and slower solutions exactly solvable logistic curves $\Xi_t^{-1}\left(1;x\right)$ and $\Xi_t^{-1} \left(1-\tfrac{\beta}{|\alpha|};x \right)$, respectively.}
    \label{fig:hdcomparisonlower}
\end{figure}

\subsection{Solutions Along Characteristics} \label{sec:solutionalong}

An advantage of using our comparison principles is that we can make use of the solvability of solutions along the simiplified characteristics $\Psi_t(k,x)$ for the PD and along $\Xi_t(k;x)$ and $\Pi_t(k;x)$ for the HD game. From the formula describing between-group competiton for solutions along the exact characteristics $\phi_t(x)$ given by Equation \ref{eq:wtx}, we know that we would like to use our comparison principles to estimate $G(\phi_t(x_0))$, the average group payoff along characteristics. For the various cases of the PD, we use the signs of $\gamma$ and $\alpha$ and the rankings of characteristic curves given by Equations \ref{eq:PDcharacteristicsrankingsCaseI} and \ref{eq:PDcharacteristicsrankingsCaseIII} to find the following bounds for $G(\phi_t(x_0))$ along our known simpler solution curves
\begin{equation*} \label{eq:groupcomparison} \begin{aligned} \gamma \Psi_t\left(|\beta| + |\alpha|;x_0 \right) - |\alpha| \Psi_t(|\beta|;x_0)^2 \leq G(\phi_t(x_0)) &%
\leq  \gamma \Psi_t\left(|\beta|;x_0 \right) - |\alpha| \Psi_t(|\beta| + |\alpha|;x_0)^2 \textnormal{: (Case I PD)} 
\\ \gamma \Psi_t\left(|\beta|;x_0 \right) + \alpha \Psi_t(|\beta|;x_0)^2 \leq G(\phi_t(x_0)) &\leq  \gamma \Psi_t\left(|\beta| - \alpha;x_0 \right) + \alpha \Psi_t(|\beta| - \alpha;x_0)^2 \textnormal{: (Case III PD)}
\\ -|\gamma| \Psi_t\left(|\beta|-\alpha;x_0 \right) + \alpha \Psi_t(|\beta|;x_0)^2 \leq G(\phi_t(x_0)) &\leq  -|\gamma| \Psi_t\left(|\beta|;x_0 \right) + \alpha \Psi_t(|\beta|-\alpha ;x_0)^2 \textnormal{: (Case IV PD)}  \end{aligned} \end{equation*}
In Figure \ref{fig:pdgroupcomparison}, we illustrate example trajectories for $G(\phi_t(x_0))$ and corresponding upper and lower bounds for the Case I (left) and Case III (right) PDs. In the example for the Case I PD, we can see that $G(\phi_t(x_0))$ and its bounds of are non-monotonic in time, corresponding to the characteristic curves traversing the intermediate optimum composition for average group payoff.

\begin{figure}[H]
    \centering
   \hspace{-5mm} \includegraphics[width=0.505\textwidth]{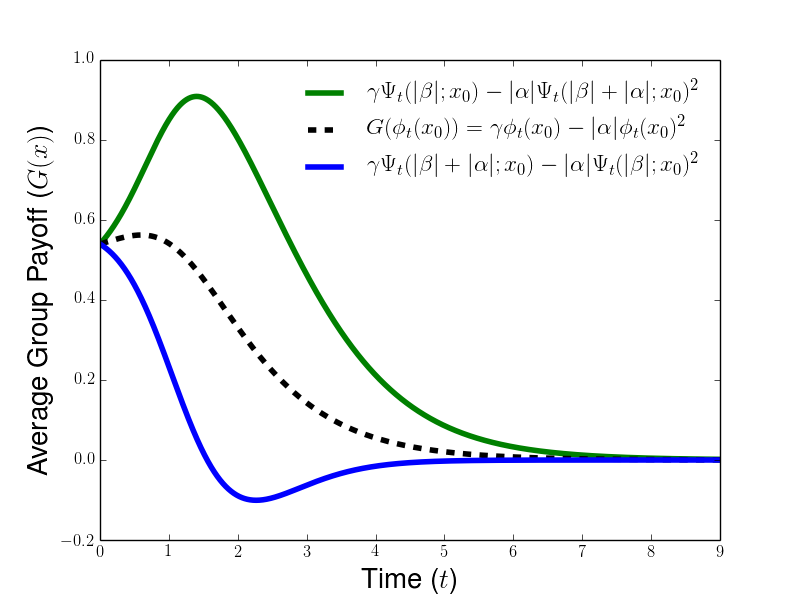} 
    \hspace{-5mm} \includegraphics[width=0.505\textwidth]{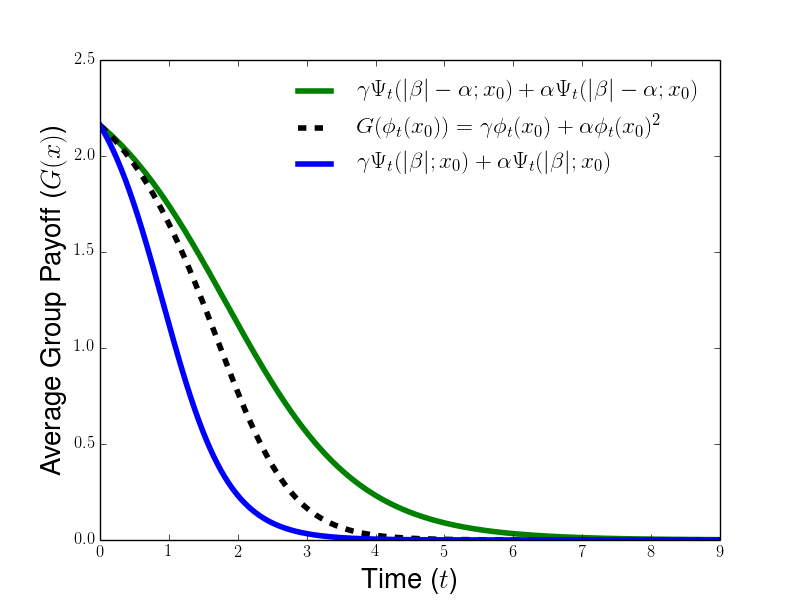}
    \caption{Comparison of group payoff along characteristic curves $G(\phi_t(x_0))$ with expressions along exactly solvable curves for the PD game. Parameters correspond to Case I (left) and Case III (right).}
    \label{fig:pdgroupcomparison}
\end{figure}

We can similarly bound $G(\phi_t(x_0))$ for the HD game using the ranking of characteristic curves from Equations \ref{eq:HDcharranking} and \ref{eq:HDPirankings} and the fact that $\gamma > 0$ and $\alpha < 0$ for all HD games. We find that 
\begin{align*} \gamma \Pi_t\left( 1 - \tfrac{\beta}{|\alpha|};x_0\right) -  |\alpha| \Pi_t(1;x_0)^2 &\leq G(\phi_t(x_0)) \leq  \gamma \Pi_t\left(1;x_0 \right) - |\alpha| \Pi_t\left( 1 - \tfrac{\beta}{|\alpha|};x_0\right)^2  : \: \: x_0 < \tfrac{\beta}{|\alpha|} 
\\  \gamma \Xi_t\left( 1;x_0\right) -  |\alpha| \Xi_t\left(1- \tfrac{\beta}{|\alpha|};x_0\right)^2 &\leq G(\phi_t(x_0)) \leq  \gamma \Pi_t\left(1- \tfrac{\beta}{|\alpha|};x_0 \right) - |\alpha| \Pi_t\left( 1 ;x_0\right)^2  : \: \: x_0 > \tfrac{\beta}{|\alpha|} \end{align*} 
In Figure \ref{fig:hdgroupcomparison}, we illustrate example trajectories for $G(\phi_t(x_0))$ and corresponding upper and lower bounds for the group payoff both above the within-group equilibrium (left) and and below the within-group equilibrium (right).

\begin{figure}[H]
    \centering
   \hspace{-5mm} \includegraphics[width=0.505\textwidth]{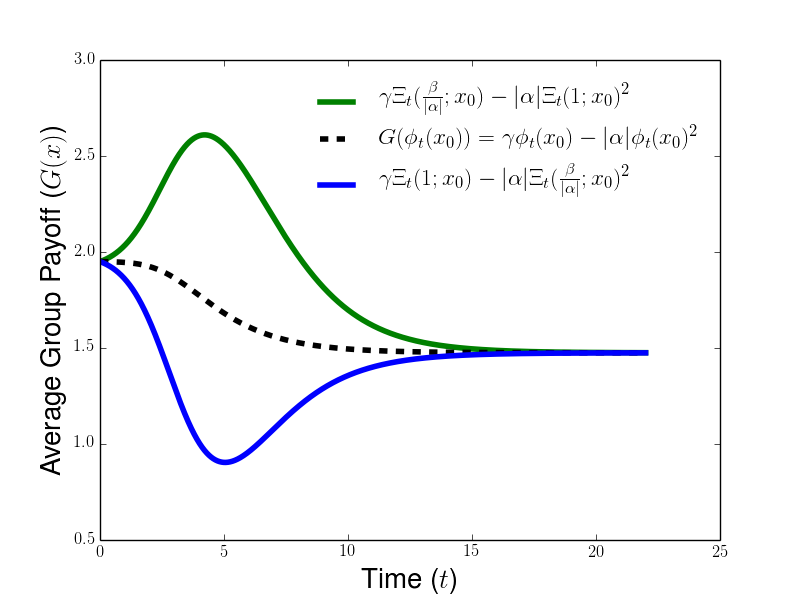} 
    \hspace{-5mm} \includegraphics[width=0.505\textwidth]{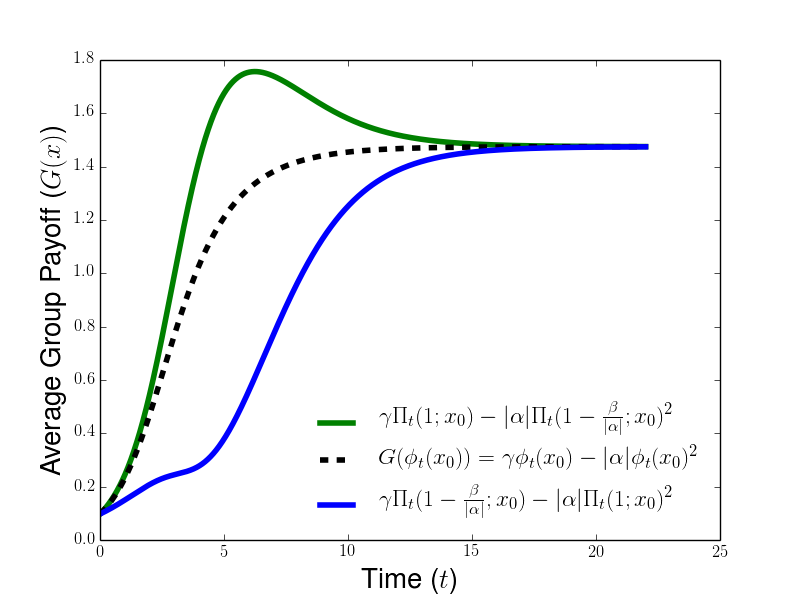}
    \caption{Comparison of group payoff along characteristic curves $G(\phi_t(x_0))$ with expressions for upper and lower bounds of group payoff in terms of exactly solvable curves. for the HD game. Dynamics described above within-group equlibrium (left) and below the within-group equilibrium (right). Long-time behavior for $G(\phi_t(x_0))$ and the bounds agree upon group payoff $G(x^{eq})$ at within-group equilibrium for HD game.}
    \label{fig:hdgroupcomparison}
\end{figure}

To study solutions along characteristics from Equation \ref{eq:wtx}, we also know the must integrate $G(\phi_t(x_0))$ in time. From Equation \ref{eq:wtxmoment}, we know that this requires computing \[\int_0^t G(\phi_s(x_0)) ds = \gamma \int_0^t \phi_s(x_0) ds + \alpha \int_0^t \phi_s(x_0)^2 ds\]
Using the bounds for $G(\phi_t(x_0))$ found above, this means that we need to be able to compute both the integrals in time of our simplified characteristic curves and of the squares of our simplified characterstic curves. For convenience, we include a derivation of these formulas in Section \ref{sec:integrals} of the appendix.

For the Prisoners' Dilemma, we see that the integrals of the simplified curves $\Psi_t(k;x)$ and their squares $\Psi_t(k;x)^2$ are given by
\begin{align} \label{eq:psiintegral} \int_0^t \Psi_s(k,x) ds &= t - \frac{1}{k} \log\left( x + (1-x) e^{kt} \right), \\
 \label{eq:psisquaredintegral} \int_0^t \Psi_s(k;x)^2 ds &= t + \frac{1}{k} \log \left( x + (1-x) e^{-kt} \right) + \frac{x}{k} - \frac{1}{k} \frac{x}{x + (1-x) e^{-kt}}. \end{align}

 For within-group dynamics above the Hawk-Dove equilibrium, we find the integrals along the simpler characteristic curves $\Xi_t(k;x_0)$ are given by 
 \begin{align} \label{eq:Xiintegral} \int_0^t \Xi_s(k,x) ds &= t - \frac{1}{|\alpha| k} \log\left(\frac{ |\alpha| x - \beta + |\alpha| \left( 1 - x \right) e^{\left(-|\alpha| - \beta\right) kt}}{|\alpha| - \beta}  \right), \\
 \label{eq:Xisquaredintegral} \int_0^t \Xi_s(k;x)^2 ds &=  t - \left(\frac{|\alpha| + \beta}{|\alpha|^2 k}\right) \log \left( \frac{ |\alpha| x - \beta + |\alpha| \left( 1 - x \right) e^{\left(|\alpha| - \beta\right) kt}}{|\alpha| - \beta}  \right)  \\ &-  \frac{1}{|\alpha| k} \left[ \frac{\left(1-x\right)\left( |\alpha| x - \beta \right) \left( 1 - e^{- \left( |\alpha| - \beta \right) k t} \right)}{|\alpha| x - \beta + |\alpha| \left( 1 - x \right) e^{- \left( |\alpha| - \beta \right) k t}} \right] \nonumber
  \end{align}
 For within-group dynamics in the Hawk-Dove game below the within-group equilibrium, we want to integrate the simpler solution curves $\Pi_t(k;x)$ in time. With the same method, we find that 
 \begin{subequations} \begin{align} \label{eq:Piintegral} \int_0^t \Pi_s(k;x_0) ds &= - \frac{1}{k} \log\left(\frac{1}{\beta} \left[\left( \beta - |\alpha| x \right) + |\alpha| x e^{- \beta k t} \right] \right)  \\ \label{eq:Pisquaredintegral} \int_0^t \Pi_s(k;x_0)^2 ds &=  - \frac{\beta}{|\alpha|^2 k} \log\left(\frac{1}{\beta} \left\{\left( \beta - |\alpha| x \right) + |\alpha| x e^{- \beta k t} \right\} \right) \\  &-  \frac{1}{|\alpha| k} \left[ \frac{x \left( \beta - |\alpha| x \right) \left(e^{\beta k t} - 1 \right)}{|\alpha| x + \left( \beta - |\alpha| x \right) e^{\beta k t}} \right] \nonumber %
 \end{align} \end{subequations}
We note that both of these integrals are bounded for all $x$ and $t$, unlike the integrals for dynamics above the within-group equilibria that scale linearly with $t$. For the other integrals, the linear $t$ term will help us to determine the levels of $\lambda$ for which all groups will convergence to a delta-concentration at the equilibrium level of cooperation for the within-group dynamics.

\subsection{\holder Exponent Near Full-Cooperator Group} \label{sec:Holderpreserve}

To describe the long-time behavior of our multilevel dynamics, it is useful to characterize the tail of the measure $\mu_t(dx)$ near the endpoint $x=1$ corresponding to full-cooperator groups. We quantify this tail behavior using the \holder exponent near 1, which is defined as follows.
\begin{definition} \label{def:Holderx1} The \holder exponent $\theta$ of measure $\mu(dx)$ near $x=1$ satisfies the relation \begin{equation} \label{eq:holderformula} \theta = \ds\inf_{\Theta \geq 0} \left\{ \lim_{y \to 0} \frac{\mu\left([1-y,1] \right)}{y^{\Theta}} > 0  \right\} \end{equation} \end{definition}

The \holder exponent near $x=1$ was employed by Luo and Mattingly \cite{luo2017scaling} and by Cooney \cite{cooney2019replicator} to show when multilevel dynamics converge to a delta-function or a steady-state density for cases in which the within-group dynamics were exactly solvable. They arise when tracing solutions along characteristics backwards in time, as the within-group dynamics drive groups away from compositions of cooperators near $x=1$ as time proceeds. We can better understand the intuitive meaning of the \holder exponent near $x=1$ from the following example. 
\begin{example} \label{ex:holdertheta}
The measures with power law densities of the form $\mu(dx) = \theta \left( 1 - x \right)^{\theta -1} dx$ have \holder exponent $\theta$ near $x=1$. We can see this by noting that $\mu\left([1-y,1] \right) = \theta \int_{1-y}^1 \left( 1 - z \right)^{\theta -1} dz = y^{\theta}$, and then finding the limit 
\[ \ds\lim_{y \to 0} \frac{\mu\left([1-y,1] \right)}{y^{\Theta}} = \ds\lim_{y \to 0} y^{\theta - \Theta} =  \left\{ \begin{array}{cr}  0 & :  \Theta < \theta \\
      1 &: \Theta = \theta \\ \infty &: \Theta > \theta
     \end{array}
   \right. . \]
Then Definition \ref{def:Holderx1} tells us that $\theta$ is the \holder exponent near $x=1$.
\end{example}
Intuitively, we can think of a measure with \holder exponent $\theta$ near $x=1$ as acting like $\mu(dx) = \theta \left( 1 - x \right)^{\theta - 1}dx$ near $x=1$, i.e. having a similar level of concentration or lack of probability near the full-cooperator groups. In particular, higher $\theta$ corresponds to sparser concentration of groups with composition close to full cooperation.

In the previously studied cases, the \holder exponent of the initial distribution turned out to also be the \holder exponent of the eventual steady state of the dynamics \cite{luo2017scaling,cooney2019replicator}. For a special family of the Prisoners' Dilemma for which Equation \ref{eq:replicatorpde} has an explicit solution, it was shown directly using the solution formula that the \holder exponents of $\mu_t(dx)$ were preserved in time \cite{cooney2019replicator}. Making use of comparison principles, we now shown in the following propositions that the \holder exponent is preserved in time for all PD and HD games, highlighting the importance of the \holder exponent in describing the long-time behavior of our multilevel dynamics. Because we use comparisons with both $\Psi_t(k;x)$ and $\Xi_t(k;x)$, we separately address the PD game in Proposition \ref{prop:PDholder} and the HD game Proposition \ref{prop:HDholder}.

\begin{proposition} \label{prop:PDholder} For the multilevel PD dynamics, the \holder exponent near $x=1$ is preserved in time. Given an initial measure $\mu_0(dx)$ with \holder exponent near $x=1$ of $\theta_0 = \theta$, the \holder exponent $\theta_t$ of the measure $\mu_t(dx)$ solving Equation \ref{eq:replicatormeasurepde} satisfies $\theta_t = \theta$.   \end{proposition}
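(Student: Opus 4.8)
The plan is to use the representation formula \eqref{eq:mutimplicitweak}, which tells us that $\mu_t(dx) = w_t(x)\,[\mu_0 \circ \phi_t^{-1}](dx)$, and to observe that the weight $w_t(x)$ from \eqref{eq:wtx} is bounded above and below by positive constants on any fixed time interval (since $G$ is bounded on $[0,1]$ and the measure stays normalized, $\langle G(\cdot)\rangle_{\mu_s}$ is bounded, so the exponent $\lambda\int_0^t[G(\phi_s(x)) - \langle G\rangle_{\mu_s}]\,ds$ is bounded uniformly in $x$). Consequently the weight does not affect the H\"older exponent near $x=1$: for any $y$ small, $\mu_t([1-y,1])$ is comparable, up to these constants, to $[\mu_0 \circ \phi_t^{-1}]([1-y,1]) = \mu_0(\phi_t^{-1}([1-y,1]))$. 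So everything reduces to understanding how the backward characteristic flow $\phi_t^{-1}$ distorts a neighborhood of $x=1$.

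Next I would pin down that distortion using the comparison principles of Section \ref{sec:Comparison}. For the PD, the backward flow $\phi_t^{-1}$ pushes points near $1$ toward $1$ (since $0$ is the stable and $1$ the unstable equilibrium of the within-group dynamics, running time backward sends mass toward $1$), and by \eqref{eq:PDcharacteristicsbackwardrankingsCaseI} (Case I) and \eqref{eq:PDcharacteristicsbackwardrankingsCaseIII} (Cases III--IV) the exact backward flow is sandwiched between two logistic backward flows $\Psi_t^{-1}(k_1;\cdot)$ and $\Psi_t^{-1}(k_2;\cdot)$ with explicit constants $k_1,k_2$ (namely $\{|\beta|,|\beta|+|\alpha|\}$ in Case I and $\{|\beta|-\alpha,|\beta|\}$ in Cases III--IV). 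Writing $x = 1-y$, a direct expansion of $\Psi_t^{-1}(k;1-y) = \frac{1-y}{(1-y) + y e^{-kt}}$ shows that $1 - \Psi_t^{-1}(k;1-y) = \frac{y e^{-kt}}{1 - y(1-e^{-kt})} = e^{-kt} y\,(1 + O(y))$ as $y\to 0$. Hence the preimage of $[1-y,1]$ under the logistic backward flow is an interval $[1 - c_k(t) y(1+o(1)),\, 1]$ with $c_k(t) = e^{kt}$; sandwiching gives that $\phi_t^{-1}([1-y,1]) = [1 - c(t) y(1+o(1)),1]$ for some $c(t) \in [e^{k_1 t}, e^{k_2 t}]$. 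Therefore $\mu_0(\phi_t^{-1}([1-y,1]))$ and $\mu_0([1-c(t)y,1])$ agree to leading order, and since multiplying the argument $y$ by a fixed positive constant $c(t)$ does not change the infimum defining $\theta$ in \eqref{eq:holderformula} (the limit $\lim_{y\to 0}\mu_0([1-c(t)y,1])/y^\Theta = c(t)^\Theta \lim_{y\to 0}\mu_0([1-y,1])/y^\Theta$ is positive iff the unscaled one is), we conclude $\theta_t = \theta_0 = \theta$.

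To make this rigorous I would argue with the two one-sided inequalities separately: fix $\Theta$, and show $\lim_{y\to 0}\mu_t([1-y,1])/y^\Theta > 0$ iff $\lim_{y\to 0}\mu_0([1-y,1])/y^\Theta > 0$, using the weight bounds and the monotonicity of $\mu_0([1-\cdot,1])$ together with the interval sandwich $[1 - e^{k_1 t}y(1+o(1)),1] \subseteq \phi_t^{-1}([1-y,1]) \subseteq [1 - e^{k_2 t}y(1+o(1)),1]$ (with the inclusions possibly swapped depending on the ordering of $k_1,k_2$, which is exactly what \eqref{eq:PDcharacteristicsbackwardrankingsCaseI}--\eqref{eq:PDcharacteristicsbackwardrankingsCaseIII} supply). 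The main obstacle is handling the $o(1)$ corrections uniformly: I need that the $(1+O(y))$ error in $1-\Psi_t^{-1}(k;1-y)$ is uniform for $y$ in a neighborhood of $0$ at fixed $t$, so that the comparison of $\mu_0$-masses of nearly-equal shrinking intervals is legitimate for every $\Theta$; this is where one must be slightly careful, but it follows from the explicit logistic formulas since the correction is a smooth function of $y$ vanishing at $y=0$. One should also note the Case II PD ($\alpha=0$) is already covered since the within-group dynamics are then exactly logistic with $k=|\beta|$, so the sandwich is an equality. Once the interval-distortion estimate is in place, the equality $\theta_t = \theta$ is immediate from Definition \ref{def:Holderx1}.
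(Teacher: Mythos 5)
Your proposal is correct and follows essentially the same route as the paper's proof: the push-forward representation $\mu_t = w_t\,(\mu_0\circ\phi_t^{-1})$, the uniform bounds $e^{-G^*t}\le w_t\le e^{G^*t}$ at fixed $t$, and the comparison principles with the explicit logistic backward flows $\Psi_t^{-1}(k;\cdot)$ to show that $\phi_t^{-1}$ only rescales distances to $x=1$ by a positive $t$-dependent factor, which leaves the infimum in Definition \ref{def:Holderx1} unchanged (the paper packages this as a separate $\limsup$ bound using $1-x\le \phi_t^{-1}(1-x)$ and a $\liminf$ bound via the ratio $(1-\phi_t^{-1}(1-x))/(1-\phi_t(1-x))$, but the content is the same). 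The only blemish is the sign slip $c_k(t)=e^{kt}$ where your own preceding computation gives $1-\Psi_t^{-1}(k;1-y)=e^{-kt}y(1+O(y))$; this is harmless since only the positivity of the rescaling constant enters the argument.
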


\begin{proof}
Using the push-forward reseprentation of $\mu_t(dx) = w_t(x) (\mu_0 \circ \phi_t^{-1})(dx)$ and noting that $x=1$ is a fixed point of the replicator dynamics (so $\phi_t(1) = 1 $ and $\phi_t^{-1}(1) = 1$), we see that \[ \frac{\mu_t[1-x,1]}{x^{\Theta}} = \frac{\int_{1-x}^1 \mu_t(dy)}{x^{\Theta}} = \frac{\ds\int_{\phi_t^{-1}(1-x)}^1 w_t(\phi_t(y)) \mu_0(dy)}{x^{\Theta}} \]

We note, for $x \in [0,1]$, that $\phi_t^{-1}(1-x) \geq 1-x \geq \phi_t(1-x)$.
We recall from Equation \ref{eq:wtx} that $w_t(\phi_t(x)) = \exp\left(\int_0^t \left[G(\phi_s(x)) - \langle G(\cdot) \rangle_{\mu^s} \right] ds \right)$, which allows us to deduce the estimate \[ \exp\left( - G^* t \right) \leq w_t(\phi_t(x)) \leq \exp\left(G^*t \right) \: \: \mathrm{where} \: \: G^* = \max_{x \in [0,1]}{G(x)} - \min_{x \in [0,1]}{G(x)} \]
Using the upper bound for $w_t(\phi_t(x))$ and the fact that $1 -x \leq \phi_t^{-1}(1-x)$, we now see that 
\[ \frac{\mu_t[1-x,1]}{x^{\Theta}} \leq e^{G^* t} \frac{\mu_0[\phi_t^{-1}(1-x),1]]}{x^{\Theta}} \leq e^{G^* t} \frac{\mu_0[1-x,1]}{x^{\Theta}}. \] 
Because this is true for $x \in [0,1]$, we further see that 
\[ \ds\limsup_{x \to 0} \frac{\mu_t[1-x,1]}{x^{\Theta}} \leq e^{G^* t} \ds\limsup_{x \to 0} \frac{\mu_0[1-x,1]}{x^{\Theta}} 
\]
If we consider $\Theta < \theta_0$, we know from our assumption that $\mu_0(dx)$ has \holder exponent $\Theta$ near $1$ that \[ \ds\limsup_{x \to 0} \frac{\mu_0[1-x,1]}{x^{\Theta}}  = \ds\lim_{x \to 0} \frac{\mu_0[1-x,1]}{x^{\Theta}} = 0. \]
This allows us to deduce that 
\begin{equation*} \ds\limsup_{x \to 0} \frac{\mu_t[1-x,1]}{x^{\Theta}} \leq 0 \textnormal{ and therefore }  \ds\lim_{x \to 0} \frac{\mu_t[1-x,1]}{x^{\Theta}} =0 \textnormal{ when $\Theta < \theta_0$}.   \end{equation*}
As a result, the smallest $\Theta$ for which $\lim_{x \to 0} \frac{\mu_t[1-x,1]}{x^{\Theta}} > 0$ can be no smaller than $\theta_0$, and we conclude that $\theta_t \geq \theta_0$, the Hölder exponent near $x=1$ is non-decreasing in time. 

Using the corresponding lower bound for $w_t(\phi_t(x))$ and the fact that $x \leq 1 - \phi_t(1-x)$, we have that 
\[ \frac{\mu_t[1-x,1]}{x^{\Theta}} \geq e^{-G^* t} \frac{\mu_0[\phi_t^{-1}(1-x),1]}{x^{\Theta}} \geq  e^{-G^* t} \frac{\mu_0[\phi_t^{-1}(1-x),1]}{\left(1 - \phi_t(1-x) \right)^{\Theta}} \]
Because this inequality holds for all $x \in [0,1]$, we know further that 
\begin{equation} \label{eq:pdphiliminf} \ds\liminf_{x \to 0} \frac{\mu_t[1-x,1]}{x^{\Theta}} \geq e^{-G^* t} \ds\liminf_{x \to 0}  \frac{\mu_0[\phi_t^{-1}(1-x),1]}{\left(1 - \phi_t(1-x) \right)^{\Theta}}  \end{equation}
Now we explore the righthand side of this inequality, noticing that
\begin{align}%
\ds\lim_{x \to 0}  \frac{\mu_0[\phi_t^{-1}(1-x),1]}{\left(1 - \phi_t(1-x) \right)^{\Theta}}  &= \lim_{x \to 0} \left(  \left[\frac{\mu_0[\phi_t^{-1}(1-x),1]}{\left(1 - \phi_t^{-1}(1-x) \right)^{\Theta}} \right] \left[  \frac{\left(1 - \phi_t^{-1}(1-x) \right)^{\Theta}}{\left(1 - \phi_t(1-x) \right)^{\Theta}} \right] \right) \nonumber \\ &= \left[ \ds\lim_{x \to 0} \left(\frac{\mu_0[\phi_t^{-1}(1-x),1]}{\left(1 - \phi_t^{-1}(1-x) \right)^{\Theta}} \right)\right] \left[ \ds\lim_{x \to 0} \left(  \frac{\left(1 - \phi_t^{-1}(1-x) \right)}{\left(1 - \phi_t(1-x) \right)} \right) \right]^{\Theta} \label{eq:pdrightlimit} \end{align}
First we note by the substitution $y = 1 - \phi_t^{-1}(1-x)$ and the continuity of $\phi_t^{-1}(1-x)$ that %
\[ \ds\lim_{x \to 0} \frac{\mu_0[\phi_t^{-1}(1-x),1]}{\left(1 - \phi_t^{-1}(1-x) \right)^{\Theta}} = \ds\lim_{ y \to 0} \frac{\mu_0[1-y,1]}{y^{\Theta}}  \] 
We next see that we can describe the rightmost limit of Equation \ref{eq:pdrightlimit} through comparison principles on $\phi_t(\cdot)$ and $\phi_t^{-1}(\cdot)$ for the relevant game. For the Prisoners' Dilemma, we recall that there exists $k_f$ such that there are faster characteristic curves satisfying $\phi_t^{-1}(x) \leq \Psi_t^{-1}(k_f;x)$ and $\phi_t(x) \geq \Psi_t(k_f;x)$, and therefore we know that \[  \frac{1 - \phi_t^{-1}(1-x) }{1 - \phi_t(1-x) }  \geq   \frac{1 - \Psi_t^{-1}(k_f;1-x) }{1 - \Psi_t(k_f;1-x)}  \]
Using Equations \ref{eq:Psit} and \ref{eq:Psiinvt}, we are able to find the expressions $\Psi_t(k_f,1-x) = \ds\frac{1-x}{1-x + xe^{k_f t}}$ and $\Psi_t^{-1}(k_f,1-x) = \ds\frac{1-x}{1-x + xe^{-k_f t}}$. This allows us to see that 
\[  \frac{1 - \phi_t^{-1}(1-x) }{1 - \phi_t(1-x) } \geq \left(\frac{x e^{k_f t}}{1-x + xe^{k_f t}} \right) \bigg/ \left(\frac{xe^{-k_f t}}{1-x + xe^{-k_f t}} \right) = \frac{e^{2 k_ft} \left(1 - x + x e^{-k_f t}\right)}{ \left(1 - x + xe^{k_f t} \right) }.  \]
Then we compute the limit
\[ \ds\lim_{x \to 0} \frac{1 - \phi_t^{-1}(1-x) }{1 - \phi_t(1-x) }  \geq \ds\lim_{x \to 0} \frac{e^{2 k_ft} \left(1 - x + x e^{-k_f t}\right)}{ \left(1 - x + xe^{k_f t} \right)} = e^{2 k_f t}   \]
This allows us to write Equation \ref{eq:pdrightlimit} as %
\[\ds\lim_{x \to 0}  \frac{\mu_0[\phi_t^{-1}(1-x),1]}{\left(1 - \phi_t(1-x) \right)^{\Theta}} \geq e^{2 k_f t} \left[\ds\lim_{y \to 0} \frac{\mu_0[1-y,1]}{y^{\Theta}} \right]. \]
Considering $\Theta \geq \theta_0$, we know from our assumption on the \holder exponent of $\mu_0(dx)$ that 
\[\ds\liminf_{x \to 0}  \frac{\mu_0[\phi_t^{-1}(1-x),1]}{\left(1 - \phi_t(1-x) \right)^{\Theta}} = \ds\lim_{x \to 0}  \frac{\mu_0[\phi_t^{-1}(1-x),1]}{\left(1 - \phi_t(1-x) \right)^{\Theta}} \geq e^{2 k_f t} \left[\ds\lim_{y \to 0} \frac{\mu_0[1-y,1]}{y^{\Theta}} \right] = \infty. \]
Then we are able to see from Equation \ref{eq:pdphiliminf} that 
\[  \ds\liminf_{x \to 0} \frac{\mu_t[1-x,1]}{x^{\Theta}} \geq \infty \textnormal{ and therefore } \ds\lim_{x \to 0} \frac{\mu_t[1-x,1]}{x^{\Theta}} = \infty \textnormal{ when $\Theta \geq \theta_0$}. \]

As a result, the smallest $\Theta$ for which $\lim_{x \to 0} \frac{\mu_t[1-x,1]}{x^{\Theta}} > 0$ cannot be larger than $\theta_0$, so we deduce that $\theta_t \leq \theta_0$, the \holder exponent near $x=1$ is non-increasing in time for the Prisoners' Dilemma. Combining this with our previous observation that $\theta_t \geq \theta_0$ lets us conclude that $\theta_t = \theta_0$ for all $t \geq 0$, so the Hölder exponent near $x=1$ is preserved in time under the evolution of our multilevel system for PD games.
\end{proof}

\begin{proposition} \label{prop:HDholder} For the multilevel Hawk-Dove game, the \holder exponent near $x=1$ is preserved in time. Given an initial measure $\mu_0(dx)$ with \holder exponent near $x=1$ of $\theta_0 = \theta$, the \holder exponent $\theta_t$ of the measure $\mu_t(dx)$ solving Equation \ref{eq:replicatormeasurepde} satisfies $\theta_t = \theta$.   \end{proposition}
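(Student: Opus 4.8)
The plan is to mirror the proof of Proposition~\ref{prop:PDholder} almost verbatim, replacing the logistic comparison curves $\Psi_t(k;\cdot)$ by the Hawk--Dove comparison curves $\Xi_t(k;\cdot)$ of Equation~\ref{eq:Xit}, which are valid on $\left[\tfrac{\beta}{|\alpha|},1\right]$ --- exactly the neighborhood of the full-cooperator endpoint $x=1$ that is relevant here. As before, I would start from the push-forward representation $\mu_t(dx)=w_t(x)\left(\mu_0\circ\phi_t^{-1}\right)(dx)$ and the fact that $x=1$ is an (unstable) fixed point of the within-group dynamics of Equation~\ref{eq:HDrepwithingroup}, so that $\phi_t(1)=1$, $\phi_t^{-1}(1)=1$, and for every $x$ small enough that $1-x\ge\tfrac{\beta}{|\alpha|}$ one has $\phi_t(1-x)\le 1-x\le\phi_t^{-1}(1-x)$, which is precisely the geometry used in the PD case. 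The uniform bound $e^{-G^*t}\le w_t(\phi_t(x))\le e^{G^*t}$ with $G^*=\max_{[0,1]}G-\min_{[0,1]}G$ carries over unchanged.

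For the inequality $\theta_t\ge\theta_0$ nothing changes: the upper bound on $w_t$ together with $1-x\le\phi_t^{-1}(1-x)$ gives $\mu_t[1-x,1]/x^\Theta\le e^{G^*t}\,\mu_0[1-x,1]/x^\Theta$, which tends to $0$ as $x\to0$ for $\Theta<\theta_0$. For $\theta_t\le\theta_0$ I would again use the lower bound on $w_t$ and the bound $x\le 1-\phi_t(1-x)$ to obtain $\mu_t[1-x,1]/x^\Theta\ge e^{-G^*t}\,\mu_0[\phi_t^{-1}(1-x),1]/\left(1-\phi_t(1-x)\right)^\Theta$, factor the right-hand limit as in Equation~\ref{eq:pdrightlimit}, and, after the substitution $y=1-\phi_t^{-1}(1-x)$, reduce everything to estimating $\lim_{x\to0}\tfrac{1-\phi_t^{-1}(1-x)}{1-\phi_t(1-x)}$.

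The only genuinely game-specific step is this last limit, and here the $\Xi_t$ curves enter. From the orderings in Equations~\ref{eq:HDcharranking} and~\ref{eq:HDcharbackwardranking}, valid since $1-x\in\left[\tfrac{\beta}{|\alpha|},1\right]$ for small $x$, we have $\phi_t(1-x)\ge\Xi_t(1;1-x)$ and $\phi_t^{-1}(1-x)\le\Xi_t^{-1}(1;1-x)$, so that $\tfrac{1-\phi_t^{-1}(1-x)}{1-\phi_t(1-x)}\ge\tfrac{1-\Xi_t^{-1}(1;1-x)}{1-\Xi_t(1;1-x)}$; the explicit formulas in Equations~\ref{eq:Xit} and~\ref{eq:Xiinvt} then give $1-\Xi_t(k;1-x)\sim x\,e^{(|\alpha|-\beta)kt}$ and $1-\Xi_t^{-1}(k;1-x)\sim x\,e^{-(|\alpha|-\beta)kt}$ as $x\to0$, so the limit equals the positive constant $e^{-2(|\alpha|-\beta)t}=e^{-2(T-R)t}>0$. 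Plugging this constant back in, the right-hand limit is $+\infty$ for $\Theta\ge\theta_0$ exactly as in the PD proof, so $\theta_t\le\theta_0$; combining with the first part gives $\theta_t=\theta_0$.

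I expect the only point requiring care --- rather than a real obstacle --- to be verifying that the $\Xi_t(k;\cdot)$ comparison is legitimately in force along the entire forward and backward flow, i.e. that $\phi_s(1-x)$ and $\phi_s^{-1}(1-x)$ remain in $\left[\tfrac{\beta}{|\alpha|},1\right]$ for all $s\in[0,t]$. This follows from the phase line of Equation~\ref{eq:HDrepwithingroup}: the backward flow from $1-x$ increases monotonically toward $x=1$, while the forward flow from $1-x$ decreases but is trapped above the stable interior equilibrium $x^{eq}=\tfrac{\beta}{|\alpha|}$, so both trajectories stay in the required interval whenever $1-x>\tfrac{\beta}{|\alpha|}$.
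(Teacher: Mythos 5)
Your proposal is correct and follows essentially the same route as the paper: restrict to $x$ small enough that $1-x$ lies above the within-group equilibrium $\tfrac{\beta}{|\alpha|}$, carry over the PD argument verbatim for the push-forward representation, the bounds on $w_t$, and the limit factorization, and handle the single game-specific step via the comparison of $\phi_t$, $\phi_t^{-1}$ with $\Xi_t(1;\cdot)$, $\Xi_t^{-1}(1;\cdot)$. One small remark: your value $e^{-2(|\alpha|-\beta)t}$ for $\lim_{x\to 0}\left(1-\Xi_t^{-1}(1;1-x)\right)\big/\left(1-\Xi_t(1;1-x)\right)$ is in fact the correct one (the paper's displayed bound $e^{2(|\alpha|-\beta)kt}$ appears to have the forward and backward expressions interchanged), but since only the strict positivity of this constant is used in concluding $\theta_t\le\theta_0$, the proof is unaffected either way.
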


\begin{proof}
Because the within-group dynamics for the HD have an intermediate equilibrium $x^{eq}$, we restrict our attention to sufficiently small values of $x$ such that $1-x$ is located above the HD equilibrium so that we are still able to apply the inequality $\phi_t^{-1}(1-x) \geq 1-x \geq \phi_t(1-x)$. With this restriction, the proof that $\theta_t \leq \theta_0$ carries over to the HD game because it relies only on properties of a general solution $\phi_t(x_0)$ and $\phi_t^{-1}(x)$ to the replicator dynamics with an unstable fixed point at $1$. Similarly, we can use the argument from Proposition \ref{prop:PDholder} to show that Equations \ref{eq:pdphiliminf} and \ref{eq:pdrightlimit} hold for the HD as well. 
To further analyze Equation \ref{eq:pdrightlimit} for the Hawk-Dove game, we can use the inequalities provided by Equations \ref{eq:HDcharranking} and \ref{eq:HDcharbackwardranking} to see that 
\begin{equation} \label{eq:HDcharcompare} \frac{1 - \phi_t^{-1}(1-x) }{1 - \phi_t(1-x) }  \geq   \frac{1 - \Xi_t^{-1}(1;1-x) }{1 - \Xi_t(1;1-x)}   \end{equation}
Using our expressions from Equations \ref{eq:Xit} and \ref{eq:Xiinvt}, we know that \begin{subequations}\begin{align} \Xi_t\left(k;1-x\right) &=  \frac{\beta x  + \left[|\alpha| \left(1 -x\right) - \beta \right] e^{\left( \beta - |\alpha| \right) kt}}{|\alpha| x + \left[|\alpha| \left(1-x\right)  - \beta \right] e^{\left( \beta - |\alpha| \right) kt}}  \\  \Xi_t^{-1}\left(k;1-x\right) &=  \frac{\beta x  + \left[|\alpha| \left(1 -x\right) - \beta \right] e^{\left(  |\alpha| - \beta \right) kt}}{|\alpha| x + \left[|\alpha| \left(1-x\right)  - \beta \right] e^{-\left( |\alpha| - \beta \right) kt}}   \end{align} \end{subequations}
Plugging these expressions into the inequality in Equation \ref{eq:HDcharcompare}, we see that
\begin{align*} \frac{1 - \phi_t^{-1}(1-x) }{1 - \phi_t(1-x) }  &\geq \left( \frac{\left(|\alpha| - \beta \right)x}{|\alpha| x + \left[|\alpha| \left(1-x\right)  - \beta \right] e^{\left( \beta - |\alpha| \right) kt}} \right) \bigg/ \left( \frac{\left( |\alpha| - \beta\right) x }{|\alpha| x + \left[|\alpha| \left(1-x\right)  - \beta \right] e^{\left( |\alpha| - \beta \right) kt}} \right) \\ &= \frac{|\alpha| x + \left[|\alpha| \left(1-x\right)  - \beta \right] e^{\left( |\alpha| - \beta \right) kt}}{|\alpha| x + \left[|\alpha| \left(1-x\right)  - \beta \right] e^{\left( \beta - |\alpha| \right) kt}} \end{align*}
Then, taking the limit as $x \to 0$ on both sides, we see that 
\[ \ds\lim_{x \to 0} \frac{1 - \phi_t^{-1}(1-x) }{1 - \phi_t(1-x) }  \geq e^{2 \left( |\alpha| - \beta \right) kt} \geq 0 \]
Looking back to Equation \ref{eq:pdrightlimit}, we can now use above inequality and the continuity of $\phi_t^{-1}(1-x)$ to find that 
\[\ds\lim_{x \to 0}  \frac{\mu_0[\phi_t^{-1}(1-x),1]}{\left(1 - \phi_t(1-x) \right)^{\Theta}}\geq e^{\left(2 \left(|\alpha| - \beta\right) - G^*\right) t} \left[\ds\lim_{y \to 0} \frac{\mu_0[1-y,1]}{y^{\Theta}} \right]. \]
From these calculations, we can show as in the proof of Proposition \ref{prop:PDholder} that if $\lim_{y \to 0} \frac{\mu_0[1-y,1]}{y^{\Theta}} > 0$, then so is $\lim_{y \to 0} \frac{\mu_t[1-y,1]}{x^{\Theta}}$. Therefore $\theta_t \leq \theta_0$ for $t > 0$, and then we can further deduce that $\phi_t = \phi_0$, the \holder exponent near $x=1$ is conserved under the evolution of the multilevel dynamics for HD games.
\end{proof}

 \section{General PD Games} \label{sec:GeneralPD}

In this section, we consider PD games with general payoff matrices. In Section \ref{sec:PDlongtime}, we characterize the conditions under which the multilevel PD dynamics converge to a concentration at all-defector groups. In Section \ref{sec:PDsteady}, we find density steady states of the PD dynamics and characterize the most abundant composition of cooperators and average payoff of the population at steady state.

\subsection{PD Long-Time Behavior} \label{sec:PDlongtime}

Here, we use the comparison principles from Section \ref{sec:Comparison} to characterize the conditions under which the within-group dynamics dominate and the population converges to a delta-function at the all-defector equilibrium. Due to a subtlety in the group payoff functions, we first handle Case I-III PDs in Proposition \ref{prop:PD13delta} and then study the Case IV PD in Proposition \ref{prop:PD4delta}. In Proposition \ref{prop:PD13delta}, we show that there is a critical relative level of selection $\lambda^*$ such that the multilevel dynamics for the Case I-III PDs converge to $\delta(x)$ as $t \to \infty$ when $\lambda < \lambda^*$. 

\begin{proposition} \label{prop:PD13delta} Consider Cases I-III of the PD (in which $\gamma > 0$) and supose an initial condition $\mu_0(dx)$ with \holder coefficient $\theta$ near $x=1$. If $\lambda (\gamma + \alpha) < \left(|\beta| - \alpha\right) \theta$, then $\mu_t(dx) \rightharpoonup \delta(x)$.  

\end{proposition}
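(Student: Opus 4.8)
The plan is to show that the first moment $M_1^{\mu_t}=\int_0^1 x\,\mu_t(dx)$ tends to $0$; since each $\mu_t$ is a probability measure on $[0,1]$ and $x\geq 0$, this forces $\mu_t(dx)\rightharpoonup\delta(x)$. Applying the representation formula of Equation \ref{eq:mutimplicitweak} with test function $\psi(x)=x$, writing $I_t(x):=\int_0^t G(\phi_s(x))\,ds$ so that $w_t(\phi_t(x))=\exp\!\big(\lambda I_t(x)-\lambda\!\int_0^t\langle G\rangle_{\mu_s}\,ds\big)$ by Equation \ref{eq:wtx}, and using $\int_0^1\mu_t(dx)=1$ to eliminate the common normalizing factor, I would obtain
\[ M_1^{\mu_t}=\frac{\int_0^1 \phi_t(x)\,e^{\lambda I_t(x)}\,\mu_0(dx)}{\int_0^1 e^{\lambda I_t(x)}\,\mu_0(dx)}. \]
For Cases I--III we have $G(x)=x(\gamma+\alpha x)\geq 0$ on $[0,1]$, since $\gamma+\alpha x$ is affine with positive endpoint values ($\gamma>0$ and $\gamma+\alpha=R-P>0$). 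Hence $I_t(x)\geq 0$, the denominator is at least $1$, and it suffices to prove $N_t:=\int_0^1 \phi_t(x)e^{\lambda I_t(x)}\mu_0(dx)\to 0$. I would split this integral at $x=1-\eta$ and treat the two pieces separately.

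The key estimate is a sharp upper bound for $I_t(x)$ near $x=1$. Rather than comparing with the constant-rate logistic curves of Section \ref{sec:Comparison} termwise, I would change variables along the exact characteristic: with $\phi=\phi_s(x)$ and $d\phi=-\phi(1-\phi)(|\beta|-\alpha\phi)\,ds$ this gives
\[ I_\infty(x):=\int_0^\infty G(\phi_s(x))\,ds=\int_0^x \frac{\gamma+\alpha\phi}{(1-\phi)\,(|\beta|-\alpha\phi)}\,d\phi, \]
which is finite for every $x<1$ because $1-\phi>0$ and $|\beta|-\alpha\phi>0$ on $[0,1)$ for Cases I--III (the latter being affine with positive endpoint values $|\beta|$ and $|\beta|-\alpha=T-R>0$). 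The integrand has a simple pole at $\phi=1$ with residue $\tfrac{\gamma+\alpha}{|\beta|-\alpha}$ and is otherwise bounded on $[0,1]$, so a partial-fraction computation yields
\[ I_\infty(x)=\frac{\gamma+\alpha}{|\beta|-\alpha}\,\log\frac{1}{1-x}+O(1)\qquad(x\to 1), \]
with the $O(1)$ term bounded uniformly on $[0,1)$. Since $G\geq 0$, we have $I_t(x)\leq I_\infty(x)$ for all $t$, and $I_\infty$ is nondecreasing in $x$.

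With this in hand I would finish as follows. For the piece near $x=1$, bound $\int_{1-\eta}^1 \phi_t(x)e^{\lambda I_t(x)}\mu_0(dx)\leq\int_{1-\eta}^1 e^{\lambda I_\infty(x)}\mu_0(dx)\leq C\int_{1-\eta}^1 (1-x)^{-p}\mu_0(dx)$, where $p:=\lambda(\gamma+\alpha)/(|\beta|-\alpha)$; the hypothesis $\lambda(\gamma+\alpha)<(|\beta|-\alpha)\theta$ says precisely that $p<\theta$. Choosing $\Theta\in(p,\theta)$ and using the definition of the Hölder exponent (as in Proposition \ref{prop:PDholder}) to get $\mu_0([1-y,1])\leq y^{\Theta}$ for small $y$, an integration by parts gives $\int_{1-\eta}^1 (1-x)^{-p}\mu_0(dx)\leq\tfrac{\Theta}{\Theta-p}\,\eta^{\Theta-p}$, which is finite and tends to $0$ as $\eta\to 0$. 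For the piece away from $x=1$, for $x\leq 1-\eta$ we have $I_t(x)\leq I_\infty(1-\eta)=:C_\eta<\infty$ and $\phi_t(x)\leq\phi_t(1-\eta)$, and $\phi_t(1-\eta)\downarrow 0$ as $t\to\infty$ since the within-group PD dynamics have no interior fixed point; hence this piece is $\leq e^{\lambda C_\eta}\phi_t(1-\eta)\to 0$. Given $\varepsilon>0$, fix $\eta$ so the near-$1$ piece is below $\varepsilon/2$ uniformly in $t$, then send $t\to\infty$; this gives $N_t\to 0$, hence $M_1^{\mu_t}\to 0$ and $\mu_t(dx)\rightharpoonup\delta(x)$.

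The step I expect to be the main obstacle is the sharpness of the logarithmic estimate for $I_\infty(x)$ near $x=1$: the two-sided comparison with logistic curves of constant rates $|\beta|$ and $|\beta|+|\alpha|$ (Case I) overshoots the exponent $\tfrac{\gamma+\alpha}{|\beta|-\alpha}$ precisely because $|\beta|<T-R$ there. One therefore has to use the exact change of variables $s\mapsto\phi_s(x)$ — which, usefully, does not require the within-group replicator dynamics to be explicitly solvable — so as to see that the weight accumulated by a near-full-cooperator group before it slides to all-defectors is governed exactly by $G(1)=\gamma+\alpha=R-P$ and by the linearized escape rate $|\beta|-\alpha=T-R$ at $x=1$.
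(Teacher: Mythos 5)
Your proof is correct, and it takes a genuinely different route from the paper's. The paper bounds the between-group weight $w_t(\phi_t(x))$ and the pulled-back mass $\mu_0([\phi_t^{-1}(\delta),1])$ \emph{separately}, each as an exponential in $t$, using two-sided comparisons with constant-rate logistic curves $\Psi_t(k;\cdot)$; because those constant rates ($|\beta|$ versus $|\beta|+|\alpha|$ in Case I) do not match sharply, the paper needs a second, refined comparison — restarting the backward flow from a point $\Gamma$ near $1$ so the effective escape rate $|\beta|+|\alpha|\Gamma$ approaches $|\beta|+|\alpha|$ — to cover the whole range $\lambda(\gamma+\alpha)<(|\beta|-\alpha)\theta$, and it treats Cases I and III separately. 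You instead substitute $u=\phi_s(x)$ in the time integral to get the closed-form space integral $I_\infty(x)=\int_0^x\frac{\gamma+\alpha u}{(1-u)(|\beta|-\alpha u)}\,du$, whose simple pole at $u=1$ has residue exactly $\frac{\gamma+\alpha}{|\beta|-\alpha}=\frac{G(1)}{\pi_D(1)-\pi_C(1)}$; this produces the sharp exponent in one step, is uniform in $t$ (since $G\ge 0$ gives $I_t\le I_\infty$), handles Cases I--III simultaneously, and makes the tug-of-war interpretation of the threshold $\lambda^*$ visible in the analysis itself. Reducing to the first moment $M_1^{\mu_t}\to 0$ (legitimate for probability measures on $[0,1]$) and normalizing by $\int e^{\lambda I_t}\,d\mu_0\ge 1$ is also a clean simplification over testing against arbitrary continuous $\psi$. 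The trade-offs are minor: the paper's explicitly time-resolved bounds give a visible exponential rate of decay of $\mu_t([\delta,1])$, whereas your $\varepsilon/2$-splitting (uniform-in-$t$ control near $x=1$ plus $\phi_t(1-\eta)\to 0$ away from it) yields convergence without an explicit rate for the near-$1$ contribution; on the other hand, your change of variables does not require solvable within-group dynamics and so extends at least as readily to the generalized dynamics $x(1-x)F(\pi_C-\pi_D)$ of Equation \ref{eq:replicatorF}. Two small points to tighten: the bound $\int_{1-\eta}^1(1-x)^{-p}\mu_0(dx)\le\frac{\Theta}{\Theta-p}\eta^{\Theta-p}$ should carry the multiplicative constant coming from the $O(1)$ term in $I_\infty$ and from $\mu_0([1-y,1])\le Cy^{\Theta}$ holding only for $y$ below some $y_0$; and one should note that the hypothesis forces $\theta>0$, so $\mu_0(\{1\})=0$ and the singular integrand causes no trouble at $x=1$.
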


Here we use the symbol ``$\rightharpoonup$'' to denote weak convergence of probability measures, so we say that $\mu_n(dx) \rightharpoonup \mu(dx)$ if $\int_0^1 \psi(x) \mu_n(dx) \to \int_0^1 \psi(x) \mu(dx)$ for each admissible test function $\psi(x)$. 

\begin{proof}
We wish to show,for any continuous test function $\psi(x)$, that $\int_0^1 \psi(x) \mu_t(dx) \to \int_0^1 \psi(x) \delta(x) dx = \psi(0)$. Because $\mu_t(dx)$ is a probability distribution, we have that 
\[ \bigg| \ds\int_0^1 \psi(x) \mu_t(dx) - \psi(0) \bigg| = \bigg| \ds\int_0^1   \psi(x) - \psi(0)\mu_t(dx) \bigg| \leq \int_0^1 \big| \psi(x) - \psi(0) \big| \mu_t(dx)  \] Because $\psi(\cdot)$ is continuous, we know that $\forall \epsilon > 0$, $\exists \delta$ such that $|\psi(x) - \psi(0)| < \epsilon$ when $x \in [0,\delta]$. Using this and our pushforward representation $\mu_t(dx) = w_t(x)\left[ \mu_0 \circ \phi_t^{-1} \right](dx)$, we obtain \begin{align*}  \bigg| \ds\int_0^1 \psi(x) \mu_t(dx) - \psi(0) \bigg| &\leq \int_0^{\delta} \big| \psi(x) - \psi(0) \big| \mu_t(dx) + \int_{\delta}^{1} \big| \psi(x) - \psi(0) \big| \mu_t(dx)  \\ & \leq \epsilon + 2 ||\psi||_{\infty} \ds\int_{\phi^{-1}(\delta)}^1 w_t(\phi_t(x)) \mu_0(dx)  \end{align*}
We now recall from Equation \ref{eq:wtx} that 
\[w_t(\phi_t(x)) = \exp \left( \int_0^t \lambda \left[ G\left(\phi_s(x)\right) - \langle G(\cdot) \rangle_{\mu_s} \right] ds\right) =  \exp \left( \int_0^t \lambda \left[ \gamma \phi_s(x) + \alpha \phi_s(x)^2 - \langle G(\cdot) \rangle_{\mu_s} \right] ds  \right) \] Recalling that $G(x) \geq 0$ for $x \in [0,1]$ for Cases I-III of the PD, we know $\langle G(\cdot) \rangle_{\mu_s} \geq 0$, so we have that $\exp \left( - \lambda \int_0^t \langle G(\cdot) \rangle_{\mu_s} ds \right) \leq 1$. 
For the Case I PD (with $\alpha < 0$), we know from Equation \ref{eq:PDcharacteristicsrankingsCaseI} that $\Psi_s(|\beta|+|\alpha|;x) \leq \phi_s(x)$ and $\phi_s(x) \leq \Psi_s(|\beta|;x)$. Denoting $k_s = |\beta|$ and $k_f = |\beta| + |\alpha|$, we use these inequalities and the expressions for $\int_0^t \Psi_s(k;x)ds$ and $\int_0^t \Psi_s(k;x)^2 ds$ from Equations \ref{eq:psiintegral} and \ref{eq:psisquaredintegral} to see that there is a constant $M_1(\delta)$ such that for $t \geq 0$ and $x \in [\delta,1]$,   
\begin{align*} w_t(\phi_t(x)) &= \exp\left(\lambda \gamma \int_0^t \phi_s(x) ds + \lambda \alpha \int_0^t \phi_s(x)^2 ds - \lambda  \int_0^t \langle G(\cdot) \rangle_{\mu_s} ds \right)  \\ &\leq \exp\left( \lambda \gamma \int_0^t  \Psi_s(k_s;x) ds + \lambda \alpha \int_0^t \Psi_s(k_f;x) ^2 ds   \right) \\ &= e^{\lambda \left(\gamma + \alpha\right) t} \frac{\left(x + (1-x) e^{-k_s t}\right)^{\frac{\lambda \gamma}{k_s}}} {\left(x + (1-x) e^{-k_f t}\right)^{\frac{|\alpha| \gamma}{k_f}}} \exp\left(\frac{x}{k_f} \left(\frac{(1-x)e^{-k_ft} + (x-1)}{x + (1-x)e^{-k_f t}} \right) \right) \\ &\leq M_1(\delta) e^{\lambda\left(\gamma + \alpha \right)t} \end{align*}
For the Case III PD, we know from Equation \ref{eq:PDcharacteristicsrankingsCaseIII} that $\phi_s(x) \leq \Psi_s(|\beta| - \alpha;x)$. Writing $k_s = |\beta| - \alpha$, we use this inequality and the integrals along the simplified characteristic curve $\Psi_s(k;x)$ to see that there is a constant $M_2(\delta)$ such that for $t \geq 0$ and $x \in [\delta,1]$,  
\begin{align*} w_t(\phi_t(x)) &= \exp\left(\lambda \gamma \int_0^t \phi_s(x) ds + \lambda \alpha \int_0^t \phi_s(x)^2 ds \right)  \\ &\leq \exp\left( \lambda \gamma \int_0^t  \Psi_s(k_s;x) ds + \lambda \alpha \int_0^t \Psi_s(k_s;x) ^2 ds   \right) \\ &= e^{\lambda \left(\gamma + \alpha\right) t} \left(x + (1-x) e^{-k_s t} \right)^{\frac{\lambda (\gamma + \alpha)}{k_s}} \exp\left(\frac{x}{k_s} \left(\frac{(1-x)e^{-k_st} + (x-1)}{x + (1-x)e^{-k_s t}} \right) \right) \\ & \leq M_2(\delta) e^{\lambda (\gamma + \alpha)t} \end{align*}
Because $\alpha = 0$ for the Case II PD, a similar estimate follows from either of the other. For $M(\delta) = \max\left(M_1(\delta),M_2(\delta)\right)$, we can put this together to say that  $w_t(\phi_t(x)) \leq M e^{\lambda \left( \gamma + \alpha\right) t}$. This now tells us that 
\begin{align*} \bigg| \ds\int_0^1 \psi(x) \mu_t(dx) - \psi(0) \bigg| &\leq \epsilon +  2 ||\psi||_{\infty} M(\delta) e^{\lambda \left(\gamma + \alpha\right) t} \int_{\psi^{-1}(x)}^1 \mu_0(dx) \\ &= \epsilon +  2 ||\psi||_{\infty} M(\delta) e^{\lambda \left(\gamma + \alpha\right) t} \mu_0\left([\phi^{-1}(x),1] \right)  \end{align*}

Now we show how to complete the proof using comparison results that hold for the Case I PD, and then subsequently address Case III. From Equation \ref{eq:PDcharacteristicsbackwardrankingsCaseI}, we know that that within-group trajectories for the Case I PD satisfty $\Psi^{-1}_t(|\beta|;x) \leq \phi_t^{-1}(x)$, which allows us to deduce that \begin{align*}  \mu_0\left([\phi_t^{-1}(x),1]\right) &\leq \mu_0\left([\Psi_t^{-1}(|\beta|;x),1]\right) = \mu_0\left(\left[\frac{x}{x + (1-x) e^{-|\beta| t}} , 1 \right]\right) = \mu_0\left(\left[1 - \frac{(1-x) e^{-|\beta| t}}{x + (1-x) e^{-|\beta| t}},1\right]\right) \end{align*}
Then we see, for $x \in [\delta,1]$ that $\exists D > 0$ (namely $\frac{1}{\delta} - 1$) and sufficiently large $t$ such that we can use the fact that $\frac{1-x}{x} \leq \frac{1-\delta}{\delta} = D$ and our assumption about the Hölder exponent of $\mu_0(dx)$ near $x=1$ to deduce that  \[\mu_0\left([\phi_t^{-1}(x),1]\right) \leq \mu_0\left(\left[1 - \left(\frac{1-x}{x} \right) e^{- |\beta| t},1 \right]\right) \leq \mu_0\left([1 - De^{-|\beta| t},1]\right) \approx C D^{\theta} e^{- |\beta| \theta t} \] 

If $\lambda (\gamma + \alpha) < (|\beta| + \alpha) \theta$, then either $\lambda (\gamma + \alpha) < |\beta| \theta < (|\beta| + \alpha) \theta$ or $ |\beta| \theta < \lambda (\gamma + \alpha)  < (|\beta| + \alpha) \theta$. In the former case, we can use the above estimates and the fact that $\lambda(\gamma + \alpha) < |\beta| \theta$ to see that 
\begin{align*} \bigg| \ds\int_0^1 \psi(x) \mu_t(dx) - \psi(0) \bigg| &\leq \epsilon + 2 ||\psi||_{\infty} C M D^{\theta} e^{\left[\lambda \left(\gamma + \alpha\right) - |\beta| \theta \right] t} < 2 \epsilon \: \: \mathrm{as} \: \: t \to \infty \end{align*} 
and we can deduce that $\mu_t(dx) \rightharpoonup \delta(x)$ as $t \to \infty$ when $\lambda \left( \gamma + \alpha \right) < |\beta| \theta$ for the Case I PD. 

When $|\beta| \theta < \lambda \left(\gamma + \alpha\right) < \left( |\beta| + \alpha \right) \theta$, we must take a more refined choice of slower characteristic curve $\Psi_t(|\beta| + \alpha k;x)$ to compare with $\phi_t(x)$. 
Because $ \lambda \left(\gamma + \alpha\right) < \left( |\beta| + \alpha \right) \theta$, for each choice of $\lambda$,  there exists $k \in [0,1)$ such that $\lambda (\gamma + \alpha) = (|\beta| + \alpha k) \theta$. 
If $\lambda (\gamma +\alpha) = \left(|\beta| + \alpha k\right) \theta$ for $k < 1$, then we know that $\lambda(\gamma + \alpha) < \left( |\beta| + \alpha \Gamma \right) \theta$ for $\Gamma := \frac{k+1}{2} \in (k,1)$. Now we look to make use of a comparison principle to study the solutions of $\phi_t(x)$ using solutions of $\Psi_t(|\beta| + |\alpha| \Gamma;x)$.

Solving backwards from $\delta$, we know for any $\Gamma \in [0,1)$ that $\exists \tGamma$ such that $\forall t \geq T$, $\phi_t^{-1}(\delta) \geq \Gamma$. For $x \geq \Gamma$, we further see that 
$|\beta| + |\alpha| x \geq |\beta| + |\alpha| \Gamma$ and therefore we see, from looking at solutions backward in time, that \[ x(1-x)\left(|\beta| + |\alpha| \Gamma\right) \leq  x (1-x) \left( |\beta| + |\alpha| x \right) \: \: \mathrm{and} \: \:  \dsddt{} \Psi_t^{-1}\left(|\beta| + |\alpha| \Gamma ; x\right) \leq  \dsddt{} \phi_t^{-1}(x)  \]
Therefore, for $t \geq \tGamma$ and $x \geq \delta$, characteristic curves $\phi_t^{-1}(x)$ travel faster than solutions to $\Psi_t^{-1}\left(|\beta| + |\alpha| \Gamma ; x\right)$ and we have that $\Psi_t^{-1}\left(|\beta| + |\alpha| \Gamma ; x\right) \leq \phi_t^{-1}(x)$. %

In fact, tracking the solution to $\dsddt{} \Psi_t(|\beta| + |\alpha| \Gamma;\cdot)$ which starts at $\Gamma$ at time $T_{\delta}$ (coinciding with the solution of $\phi^{-1}_t(\delta)$ at this point), we see the subsequent trajectory will be given at times $t > T$ by \begin{align*} \Psi_{t-T}^{-1}\left( |\beta| + |\alpha| \Gamma ; \Gamma \right) = \frac{\Gamma}{\Gamma + (1-\Gamma) e^{-\left( |\beta| + |\alpha| \Gamma\right)(t-T)}} &= 1 - \frac{(1-\Gamma) e^{-\left( |\beta| + |\alpha| \Gamma\right)(t-T)}}{\Gamma + (1-\Gamma) e^{-\left( |\beta| + |\alpha| \Gamma\right)(t-T)}} \\ & \geq 1 - \left( \frac{1-\Gamma}{\Gamma} \right) e^{\left(|\beta| + |\alpha| \Gamma \right) T} e^{-\left(|\beta| + |\alpha| \Gamma \right) t}   \end{align*}
Then recalling that $\phi_t(x) \geq  \Psi_{t-T}^{-1}\left( |\beta| + |\alpha| \Gamma ;\Gamma \right)$ for $x \in [\Gamma,1]$, we have that 
\[ \mu_0\left([\phi_t^{-1}(\delta),1]\right) \leq \mu_0\left(\left[\Psi_{t-T}^{-1}\left( |\beta| + |\alpha| \Gamma ;\Gamma \right),1 \right]\right) \leq \mu_0 \left(\left[ 1 - \left( \frac{1-\Gamma}{\Gamma} \right) e^{\left(|\beta| + |\alpha| \Gamma \right) T} e^{-\left(|\beta| + |\alpha| \Gamma \right) t} ,1 \right]\right) \] Denoting $K = \left( \frac{1-\Gamma}{\Gamma} \right) e^{\left(|\beta| + |\alpha| \Gamma \right) T}$ and using our assumption about the Hölder exponent near $x=1$ for $\mu_0(dx)$, we can say, for sufficiently large $t$, that \[\mu_0\left([\phi_t^{-1}(x),1]\right) \leq C \left[\left( \frac{1-\Gamma}{\Gamma} \right) e^{\left(|\beta| + |\alpha| \Gamma \right) T} \right]^{\theta} e^{-\left(|\beta| + |\alpha| \Gamma \right) \theta t} \approx C  K^{\theta} e^{-\left(|\beta| + |\alpha| \Gamma \right) \theta t} \]
Combining this with our estimates from above and using that $\lambda \left(\gamma + \alpha\right) < \left(|\beta| + |\alpha| k \right) \theta <  \left(|\beta| + |\alpha| \Gamma \right) \theta$ for our given $\lambda$, we can deduce that 
\begin{align*} \bigg| \ds\int_0^1 \psi(x) \mu_t(dx) - \psi(0) \bigg| &\leq \epsilon + 2 ||\psi||_{\infty} C M  K^{\theta} e^{\left[\lambda \left(\gamma + \alpha\right) - \left(|\beta| + |\alpha| \Gamma \right) \theta \right] t} < 2 \epsilon \: \: \mathrm{as} \: \: t \to \infty \end{align*} 
and therefore $\mu_t(dx) \rightharpoonup \delta(x)$ as $t \to \infty$ when $\lambda \left(\gamma + \alpha\right) < \left(|\beta| + |\alpha| k \right) \theta$. Because such a $k$ can be found whenever $\lambda \left(\gamma + \alpha\right) < \left(|\beta| + |\alpha| \right) \theta$, we now know that $\mu_t(dx) \rightharpoonup \delta(x)$ as $t \to \infty$ whenever this inequality is satisfied for the Case I PD.

Now we turn our attention to Case III. Because $\alpha > 0$ for Case III, so our hypothesis on $\lambda$ can be written as $\lambda \left( \gamma + \alpha \right) < \left( |\beta| - \alpha \right) \theta$. We know from Equation \ref{eq:PDcharacteristicsbackwardrankingsCaseIII} that the trajectories for this case satisfy $ \Psi_t^{-1}(|\beta|-\alpha;x) \leq \phi_t^{-1})(x)$, which allows us to deduce that 
\begin{align*} \mu_0\left( [\phi_t^{-1}(x),1] \right) \leq \mu_0\left([\Psi_t^{-1}\left(|\beta|-\alpha ; x \right) ,1] \right) &= \mu_0\left(\left[ \frac{x}{x + \left(1 - x\right) e^{\left(|\beta| - \alpha \right) t}}, 1 \right] \right) \\ &=  \mu_0\left(\left[ 1 - \frac{\left(1 - x\right) e^{\left(|\beta| - \alpha \right) t}}{x + \left(1 - x\right) e^{\left(|\beta| - \alpha\right) t}}, 1 \right] \right)  \end{align*}
In an analogous manner to Case I, we know that for $D = \frac{1}{\delta} - 1$, that for $x \in [\delta,1]$ and sufficiently large $t$, we can use the fact that $\frac{1-x}{x} \leq \frac{1-\delta}{\delta} = D$ and our assumption about the \holder exponent of the initial distribution $\mu_0(dx)$ to see that
\[  \mu_0\left( [\phi_t^{-1}(x),1] \right) \leq \mu_0\left( \left[1 - \left( \frac{1-x}{x} \right) e^{-{|\beta| - \alpha} t}, 1\right] \right) \leq \mu_0\left(\left[ 1 - D e^{-{|\beta| - \alpha} t}, 1\right] \right) \approx C D^{\theta}  e^{-{|\beta| - \alpha} \theta t}\] 
Using the above estimate that $w_t(\phi_t(x)) \leq M(\delta) e^{\lambda \left(\gamma + \alpha\right) t}$, we further see, when $\lambda \left( \gamma + \alpha \right) < \left( |\beta| - \alpha \right) \theta$, that
\[ \bigg| \int_0^1 \psi(x) \mu_t(dx) - \psi(0) \bigg| \leq \epsilon + 2 ||\psi||_{\infty} C M(\delta) D^{\theta} e^{\left[\lambda \left( \gamma + \alpha \right) - \left(|\beta| - \alpha\right) \theta\right] t} < \epsilon \: \: \mathrm{as} \: \: t \to \infty, \]
so we can conclude that $\mu_t(dx) \rightharpoonup \delta(x)$ at $t \to \infty$ for the Case III PD when $\lambda \left( \gamma + \alpha \right) < \left( |\beta| - \alpha \right) \theta$.

\end{proof}

\begin{proposition} \label{prop:PD4delta} Consider Case IV of the PD ($\gamma < 0$ and $\alpha > 0$) and an initial distribution $\mu_0(dx)$ with \holder exponent $\theta$ near $x=1$. If $\lambda (\gamma + \alpha) + \frac{\gamma^2}{4 \alpha} < (|\beta| + \alpha) \theta$, then $\mu_t(dx) \rightharpoonup \delta(x)$. 
\end{proposition}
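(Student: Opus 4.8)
The plan is to run the argument of Proposition~\ref{prop:PD13delta} for the Case III PD almost verbatim, the one genuinely new feature being that the simplified group payoff $G(x)=\gamma x+\alpha x^2$ is no longer nonnegative on $[0,1]$ once $\gamma<0$. As before, I fix a continuous test function $\psi$ and $\epsilon>0$, pick $\delta>0$ with $|\psi(x)-\psi(0)|<\epsilon$ on $[0,\delta]$, and use the push-forward representation $\mu_t(dx)=w_t(x)[\mu_0\circ\phi_t^{-1}](dx)$ to reduce the claim $\mu_t\rightharpoonup\delta(x)$ to the single estimate
\[
\Big|\textstyle\int_0^1\psi\,\mu_t-\psi(0)\Big|\le \epsilon+2\|\psi\|_\infty\int_{\phi_t^{-1}(\delta)}^1 w_t(\phi_t(x))\,\mu_0(dx),
\]
so that it suffices to show the remaining integral tends to $0$ under the stated hypothesis.

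The first step is an upper bound on the weight $w_t(\phi_t(x))=\exp\!\big(\lambda\int_0^t[G(\phi_s(x))-\langle G(\cdot)\rangle_{\mu_s}]\,ds\big)$ from Equation~\ref{eq:wtx}. In Cases I--III the averaging term $-\lambda\int_0^t\langle G\rangle_{\mu_s}\,ds$ was discarded using $G\ge0$; here it cannot be. The key new observation is that, completing the square in the upward parabola (note $\alpha>0$ in Case IV), $G(y)\ge-\gamma^2/(4\alpha)$ for every $y$, so $\langle G\rangle_{\mu_s}\ge-\gamma^2/(4\alpha)$ for any normalized measure, and this is exactly the $\gamma^2/(4\alpha)$ correction that appears in the hypothesis. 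For the remaining $\lambda\gamma\int_0^t\phi_s\,ds+\lambda\alpha\int_0^t\phi_s^2\,ds$ I use the Case III/IV forward comparison~\ref{eq:PDcharacteristicsrankingsCaseIII}: since $\gamma<0$ I bound $\int_0^t\phi_s\,ds$ below by $\int_0^t\Psi_s(|\beta|;x)\,ds$ and since $\alpha>0$ I bound $\int_0^t\phi_s^2\,ds$ above by $\int_0^t\Psi_s(|\beta|-\alpha;x)^2\,ds$, then evaluate using the explicit formulas~\ref{eq:psiintegral}--\ref{eq:psisquaredintegral}. The linear-in-$t$ parts supply the growth rate $\lambda(\gamma+\alpha)$ while the logarithmic and rational pieces stay bounded uniformly on $[\delta,1]$, so combined with the averaging correction I obtain $w_t(\phi_t(x))\le M(\delta)\exp\!\big([\lambda(\gamma+\alpha)+\tfrac{\gamma^2}{4\alpha}]\,t\big)$ uniformly for $x\in[\delta,1]$.

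The second step controls the surviving mass near $x=1$. By Proposition~\ref{prop:PDholder} the Hölder exponent $\theta$ of $\mu_0$ near $1$ is preserved, and the backward comparison~\ref{eq:PDcharacteristicsbackwardrankingsCaseIII} bounds $\phi_t^{-1}(\delta)$ below by a solvable logistic endpoint, so that $1-\phi_t^{-1}(\delta)$ decays exponentially as backward characteristics accumulate at the repelling fixed point $x=1$; feeding this into the defining limit of the Hölder exponent (Definition~\ref{def:Holderx1}) gives $\mu_0([\phi_t^{-1}(\delta),1])\le C\,e^{-(|\beta|+\alpha)\theta t}$ for large $t$. Multiplying the two estimates,
\[
\int_{\phi_t^{-1}(\delta)}^1 w_t(\phi_t(x))\,\mu_0(dx)\ \le\ M(\delta)\,C\,\exp\!\Big(\big[\lambda(\gamma+\alpha)+\tfrac{\gamma^2}{4\alpha}-(|\beta|+\alpha)\theta\big]\,t\Big),
\]
which vanishes as $t\to\infty$ precisely when $\lambda(\gamma+\alpha)+\tfrac{\gamma^2}{4\alpha}<(|\beta|+\alpha)\theta$; this forces the master estimate to be at most $2\epsilon$ in the limit, and since $\epsilon$ is arbitrary we conclude $\mu_t\rightharpoonup\delta(x)$.

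I expect the main obstacle to be the honest bookkeeping of the three competing exponents. The nonnegativity of $G$ that trivialized the averaging term in Cases I--III fails here, and isolating exactly the additive $\gamma^2/(4\alpha)$ term (rather than a cruder bound) requires using the sharp minimum of $G$ and verifying it is compatible with the forward comparison, so that $M(\delta)$ remains finite and $t$-independent. Equally delicate is matching the backward decay to the $(|\beta|+\alpha)\theta$ rate: one must track the expansion of $\phi_t^{-1}$ near the unstable endpoint against the correct logistic curve, and a refined comparison (analogous to the $\Gamma\in(k,1)$ device used in the Case I part of Proposition~\ref{prop:PD13delta}) is likely needed to sharpen the exponent rather than settle for the slower of the two available rates. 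Getting these constants right simultaneously, so that the forward and backward estimates combine into exactly the stated threshold, is where the care lies.
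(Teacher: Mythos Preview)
Your approach is exactly the paper's: the proof there consists of a brief remark observing that the Case III argument of Proposition~\ref{prop:PD13delta} carries over verbatim once the step $\langle G\rangle_{\mu_s}\ge 0$ is replaced by the sharp lower bound $\langle G\rangle_{\mu_s}\ge -\gamma^2/(4\alpha)$, which is precisely the modification you isolate.

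One point to correct in your backward estimate. Equation~\ref{eq:PDcharacteristicsbackwardrankingsCaseIII} gives $\phi_t^{-1}(\delta)\ge\Psi_t^{-1}(|\beta|-\alpha;\delta)$, hence $\mu_0\big([\phi_t^{-1}(\delta),1]\big)\le C\,e^{-(|\beta|-\alpha)\theta t}$, not $e^{-(|\beta|+\alpha)\theta t}$. Since $|\beta|-\alpha$ is already the linearization rate of the within-group replicator at the unstable fixed point $x=1$, no $\Gamma$-style refinement can sharpen it: in Case I that device was needed because the crude comparison delivered only $|\beta|$ while the true rate at $x=1$ was $|\beta|+|\alpha|=|\beta|-\alpha$; here the crude comparison already gives the sharp rate. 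This is exactly what the Case III computation in Proposition~\ref{prop:PD13delta} produces and it matches the steady-state threshold $\lambda^*=\tfrac{|\beta|-\alpha}{\gamma+\alpha}\theta$ of Equation~\ref{eq:pdlambdastar}, so the $(|\beta|+\alpha)$ appearing in the stated hypothesis looks like a slip for $(|\beta|-\alpha)$. (By the same token, the averaging correction in the exponent should carry the factor $\lambda$, i.e.\ $\lambda\gamma^2/(4\alpha)$ rather than $\gamma^2/(4\alpha)$.)
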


\begin{remark} The distinction with case IV is that $G(x)$ is not minimized at $x=0$, but rather has an interior minimizer at $x_{min} = -\frac{\gamma}{2 \alpha}$ with value $G(x_{\min}) =  \frac{\gamma^2}{4 \alpha}$. Unlike Cases I-III, we don't know that $\int_0^t \langle G \rangle_{\mu_s} ds \leq 0$, and can only say that $\int_0^t \langle G \rangle_{\mu_s} ds \geq - \frac{\gamma^2}{4 \alpha} t$. This allows us to say that
\[ \bigg| \ds\int_0^1 \psi(x) \mu_t(dx) - \psi(0) \bigg| \leq \epsilon +  2 ||\psi||_{\infty} M e^{\left[\lambda \left(\gamma + \alpha\right) + \frac{\gamma^2}{4 \alpha} \right] t} \int_{\psi^{-1}(x)}^1 \mu_0(dx) \]
and then we require $\lambda (\gamma + \alpha) + \frac{\gamma^2}{4 \alpha} < (|\beta| + \alpha) \theta$ to show that the exponential term converges to $0$ in the long-time limit. However, we also show in Section \ref{sec:PDsteady} that Case IV PDs have density steady states precisely when $\lambda (\gamma + \alpha)  > (|\beta| + \alpha) \theta$, so we conjecture that a more refined argument can show that $\mu_t(dx) \rightharpoonup \delta(x)$ when $\lambda (\gamma + \alpha)  < (|\beta| + \alpha) \theta <  \lambda (\gamma + \alpha) + \frac{\gamma^2}{4 \alpha} $.
\end{remark}

\subsection{Steady States of General Multilevel PD} \label{sec:PDsteady}

In this section, we explore the steady state solutions for the multilevel PD dynamics. 
Steady state solutions $f(x)$ of Equation \ref{eq:replicatorpdeparam} satisfy 
\[ \dsdel{}{x}\left[ x (1-x) \left( \beta + \alpha x \right) f(x) \right] = \lambda f(x) \left[ \gamma x + \alpha x^2 - \left( \gamma M_1^f + \alpha M_2^f \right) \right]  \]
Because $\beta < 0$ for the Prisoners' Dilemma, we hereafter write $\beta$ as $-|\beta|$. We can solve the steady state ODE by separating variables and a partial fraction expansion, yielding steady state densities of the form 
 \[ f(x) = Z_f^{-1} x^{\frac{\lambda}{|\beta|} \left( \gamma M_1^f + \alpha M_2^f  \right) - 1} \left(1-x\right)^{\left(\frac{\lambda}{ |\beta| - |\alpha|}\right) \left(\gamma + \alpha - \left( \gamma M_1^f + \alpha M_2^f \right) \right) - 1} \left( |\beta| - \alpha x\right)^{\frac{\lambda \alpha}{|\beta| \left( |\beta| - \alpha\right)} \left( \left( \gamma M_1^f + \alpha M_2^f \right) - \alpha \left(\gamma + |\beta| \right) \right)-1} \]
From Proposition \ref{prop:PDholder}, we know that the \holder exponent of $\mu_t(dx)$ near $x=1$ is preserved in time, so it sensible to describe our steady state densities in terms of their \holder exponents as well. %
Using Definition \ref{def:Holderx1} for $\mu(dx) := f(x) dx$, we can compute the \holder exponent of the steady state densities. Denoting the exponents of $x$ and $|\beta| - \alpha x$ by $A$ and $B$, respectively, we find that 
\begin{dmath*} \ds\lim_{y \to 0} \frac{\mu\left(\left[1-y,1\right] \right)}{y^{\Theta}} =  \ds\lim_{y \to 0} \frac{\int_{1-y}^1 f(z) dz}{y^{\theta}} = \ds\lim_{y \to 0} \frac{f(1-y)}{\Theta y^{\Theta - 1}} \\ 
= \ds\lim_{y \to 0} \left[\Theta^{-1} Z_f^{-1} \left(1 - y \right)^A \left( |\beta| - \alpha + \alpha y \right)^B y^{\left(\frac{\lambda}{|\beta|-\alpha}\right) \left(\gamma + \alpha - \left( \gamma M_1^f + \alpha M_2^f \right) \right) - \Theta}  \right] \\ =   \left\{
     \begin{array}{cr}
    0 & :  \Theta < \lambda \left( |\beta| - \alpha \right)^{-1}  \left(\gamma + \alpha - \left( \gamma M_1^f + \alpha M_2^f \right) \right) \\
      Z_f^{-1}  \left( |\beta| - \alpha\right)^B & : \Theta =  \lambda \left( |\beta| - \alpha \right)^{-1}  \left(\gamma + \alpha - \left( \gamma M_1^f + \alpha M_2^f \right) \right)  \\ \infty &: \Theta >  \lambda \left(  |\beta| - \alpha \right)^{-1}  \left(\gamma + \alpha - \left( \gamma M_1^f + \alpha M_2^f \right) \right) 
     \end{array}
   \right. \end{dmath*}
Therefore, for a given average payoff $\langle G(\cdot) \rangle_{f} = \gamma M_1^f + \alpha M_2^f$, the characterization from Definition \ref{def:Holderx1} tell us that the Hölder exponent near the endpoint $x=1$ of our steady-state density $f(x)$ is equal to
 \begin{equation} \label{eq:steadyHolder} \theta = \lambda \left(|\beta| - \alpha\right)^{-1} \left( \gamma +  \alpha - \left(\gamma M_1^f + \alpha M_2^f  \right) \right) \end{equation} 
We can also rearrange this to express the average payoff in the whole population ($\int_0^1 G(y) dy = \gamma M_1^f + \alpha M_2^f$) in terms of the Hölder exponent $\theta$, for given $\lambda$ and $\theta$, as follows
\begin{equation} \label{eq:averagepopulationpayoff}\int_0^1 G(y) f(y) dy =  \gamma M_1^f + \alpha M_2^f =  \left( \gamma + \alpha \right) -\frac{ \left( |\beta| - \alpha \right)}{\lambda} \theta \end{equation} %
Rewriting our steady states in terms of $\theta$, we have \begin{equation} \label{eq:pdsteadybetatheta} f^{\lambda}_{\theta}(x) = Z_f^{-1}  x^{|\beta|^{-1} \left(\lambda \left(\gamma + \alpha \right) - \left( |\beta| - \alpha \right) \theta \right) - 1} \left( 1 - x \right)^{\theta - 1} \left( |\beta| - \alpha x \right)^{- \frac{\lambda}{|\beta|} \left( \gamma + |\beta| + \alpha \right) - \frac{\alpha}{|\beta|}\theta -1}  \end{equation}

For $\theta > 0$, we see that there is a threshold level of between-group selection strength $\lambda^*$ such that the steady states $f_{\theta}(x)$ are integrable if and only if
\begin{equation} \label{eq:pdlambdastar} \lambda > \lambda^* := \left( \frac{|\beta| - \alpha}{\gamma + \alpha} \right) \theta \end{equation}
and that it is not possible to have an integrable steady state for any positive $\lambda$ when $\gamma \to -\alpha$.

We can further understand the existence of steady state densities by recharacterizing the threshold selection strength needed for integrability of our steady states $\lambda^* := %
\left(\frac{T-R}{R - P}\right) \theta$, and make the following observations 

\begin{itemize}
\item  $\lambda^*$ is increasing in $T -R$:  increasing the incentive to defect against a cooperator increases the relative strength of between-group selection need to sustain cooperation.
\item $\lambda^*$ is decreasing in $R - P$: increasing the relative advantage of mutual cooperation over mutual defection decreases the relative strength of between-group selection needed to sustain cooperation. 
\item $\lambda^* \to \infty$ as $R - P \to 0$: as mutual cooperation loses its advantage over mutual defection, then between-group selection cannot maintain cooperation at any selection strength.
\end{itemize}

 In terms of our expressions for $G(x)$ and $\pi_C(x) - \pi_D(x)$ from Equations \ref{eq:grouppayoffparamsimplified} and \ref{eq:picminuspid}, we can rewrite our threshold selection strength as 
\begin{equation} \label{eq:pdlambdapayoff} \lambda^* = \left( \frac{\pi_D(1) - \pi_C(1)}{G(1) - G(0)} \right) \theta , \end{equation}
where we have included in our expression $G(0) = 0$ because the expression would have held even if we had not shifted $G(x)$ by $P$, and because the difference between the group payoff at the unstable and stable equilibria for the within-group dynamics also shows up in an analogous way for the threshold in the multilevel HD dynamics studied in Section \ref{sec:HDsteady}.
This form of the threshold $\lambda^*$ highlights the struggle of trying to achieve cooperation by multilevel selection as a tug-of-war between the individual-level advantage of a defector outperforming cooperators in a many-cooperator group and the group-level advantage of many-cooperator groups achieving higher average payoff than the all-defector groups that arise from the effects of individual-level selection alone.
We can also rewrite the average payoff at steady state from Equation \ref{eq:averagepopulationpayoff} using the expressions from Equations  \ref{eq:grouppayoffparamsimplified} and \ref{eq:picminuspid} to see that 
\[ \langle G(\cdot) \rangle_{f^{\lambda}_{\theta}} = G(1) - \frac{\left(\pi_D(1) - \pi_C(1)\right) \theta}{\lambda}   \]
Using the expression for $\lambda^*$ from Equation \ref{eq:pdlambdapayoff}, we can further find that
\begin{equation} \label{eq:PDsteadyfitnesspayoff} \langle G(\cdot) \rangle_{f^{\lambda}_{\theta}} = G(1) - \left(\frac{\lambda^*}{\lambda}\right) \left(G(1) - G(0) \right)  \end{equation}
where we see that $\langle G(\cdot) \rangle_{f^{\lambda}_{\theta}} \to G(0) = 0$ as $\lambda \to \lambda^*$ and that $\langle G(\cdot) \rangle_{f^{\lambda}_{\theta}} \to G(1)$ as $\lambda \to \infty$. In particular, we see from Equation \ref{eq:PDsteadyfitnesspayoff} that the average payoff of the population at steady state is always limited by the payoff of the full-cooperator group, so a PD with intermediate average payoff optimum will always see suboptimal levels of cooperation, even in the limit of arbitrarily strong between-group population.

So far, we have shown in Section \ref{sec:PDlongtime} that a population with initial \holder exponent $\theta$ near $x=1$ will converge to a delta-concentration at the full-defector group when $\lambda (\gamma + \alpha) < \left(\beta + \alpha\right) \theta$. Under the alternate condition  $\lambda (\gamma + \alpha) > \left(\beta + \alpha\right) \theta$, we have shown in this section that there is a unique and integrable steady state for each \holder exponent $\theta$, and we showed in Section \ref{sec:Holderpreserve} that the \holder exponent near $x=1$ is preserved in time for solutions of Equation \ref{eq:replicatormeasurepde} for the PD. It is then natural to suspect that populations should converge to the steady state with the same \holder exponent near $x=1$ as the initial distribution, so we make the following conjecture about the long-time behavior of solutions to Equation \ref{eq:replicatormeasurepde}.
\begin{conjecture} \label{con:pddensity} Suppose we have an initial distribution $\mu_0(dx)$ has a \holder exponent of $\theta$ near $x=1$. If $\lambda \left( \gamma + \alpha \right) > (\beta + \alpha) \theta$, then \[\mu_t(dx) \rightharpoonup \mu_{\infty}(dx) = x^{|\beta|^{-1} \left(\lambda \left(\gamma + \alpha \right) - \left( |\beta| - \alpha \right) \theta \right) - 1} \left( 1 - x \right)^{\theta - 1} \left( |\beta| - \alpha x \right)^{- \frac{\lambda}{|\beta|} \left( \gamma + |\beta| + \alpha \right) - \frac{\alpha}{|\beta|}\theta -1} dx,\]  where $Z_f^{-1}$ is a normalizing constant such that $\int_0^1 \mu_{\infty}(dx) = 1$. \end{conjecture}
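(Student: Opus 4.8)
The plan is to exploit the explicit ``tilted pushforward'' structure hidden in the representation formula \eqref{eq:mutimplicitweak}. Writing $\Sigma_t(x):=\int_0^t G(\phi_s(x))\,ds$ and plugging $\psi\equiv 1$ into \eqref{eq:mutimplicitweak} with $w_t(\phi_t(x))$ given by \eqref{eq:wtx}, one finds that the single scalar $Q(t):=\int_0^1 e^{\lambda\Sigma_t(x)}\,\mu_0(dx)$ satisfies $Q(t)=\exp\!\big(\lambda\int_0^t\langle G(\cdot)\rangle_{\mu_s}\,ds\big)$, so that
\[ \int_0^1 \psi(x)\,\mu_t(dx)=\frac{1}{Q(t)}\int_0^1 \psi(\phi_t(x))\,e^{\lambda\Sigma_t(x)}\,\mu_0(dx). \]
Thus the whole nonlinear feedback is carried by $Q(t)$, and the proof splits into three parts: (i) compute the exponential growth rate of $Q(t)$; (ii) upgrade this to convergence of $\langle G(\cdot)\rangle_{\mu_t}=\tfrac1\lambda\,\tfrac{d}{dt}\log Q(t)$ to $\langle G(\cdot)\rangle_{f^\lambda_\theta}$ from \eqref{eq:averagepopulationpayoff}; (iii) change variables $z=\phi_t(x)$ in the numerator and identify the limiting weighted pushforward measure with $f^\lambda_\theta(x)\,dx$.

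\textbf{Step (i): growth rate of $Q(t)$.} I would split $Q(t)$ over the three regions $\{1-x\le e^{-k t}\}$, $\{e^{-kt}\le 1-x\le\delta\}$, $\{1-x\ge\delta\}$, where $k:=|\beta|-\alpha=T-R>0$ is the linearization rate of the within-group flow $\dot x=-x(1-x)(|\beta|-\alpha x)$ at the unstable fixed point $x=1$. Combining the comparison principles \eqref{eq:PDcharacteristicsrankingsCaseI}/\eqref{eq:PDcharacteristicsrankingsCaseIII} and the integral formulas \eqref{eq:psiintegral}--\eqref{eq:psisquaredintegral} for crude two-sided bounds, together with a direct local analysis near the hyperbolic fixed point $x=1$ for the sharp leading term, one shows that $\Sigma_t(x)=\tfrac{G(1)}{k}\log\tfrac{1}{1-x}+O(1)$ uniformly on the middle region and $\Sigma_t(x)=G(1)\,t+O(1)$ on the innermost region, while $\Sigma_t(x)$ stays bounded on the outer region (where $\phi_t(x)\to0$ exponentially). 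Feeding these estimates into $Q(t)$ against the Hölder hypothesis $\mu_0([1-y,1])\asymp y^\theta$ (Definition \ref{def:Holderx1}) — worked at the level of the measure via Riemann--Stieltjes integration by parts against $\mu_0([x,1])$, not densities — the innermost and middle regions both contribute at order $e^{(\lambda G(1)-k\theta)t}$ and the outer region at order $1$. Since $\lambda G(1)-k\theta=\lambda(\gamma+\alpha)-(|\beta|-\alpha)\theta>0$ precisely when $\lambda>\lambda^*$ (cf.\ \eqref{eq:pdlambdastar}), this yields $\tfrac1t\log Q(t)\to G(1)-\tfrac{(|\beta|-\alpha)\theta}{\lambda}=\langle G(\cdot)\rangle_{f^\lambda_\theta}$.

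\textbf{Steps (ii)--(iii).} For (ii) I would strengthen the bound in (i) to a two-sided asymptotic $Q(t)=(A+o(1))\,e^{(\lambda G(1)-k\theta)t}$ with an explicit constant $A>0$, obtained by a dominated-convergence argument after the rescaling $v=(1-x)e^{kt}$ in the middle region, and then differentiate; uniform control of the boundary term and lower-order regions gives $\langle G(\cdot)\rangle_{\mu_t}\to\langle G(\cdot)\rangle_{f^\lambda_\theta}$. For (iii), hyperbolicity of the fixed point at $x=1$ (refined by the backward comparison curves \eqref{eq:PDcharacteristicsbackwardrankingsCaseIII}) gives, for each interior $z$, $1-\phi_t^{-1}(z)=g(z)e^{-kt}(1+o(1))$ with $g$ continuous and positive, and $\Sigma_t(\phi_t^{-1}(z))=G(1)\,t+\tfrac{G(1)}{k}\log g(z)+\Lambda(z)+o(1)$ for an explicit $\Lambda(z)$. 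Substituting into $Q(t)^{-1}\int\psi(\phi_t(x))e^{\lambda\Sigma_t(x)}\mu_0(dx)$ and using once more the Hölder behaviour of $\mu_0$ near $1$, the factors $e^{\pm\lambda G(1)t}$ cancel and the weighted pushforward converges weakly to a limiting measure; its total mass is automatically $1$, it is supported on all of $(0,1)$, and one checks it satisfies the steady-state equation for \eqref{eq:replicatorpdeparam} with Hölder exponent $\theta$ near $x=1$. By the uniqueness of such a steady state established in Section \ref{sec:PDsteady}, the limit is $f^\lambda_\theta$, i.e.\ the density in \eqref{eq:pdsteadybetatheta}.

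\textbf{Main obstacle.} The crux is step (ii): the dynamics closes only through the scalar $Q(t)$, and although its exponential \emph{rate} is robust to the crude comparison bounds, showing that $\langle G(\cdot)\rangle_{\mu_t}$ actually \emph{converges} (rather than oscillating about its Cesàro mean) requires the sharp multiplicative constant $A$ and good control of $Q'(t)/Q(t)$ — hence genuinely uniform-in-$t$ asymptotics for $\Sigma_t$ near $x=1$, which the comparison principle alone (with its mismatched rate constants $|\beta|$ versus $|\beta|-\alpha$) does not deliver; this is presumably why the statement is left as a conjecture. A secondary difficulty is that Definition \ref{def:Holderx1} pins down only the tail mass $\mu_0([1-y,1])$ up to constants, so all region-by-region estimates must be performed on the measure directly, and one should expect the limiting normalization to depend on $\mu_0$ through $\lim_{y\to0}\mu_0([1-y,1])/y^\theta$ when that limit exists; the same caveat already appears implicitly in Proposition \ref{prop:PDholder} and Proposition \ref{prop:PD13delta}.
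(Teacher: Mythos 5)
This statement is left as a conjecture in the paper; the author explicitly states that the missing ingredient is a proof that solutions of Equation \ref{eq:replicatormeasurepde} converge to a steady state at all, so there is no proof of record to compare against. Your proposal is therefore best judged as a strategy, and as a strategy it is sound and in one respect goes beyond what the paper does: the observation that setting $\psi\equiv 1$ in Equation \ref{eq:mutimplicitweak} forces $Q(t)=\exp\bigl(\lambda\int_0^t\langle G(\cdot)\rangle_{\mu_s}\,ds\bigr)$, so that $\mu_t$ is an explicitly computable tilted pushforward divided by the scalar $Q(t)$, genuinely linearizes the problem and is not exploited anywhere in the paper. Your rate computation in step (i) is also consistent with the paper's formulas: the middle-region contribution $e^{(\lambda G(1)-k\theta)t}$ with $k=|\beta|-\alpha$ reproduces the threshold of Equation \ref{eq:pdlambdastar} and the steady-state payoff of Equation \ref{eq:averagepopulationpayoff}.

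The proof is nonetheless incomplete, and you have correctly located where, but let me sharpen two points. First, step (ii) as you phrase it (convergence of $\langle G(\cdot)\rangle_{\mu_t}$) is not the right target: once the $Q(t)$ reduction is in hand, weak convergence of $\mu_t$ follows as soon as $e^{-(\lambda G(1)-k\theta)t}\int_0^1\psi(\phi_t(x))e^{\lambda\Sigma_t(x)}\mu_0(dx)$ converges for every continuous $\psi$ to a limit $L(\psi)$ with $L(1)>0$; convergence of $\langle G(\cdot)\rangle_{\mu_t}$ is then a corollary, not a prerequisite. The genuine obstruction is the existence of these limits, which requires (a) uniform-in-$t$ asymptotics of $\phi_t$ and $\Sigma_t$ in the boundary layer $1-x\sim e^{-kt}$ — a Koenigs/Hartman-type linearization of the flow at the hyperbolic point $x=1$, which, as you note, the comparison curves with mismatched rates $|\beta|$ and $|\beta|-\alpha$ cannot supply — and (b) regular variation of the tail $\mu_0([1-y,1])$ as $y\to 0$. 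Point (b) is more serious than a "secondary difficulty": Definition \ref{def:Holderx1} only fixes an infimum over exponents and does not guarantee that $\lim_{y\to 0}\mu_0([1-y,1])/y^\theta$ exists, is positive, or is finite (consider a tail behaving like $y^\theta\log(1/y)$, or one oscillating between two constants times $y^\theta$). In such cases $Q(t)$ need not have a pure exponential asymptotic, and although the offending prefactor appears in both numerator and denominator, showing it cancels in the ratio requires a Tauberian-type argument you have not supplied; it is plausible that the conjecture as stated needs a strengthened tail hypothesis. Finally, in step (iii) the appeal to "uniqueness of the steady state from Section \ref{sec:PDsteady}" is slightly off — that section derives the one-parameter family of density steady states but proves no uniqueness theorem among measures; the cleaner closing move within your framework is to compute the limit measure explicitly from the boundary-layer profiles and verify directly that it is the density of Equation \ref{eq:pdsteadybetatheta}.
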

 This long-term behavior has already been shown to hold for special families of Case I PDs with $\alpha = \beta = -1$ \cite{cooney2019replicator} and for the Case II PD \cite{luo2017scaling}. The main impediment to the general proof of the conjecture is that we have not yet shown that solutions $\mu_t(dx)$ of Equation \ref{eq:replicatormeasurepde} necessarily converge to a steady state. However, for the remainder of this section, we will study the properties of the density steady states given by Equation \ref{eq:pdsteadybetatheta}, knowing that these are time-independent solutions of Equation \ref{eq:replicatormeasurepde}, with the potential additional relevance that these steady states could be the long-time outcome for initial data with \holder exponent $\theta$ near $x=1$ when $\lambda \left( \gamma + \alpha \right) > (\beta + \alpha) \theta$. In Figure \ref{fig:pdsteadydensities}, we illustrate sample steady state densities for PDs in which average group payoff is maximized by full-cooperator groups (left) and by groups with an intermediate level of cooperation (right). In the fomer case, we see that values of $\lambda$ as small as 9 can produces steady states with high levels of cooperation and many groups close to full cooperation. In the latter case in which group payoff is maximized by groups with 80 percent cooperator, we see in Figure \ref{fig:pdsteadydensities} that most groups feature at most 60 percent cooperation even for values $\lambda$ as large as 300, and we will further address the behavior of the limit $\lambda \to \infty$ in Proposition \ref{prop:PDpeak}.

\begin{figure}[H]
    \centering
   \hspace{-5mm} \includegraphics[width=0.505\textwidth]{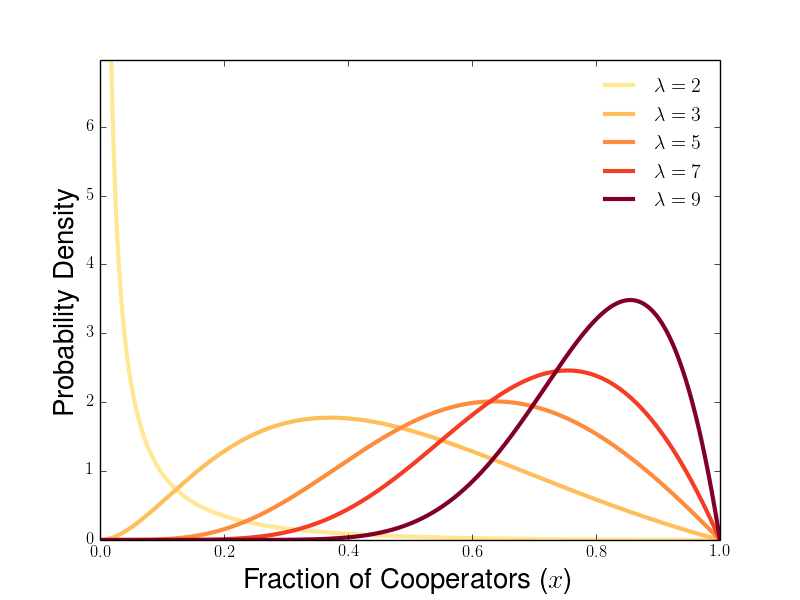} 
    \hspace{-5mm} \includegraphics[width=0.505\textwidth]{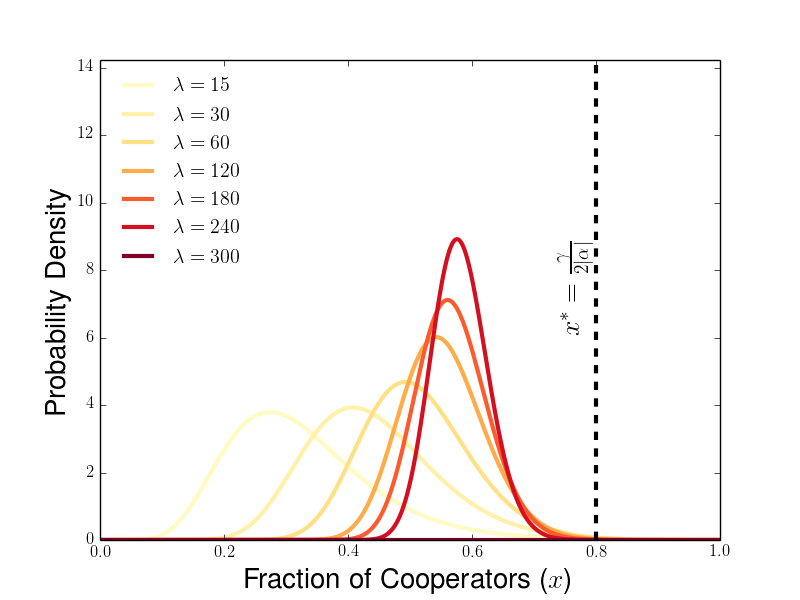}
    \caption{Steady state densities for the PD for various values of $\lambda$. Parameters shown are $\gamma = 2.5$ and $\gamma = -0.5$ (Left) and $\gamma = 3.2$ and $\alpha = -2.0$ (Right), with $\beta = -1$ and $\theta = 2$ for both panels. Dotted line in right panel corresponds to the group type $x^*$ with maximal average payoff.}
    \label{fig:pdsteadydensities}
\end{figure}

Now we will examine which type of group cooperator composition $x$ is most abundant in steady state $f^{\lambda}_{\theta}(x)$. To do this, we denote peak abundance by $\hat{x}_{\lambda}(f^{\lambda}_{\theta}(x)) = \ds\argsup_{x \in [0,1]} f^{\lambda}_{\theta} (x)$.  Because our density-valued steady states become unbounded as $x\to1$ when $\theta < 1$, we will now focus only on steady states when $\theta \geq 1$. We show that if $\gamma \geq 2 \alpha$ (when full-cooperator groups are optimal),  $\hat{x}_{\lambda} \to 1$ as $\lambda \to \infty$. If $\gamma < 2 \alpha$ (the optimal group has both cooperators and defectors), we see that the most abundant group type at steady states has more defectors than the type of group maximizing average payoff, even in the limit at $\lambda \to \infty$.  

\begin{proposition} \label{prop:PDpeak} Suppose $\theta  \geq 1$. If  $\gamma + 2 \alpha < 0$, then $\ds\lim_{\lambda \to \infty} \hat{x}_{\lambda} (f^{\lambda}_{\theta}) = \frac{\gamma + \alpha}{-\alpha} \in (0,1)$ and $\frac{\gamma + \alpha}{-\alpha} < \frac{\gamma}{-2 \alpha} : x^*$. If  $\gamma + 2 \alpha > 0$, $ \ds\lim_{\lambda \to \infty} \hat{x}_{\lambda} (f^{\lambda}_{\theta} ) = 1 = x^*$. 

\end{proposition}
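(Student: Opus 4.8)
The plan is to locate $\hat{x}_\lambda(f^\lambda_\theta)$ by examining the critical points of $\log f^\lambda_\theta$ on $(0,1)$ and tracking them as $\lambda \to \infty$. Writing the exponents of $x$, $1-x$, and $|\beta|-\alpha x$ in Equation \ref{eq:pdsteadybetatheta} as $A = A(\lambda)$, $C = \theta - 1$, and $B = B(\lambda)$ respectively, a direct differentiation gives
\[ \frac{d}{dx} \log f^\lambda_\theta(x) = \frac{A}{x} - \frac{C}{1-x} - \frac{\alpha B}{|\beta| - \alpha x} = \frac{Q_\lambda(x)}{x(1-x)(|\beta| - \alpha x)}, \]
where $Q_\lambda(x) := A(1-x)(|\beta|-\alpha x) - Cx(|\beta|-\alpha x) - \alpha B x(1-x)$ is a quadratic in $x$. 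For every case of the PD we have $|\beta| - \alpha x > 0$ on $[0,1]$ (it equals $|\beta| + |\alpha|x$ when $\alpha < 0$, and is at least $|\beta| - \alpha = T - R > 0$ when $\alpha > 0$), so the denominator is positive on $(0,1)$ and the sign of $(f^\lambda_\theta)'$ agrees with that of $Q_\lambda$ there. Moreover $\theta \geq 1$ and $A(\lambda) \to +\infty$ imply that, for large $\lambda$, $f^\lambda_\theta$ is continuous on $[0,1]$, vanishes at $x=0$, and is finite at $x=1$, so $\hat{x}_\lambda$ is attained and is either an interior zero of $Q_\lambda$ at which $Q_\lambda$ changes sign from $+$ to $-$, or the endpoint $x=1$.

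I would then extract the leading order in $\lambda$. Using $A(\lambda) = \tfrac{\lambda(\gamma+\alpha)}{|\beta|} + O(1)$, $B(\lambda) = -\tfrac{\lambda(\gamma + |\beta| + \alpha)}{|\beta|} + O(1)$, and $C = O(1)$, a short computation collapses the $\lambda$-linear part and yields
\[ \frac{Q_\lambda(x)}{\lambda} \longrightarrow q(x) := (1-x)\bigl( (\gamma+\alpha) + \alpha x \bigr) = -\alpha(x-1)\Bigl( x - \tfrac{\gamma+\alpha}{-\alpha} \Bigr) \]
uniformly on $[0,1]$ as $\lambda \to \infty$, so the two roots of $Q_\lambda$ converge to $1$ and to $r := \tfrac{\gamma+\alpha}{-\alpha}$. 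I would also record the endpoint values $Q_\lambda(0) = A(\lambda)|\beta| > 0$ for large $\lambda$ and $Q_\lambda(1) = -(\theta-1)(|\beta|-\alpha) \leq 0$, with equality iff $\theta = 1$; together with $\gamma + \alpha = R - P > 0$ these fix which side of $[0,1]$ each root lies on.

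The argument then splits on the sign of $\gamma + 2\alpha$. If $\gamma + 2\alpha < 0$, then $\gamma+\alpha > 0$ forces $\alpha < 0$ and $r \in (0,1)$; here $Q_\lambda$ is an upward parabola with $Q_\lambda(0) > 0$ and $Q_\lambda(1) \leq 0$, hence has a unique zero in $(0,1)$, at which $Q_\lambda$ changes from positive to negative and which tends to $r$ by the uniform convergence above, while its other zero stays $\geq 1$. Consequently $f^\lambda_\theta$ is, for large $\lambda$, strictly increasing then strictly decreasing on $(0,1)$, so $\hat{x}_\lambda$ equals that zero and $\hat{x}_\lambda \to \tfrac{\gamma+\alpha}{-\alpha}$; the inequality $\tfrac{\gamma+\alpha}{-\alpha} < \tfrac{\gamma}{-2\alpha} = x^*$ is then just a rearrangement of $\gamma + 2\alpha < 0$. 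If $\gamma + 2\alpha > 0$, then $q$ is strictly positive on $(0,1)$ (its zeros $1$ and $r$ lie outside: $r<0$ when $\alpha>0$, and $r>1$ when $\alpha<0$), so by uniform convergence $Q_\lambda > 0$ on every $[\delta, 1-\delta]$ once $\lambda$ is large; combined with $Q_\lambda(1) \leq 0$ this shows the maximizer of $f^\lambda_\theta$ lies in $(1-\delta, 1]$ for each $\delta$, whence $\hat{x}_\lambda \to 1 = x^*$.

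The coefficient bookkeeping for $Q_\lambda$ and the $\alpha$-sign split are routine. The one step needing genuine care is the behaviour near $x=1$: the uniform convergence $Q_\lambda/\lambda \to q$ only controls $Q_\lambda$ on compact subsets of $(0,1)$, so when $\theta > 1$ one must combine $Q_\lambda(1) = -(\theta-1)(|\beta|-\alpha) < 0$ with positivity of $Q_\lambda$ on $[\delta,1-\delta]$ (an intermediate-value argument, plus the fact that a quadratic has at most two zeros) to pin the maximizing zero of $Q_\lambda$ into each shrinking window $(1-\delta,1)$ and to confirm it is the global maximizer; the case $\theta = 1$ is handled separately, since then $f^\lambda_\theta(1) > 0$ and one checks directly that the maximizer is the interior point near $r$ when $\gamma + 2\alpha < 0$ and is $x=1$ itself when $\gamma + 2\alpha > 0$.
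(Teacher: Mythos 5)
Your proposal is correct and follows essentially the same route as the paper: differentiate the steady-state density, reduce the critical points to the roots of a quadratic whose coefficients are affine in $\lambda$, pass to the limit $\lambda\to\infty$ (your factorization of the limit quadratic $q(x)=(1-x)((\gamma+\alpha)+\alpha x)$ matches the paper's roots $x^{\infty}_{\pm}\in\{1,\,-(\gamma+\alpha)/\alpha\}$ obtained from the quadratic formula), and split on the sign of $\gamma+2\alpha$. Your treatment is in fact slightly more careful than the paper's at the endpoints, where you use the sign data $Q_{\lambda}(0)>0$, $Q_{\lambda}(1)=-(\theta-1)(|\beta|-\alpha)\leq 0$ to confirm that the identified root is the global maximizer and to handle $\theta=1$ separately, whereas the paper simply invokes $f^{\lambda}_{\theta}(0)=f^{\lambda}_{\theta}(1)=0$.
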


\begin{proof}

We start by differentiating $f^{\lambda}_{\theta}(x)$ and see that $$\dsddx{f^{\lambda}_{\theta}(x)}{x} = g(x) \left[ Z_f^{-1} x^{|\beta|^{-1} \left(\lambda \left(\gamma + \alpha \right) - \left( |\beta| - \alpha \right) \theta \right) - 2} \left( 1 - x \right)^{\theta - 2} \left( |\beta| - \alpha x \right)^{- \frac{\lambda}{|\beta|} \left( \gamma + |\beta| + \alpha \right) - \frac{\alpha}{|\beta|}\theta -2}  \right] $$ where $g(x)$ is a quadratic in $x$ given by \begin{equation} \label{eq:gxdef} g(x) = \left[ \lambda \left(\gamma + \alpha \right) - \left(|\beta| - \alpha \right) \theta - |\beta| \right]  + \left[ - \lambda \gamma + 2 \left(\alpha + |\beta| \right) \right]  x - \left[ \lambda + 3\right] \alpha x^2  \end{equation}
We note that $g(x)$ vanishes at the points \begin{equation} \label{eq:xhatlambdapm} x^{\lambda}_{\pm} = \frac{\lambda \gamma - 2 (\alpha +|\beta|) \pm \ds\sqrt{\left(\lambda \gamma - 2 (\alpha + |\beta|) \right)^2 + 4 \left(\lambda (\gamma + \alpha) - (|\beta| - \alpha) \theta -|\beta| \right) ( 3 + \lambda) \alpha}}{-2 ( 3 + \lambda)\alpha} \end{equation} Using $o(\lambda^2)$ to mean any function $h(\lambda)$ such that $\lim_{\lambda \to \infty} \left(h(\lambda)/\lambda^2\right) = 0$, we can rewrite the above expression as 
\begin{equation} \label{eq:xhatlambdapmolambdasq} x^{\lambda}_{\pm} = \frac{\lambda \gamma - 2 (\alpha +|\beta|)}{-2(3 + \lambda)\alpha}  \mp \frac{1}{2\alpha} \ds\sqrt{ \frac{\left[\gamma^2 + 4\alpha \left(\gamma + \alpha\right) \right] \lambda^2 + o(\lambda^2)}{(3+\lambda)^2 }} \end{equation}

 In the limit of large $\lambda$, that this simplifies to \begin{equation} \label{eq:xhatlambdapminfinity} x^{\infty}_{\pm} := \ds\lim_{\lambda \to \infty} x^{\lambda}_{\pm} = - \frac{\gamma}{2 \alpha} \mp \frac{1}{2\alpha}  \ds\sqrt{\gamma^2 + 4 \alpha \gamma + 4 \alpha^2}  = \frac{- \gamma \mp \sqrt{\left(\gamma + 2 \alpha \right)^2}}{2\alpha} \end{equation} Therefore the critical points of $g(x)$ depend on the sign of $\gamma + 2\alpha$, which corresponds to whether group average payoff $G(x)$ has an interior maximum.

\begin{itemize}
 
\item For $\gamma + 2 \alpha < 0$, 
then $ x^{\infty}_{\pm} = \frac{-\gamma \pm \left(\gamma + 2 \alpha\right)}{2 \alpha}$, and therefore $x^{\infty}_{+} = 1$ and $ x^{\infty}_{-} = - \frac{\left(\gamma + \alpha\right)}{\alpha}$. %
Because $\gamma + \alpha = R - P > 0$ for the PD, we see that $x^{\infty}_{-} < 0$ if $\alpha > 0$, and therefore the only feasible critical point is ${x}^{\infty}_{+} = 1$ in the limit $\lambda \to \infty$ when $\gamma + 2 \alpha < 0$ and $\alpha > 0$.
When $\alpha < 0$, \[ x^{\infty}_{-} = - \frac{\left(\gamma + \alpha\right)}{\alpha} = -\frac{\gamma}{2\alpha} - \left[ \frac{\gamma}{2\alpha} + 1 \right] = x^* - \underbrace{ \left[ \frac{\gamma + 2 \alpha}{2\alpha} \right]}_{>0} <  x^*  \]
So we see that $g(x)$ has a unique interior critical point $x^{\infty}_{-} \in [0,1]$ and that this interior critical point satisfies $x^{\infty}_{-} < x^* = \frac{-\gamma}{2\alpha}$, having fewer cooperators than the type of group with maximum average payoff.

\item For $\gamma + 2 \alpha > 0$, %
$x^{\infty}_{\pm} = \frac{-\gamma \mp \left(\gamma + 2 \alpha \right)}{2 \alpha}$ and then $x^{\infty}_{-} = 1$ and $x^{\infty}_{-} = - \frac{\left(\gamma + \alpha \right)}{\alpha}$. Again, if $\alpha > 0$, then we see that $x^{\infty}_{+} < 0$, and the only feasible critical point for $g(x)$ is $x^{\infty}_{-} = 1$. When $\alpha < 0$, we see that   \[x^{\infty}_{+} = - \frac{\left(\gamma + \alpha\right)}{\alpha} = 1 - \underbrace{\left( \frac{\gamma + 2 \alpha}{\alpha} \right)}_{< 0} > 1, \] so the unique feasible critical point for $g(x)$ is also $x^{\infty}_{-} = 1$ in this case.

\end{itemize}

For $\lambda > \frac{(|\beta| - \alpha) \theta}{\gamma + \alpha}$, we have from Equation \ref{eq:pdsteadybetatheta} that $f^{\lambda}_{\theta} (0) = f^{\lambda}_{\theta}(1) = 0$ when $\theta \geq 1$ (when $\theta < 1$, the steady state density blows up near $x=1$). Because $0$ and $1$ are the only possible critical points of $f^{\lambda}_{\theta}(x)$ other than those of $g^{\lambda}(x)$, we can deduce that $\ds\lim_{\lambda \to \infty} \hat{x}_{\lambda}(f^{\lambda}_{\theta}) = - \frac{\left(\gamma + \alpha\right)}{\alpha}$ when and $\alpha < 0$ and $\gamma + 2 \alpha < 0$ and that $\ds\lim_{\lambda \to \infty} \hat{x}_{\lambda}(f^{\lambda}_{\theta}) = 1$ for $\gamma \geq 1$ for other Prisoners' Dilemmas. 
\end{proof}

Based on the conditions derived above, we see for Case I PDs (in which $\alpha < 0$) that $\lim_{\lambda \to \infty} \hat{x}^{\lambda} < x^*$ if an only if $\alpha + 2 \gamma < 0$. Only when $\alpha < 0$ and $\gamma + 2 \alpha$ is it the case that the group type with maximal fitness features a mix of cooperators and defectors. 
Notably, in cases II-IV (in which $\alpha \geq 0$), in which average group payoff $G(x)$ is maximized by full cooperator groups, peak abundance at steady state $\hat{x}_{\lambda} \to 1$ as $\lambda \to \infty$. Unlike Case I, there is no discrepency between peak abundance at steady state and group type with highest average payoff ($x^* = 1$) in the limit of strong between-group selection. %

In the process of proving Proposition \ref{prop:PDpeak}, we also see that the most abundant group compostion at steady state is given by  
\begin{equation} \label{eq:xhatlambdapd} \hat{x}^{\lambda} = \left\{
     \begin{array}{cl}
       0 & :  0 \leq \lambda < \lambda^* +|\beta| \\
       \hat{x}^{\lambda}_{-} & :\lambda >  \lambda^* + |\beta|
     \end{array}
   \right. \end{equation}
   
   Because we know from Proposition \ref{prop:PD13delta} that $\mu_t(dx) \rightharpoonup \delta\left(x\right)$ when $\lambda < \lambda^*$, we can use Equation \ref{eq:PDsteadyfitnesspayoff} to characterize the average payoff at steady state for all values of $\lambda$ with the piecewise description
 \begin{equation} \label{eq:Glambdahd} \langle G \rangle_{f^{\lambda}_{\theta}} = \left\{
     \begin{array}{cl}
       G\left(0  \right) & :  0 \leq \lambda < \lambda^* \\
   \left(\frac{\lambda^*}{\lambda}\right)  G(0) + \left( 1 -  \left(\frac{\lambda^*}{\lambda}\right)  \right) G(1) & :\lambda >  \lambda^* 
     \end{array}
   \right. \end{equation}

In Figure \ref{fig:PDghostfigure}(left), we plot the average payoff at steady state $\langle G (\cdot) \rangle_{f^{\lambda}_{\theta}}$ and the average payoff of the most abundant group type at steady state $G(\hat{x}_{\lambda})$ as functions of $\lambda$, showing that both tend to $G(1)$ as $\lambda \to \infty$ (lower-dashed line) rather than the maximal possible group payoff (upper-dashed line). In Figure \ref{fig:PDghostfigure}(right), we plot the maximal group payoff $G(x^*)$ and the average payoff at steady state in the limit as $\lambda \to \infty$ ($\ds\lim_{\lambda \to \infty} \langle G (\cdot) \rangle_{f^{\lambda}_{\theta}} = G(1)$). In Figure \ref{fig:PDpeakalphagamma}, we present heat maps of the most abundant group type in the limit of large $\lambda$, $\lim_{\lambda \to \infty} \hat{x}_{\lambda}$, as a function of $\alpha$ and $\gamma$ for all cases of the PD (left) and focusing on Case I (right). In Figure \ref{fig:PDfitnessalphagamma}, we present average payoffs at steady state in the limit of large $\lambda$as a function of $\alpha$ and $\gamma$ for all cases of the PD (left) and for Case I (right). In Figure \ref{fig:pdpeakone}, we plot the group payoff function $G(x)$ for two sets of parameter values, showing how the group payoff achieved by the most abundant group composition $G\left(\lim_{\lambda \to \infty} \hat{x}_{\lambda}\right) = G(1)$.

\begin{figure}[H]
    \centering
   \hspace{-5mm} \includegraphics[width=0.505\textwidth]{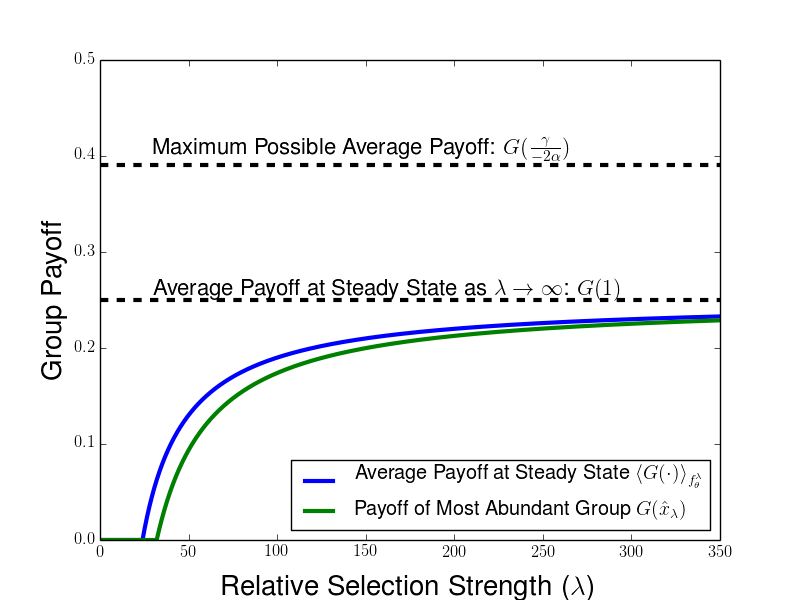} 
    \hspace{-5mm} \includegraphics[width=0.505\textwidth]{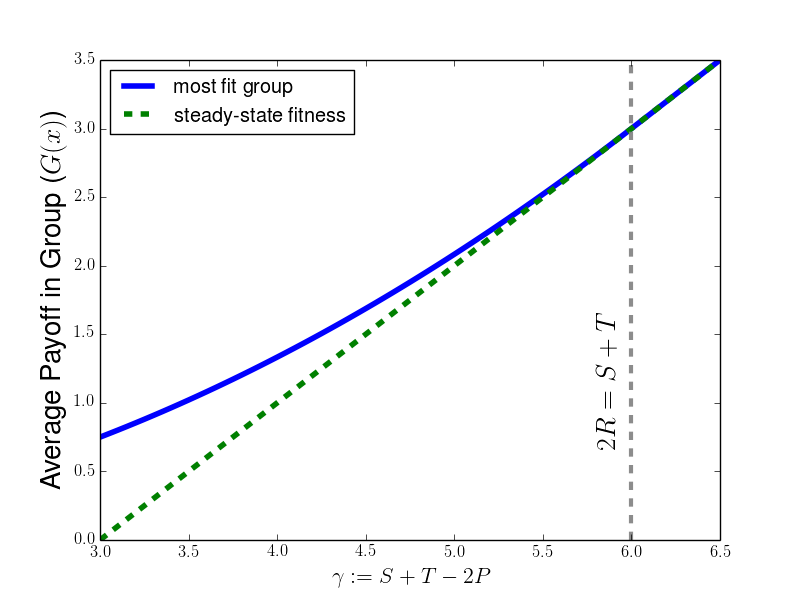}
    \caption{Comparison between average payoff at steady state and optimal average payoff for a group. (Left) Plots of average payoff at steady state $\langle G (\cdot) \rangle_{f^{\lambda}_{\theta}}$ (blue line) and payoff of most abundant group type in steady state distribution $G(\hat{x}_{\lambda})$ (green line) as a function of $\lambda$ in a sample game with optimal group composition $x^* = \frac{3}{4}$. Lower dashed line corresponds to $G(1)$, the limit of average steady state payoff as $\lambda \to \infty)$, and the upper dashed line corresponds to maximal possible group payoff $G(x^*)$. (RIght) Plots of the group type with maximal average payoff $G(x^*)$ (blue solid line) and the average payoff at the population at steady state $\lim_{\lambda \to \infty} \langle G (\cdot) \rangle_{f^{\lambda}_{\theta}}$ in the large $\lambda$ limit (green dashed line), each described as a function of $\gamma$ with a fixed choice of $\alpha = -3$. The gray dotted verticle line corresponds to the value $\gamma =6$ at which the group compostion maximizing payoff $x^* = -\frac{\gamma}{-2\alpha} \big|_{\alpha = -3} = 1$. We notice that the two lines coincide for $\gamma > 6$, when the full-cooperator groups are optimal for between-group competition, while the green dotted line falls below the blue solid line when $\gamma < 6$, as the average payoff of the population falls belows the interior optimal group payoff. In particular, when $\gamma \to 3$, we see that $\langle G (\cdot) \rangle_{f^{\lambda}_{\theta}} \to 0$ and the full-defection outcome is achieved, even though the for this game the group average payoff is $G(x) = 3x (1- x)$, and groups of a fifty-fitfy mix of cooperators and defectors are most favored by between-group competition.  }
    \label{fig:PDghostfigure}
\end{figure}

\begin{figure}[H]
    \centering
   \hspace{-5mm} \includegraphics[width=0.505\textwidth]{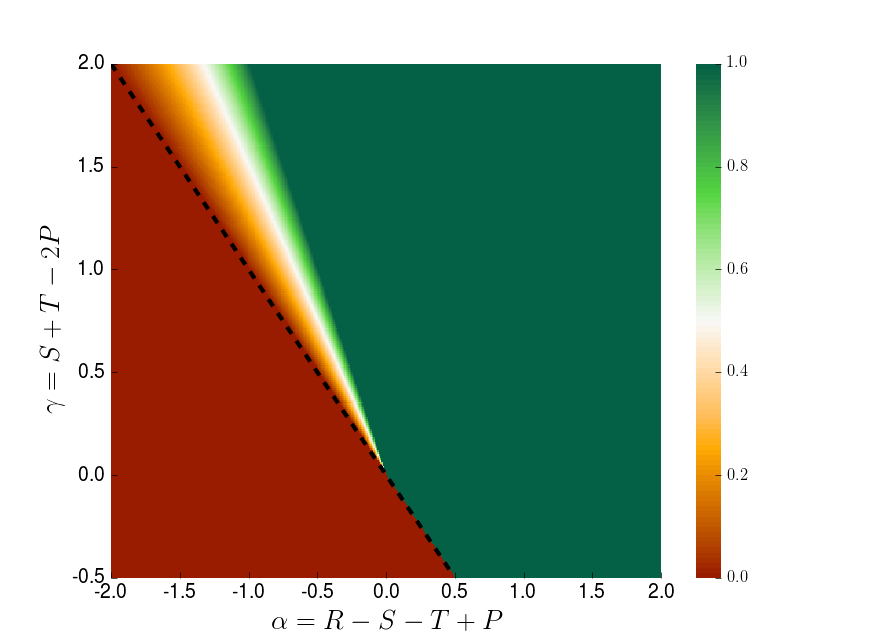} 
    \hspace{-5mm} \includegraphics[width=0.505\textwidth]{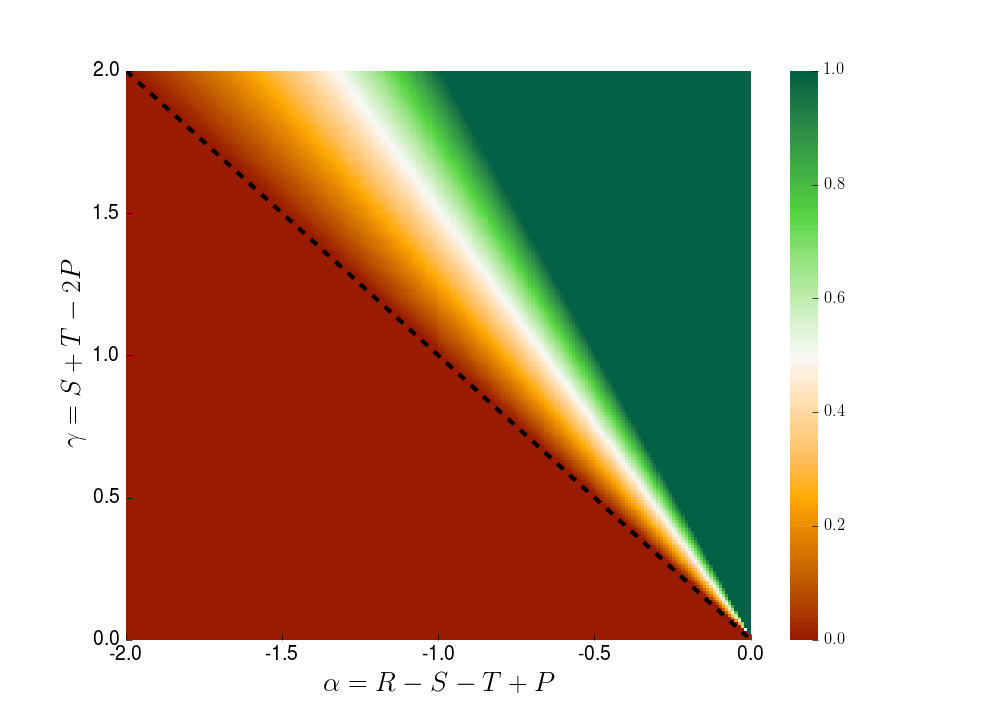}
    \caption{Most abundant group type as $\lambda \to \infty$ for various values of $\gamma$ and $\alpha$. Games in region below dashed line are not Prionsers' Dilemmas.  (Left) Region in $\gamma$ and $\alpha$ parameter space featuring all cases of Prisoners' Dilemmas. (Right) Focus on Case I PDs for which intermediate fitness optima are possible, and the only region in which $\hat{x}_{\lambda} \not\to 1$ as $\lambda \to \infty$ for PD games.}
    \label{fig:PDpeakalphagamma}
\end{figure}

\begin{figure}[H]
    \centering
   \hspace{-5mm} \includegraphics[width=0.535\textwidth]{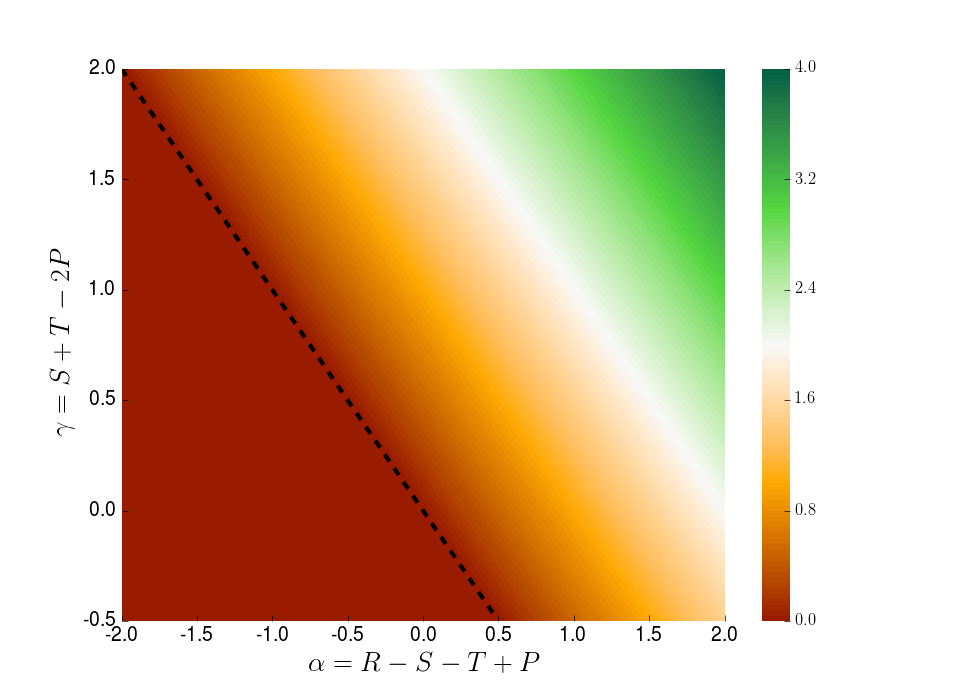} 
    \hspace{-7mm} \includegraphics[width=0.505\textwidth]{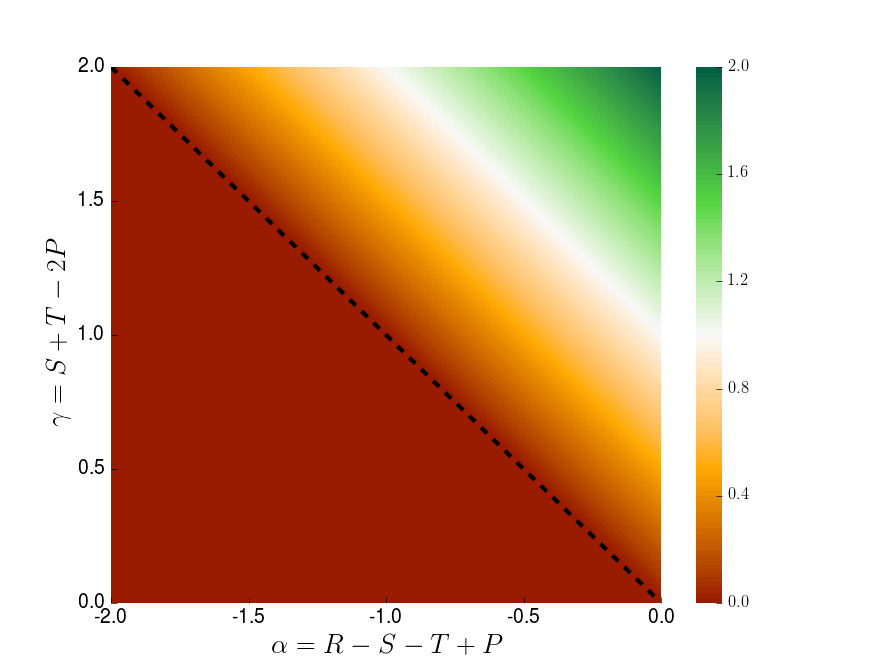}
    \caption{Average payoff of population at steady state as $\lambda \to \infty$ for various values of $\gamma$ and $\alpha$. Games in region below dashed line are not Prionsers' Dilemmas.  (Left) Region in $\gamma$ and $\alpha$ parameter space featuring all cases of Prisoners' Dilemmas. (Right) Focus on Case II PDs for which intermediate fitness optima are possible. Notably, the large $\lambda$ steady state payoff $\lim_{\lambda \to \infty} \langle G \rangle_{f^{\lambda}_{\theta}}  = \gamma + \alpha$ is an increasing function of both $\gamma$ and $\alpha$. }
    \label{fig:PDfitnessalphagamma}
\end{figure}

\begin{figure}[H]
    \centering
   \hspace{-5mm} \includegraphics[width=0.455\textwidth]{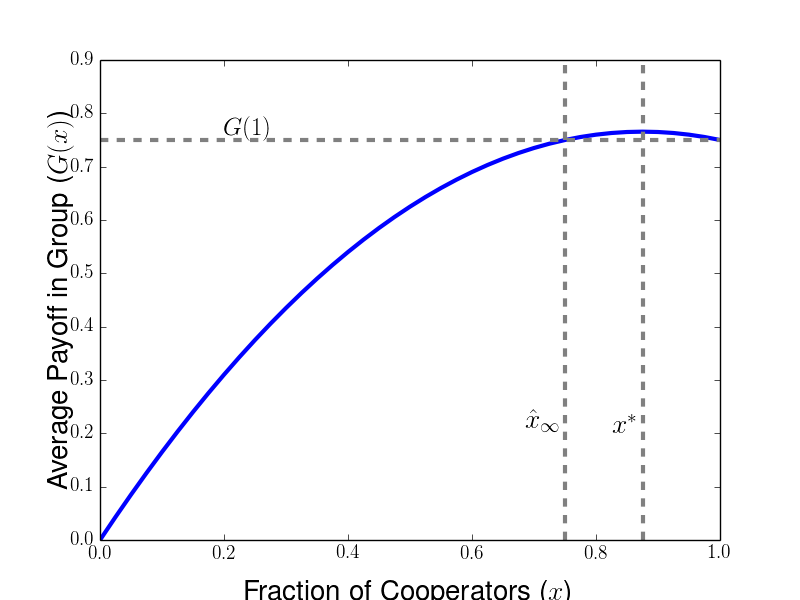} 
      \hspace{-5mm} \includegraphics[width=0.455\textwidth]{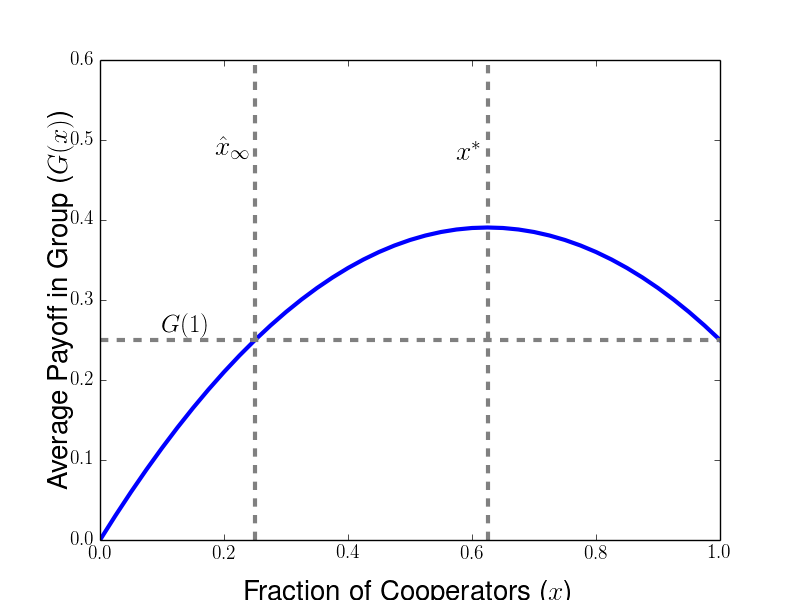}
    \caption{Illustration of peak abudance at steady state as $\lambda \to \infty$ and group type with maximal payoff $x^*$ for Case Ia PDs with $\alpha = -1$ (both) and $\gamma = 1.75$ (left) and $\gamma = 1.25$ (right). In both cases, we see how the modal level of cooperation at steady state $\hat{x}_{\lambda}$ in the large $\lambda$ limit has the same collective payoff as the full cooperator group, $G(\hat{x}_{\infty}) = G(1)$. We also observe that the gap between the maximal possible collective payoff $G(x^*)$ and the maximal payoff achieved at steady state$G(\hat{x}_{\infty})$ increases as $x^*$ decreases from $0.875$ (left) to $x^* = 0.625$.   }
    \label{fig:pdpeakone}
\end{figure}

\section{General HD Games} \label{sec:GeneralHD}

In this section, we describe the dynamics and steady-states for the multilevel dynamics of the Hawk-Dove game. In Section \ref{sec:HDlongtime}, we show that the probability of having groups below the within-group equilibrium vanishes in the long-time limit, and we characterize conditions under which the population converges to a delta-concentration at the within-group equilibrium mix of cooperators and defectors. In Section \ref{sec:HDsteady}, we study the steady state densities which exist under the opposite conditions, supporting all levels of cooperation between the within-group equilibrum and the full-cooperator group.

\subsection{HD Long-Time Behavior} \label{sec:HDlongtime}

Because $\alpha < 0$ for the Hawk-Dove game, we write the expression for group payoff along characteristic curves as $G(\phi_t(x) = \gamma \phi_t(x) - |\alpha| \phi_t(x)^2$. This allows us to describe the change in the population distribution $\mu_t(dx)$ due solely to between-group competition as
\begin{equation} \label{eq:wtphixHD} w_t(\phi_t(x)) = \exp\left( \lambda \left[ \int_0^t \left(\gamma \phi_s(x) - |\alpha| \phi_s(x)^2 \right) ds - \int_0^t \left(\gamma M_1^{\mu_s} - |\alpha| M_2^{\mu_s}  \right) ds \right]  \right) \end{equation}
As in Reference \citep{cooney2019replicator}, it can be helpful to rearrange the above expression by picking $a \in [0,\tfrac{\beta}{|\alpha|}]$ and writing
\begin{dmath} \label{eq:wtphixHDa} w_t(\phi_t(x)) = \exp\left( \left[\lambda \int_0^t \left(\gamma \phi_s(x) - |\alpha| \phi_s(x)^2 - \left( \gamma a - |\alpha| a^2\right)  \right) ds \\ - \int_0^t \left[\gamma  \left(M_1^{\mu_s} - a\right) - \left( |\alpha| M_2^{\mu_s}  - a^2\right) \right] ds \right] \right) \end{dmath}
 where the integral $\int_0^t \left[ \left( \gamma M_1^{\mu_s} - a \right) +  \left(M_2^{\mu_s} - a^2 \right) \right] ds$ provides a measurement of the deviation of the flow of measures $\{\mu_s(dx)\}_{s \in [0,t]}$ from the delta-function $\delta(x-a)$ concentrated at composition $a$ up to time $t$. We will use this integral in a same way that we used that $\int_0^t \langle G(\cdot) \rangle_{\mu_s} ds \geq 0$ in the proof of Proposition \ref{prop:PD13delta}. For convenience, we will denote its integrand by 
 \begin{equation} \label{eq:jas} j_a(s) := \gamma \left(M_1^{\mu_s} - a \right) - |\alpha| \left(M_2^{\mu_s} - a^2 \right)  \end{equation}

We now prove two lemmas which are useful for characterizing the long-time behavior of multilevel HD dynamics. In Lemma \ref{lem:HDleftprob}, we show that the probability of being below the within-group HD equilibrium should vanish in the long-run. This Lemma is a direct extension of Lemma 4.1 from \citep{cooney2019replicator}, making use of the comparison principles of Equations \ref{eq:HDPirankings} and \ref{eq:HDPirankingsbackwards} to extend the argument to games without solvable within-group dynamics. In Lemma \ref{lem:HDleftint}, we show that $\int_0^{\infty} j_a(s) ds$ is bounded below, aiding in the proof of convergence of the population to a delta-function at the within-group equilibrium in the appropriate parameter regime. For these Lemmas, we can characterize the dynamics of dynamics below $x^{eq}$ in terms of the \holder exponent of the initial measure near $x=0$, which is given by 
\begin{equation} \label{eq:holderzero} \zeta = \ds\inf_{\Theta \geq 0} \left\{ \ds\lim_{y \to 0} \frac{\mu_0([0,y])}{y^{\Theta}} > 0 \right\}  \end{equation}
We will assume for convenience that our initial measures $\mu_0(dx)$ will have \holder exponent $\zeta > 0$, although the results should hold as well in the case in which some groups are concentrated at the all-defector group in the initial population.

\begin{lemma} \label{lem:HDleftprob} Suppose $\mu_0(dx)$ has \holder exponent $\zeta > 0$ near the all-defector group $x=0$. Then, for any $\delta > 0$, $\mu_t\left([0,\tfrac{\beta}{|\alpha|} -\delta] \right) \to 0$ as $t \to \infty$.  \end{lemma}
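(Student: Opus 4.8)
The plan is never to estimate the between-group weight $w_t$ absolutely, but only through the ratio it induces. Since $\ds\int_0^1 w_t(\phi_t(x_0))\,\mu_0(dx_0)=\mu_t([0,1])=1$ and, by Equation \ref{eq:wtx}, $w_t(\phi_t(x_0))$ equals $h_t(x_0):=\exp\!\big(\lambda\int_0^t G(\phi_s(x_0))\,ds\big)$ times the $x_0$-independent factor $\exp\!\big(-\lambda\int_0^t\langle G(\cdot)\rangle_{\mu_s}\,ds\big)$, the representation \ref{eq:mutimplicitweak} gives, for any Borel $A\subseteq[0,1]$,
\[ \mu_t(A)=\frac{\ds\int_{\phi_t^{-1}(A)}h_t(x_0)\,\mu_0(dx_0)}{\ds\int_0^1 h_t(x_0)\,\mu_0(dx_0)}. \]
Since the within-group flow \ref{eq:HDrepwithingroup} is increasing on $[0,x^{eq}]$ with $x^{eq}=\tfrac{\beta}{|\alpha|}$ a fixed point, $\phi_t^{-1}([0,x^{eq}-\delta])=[0,\phi_t^{-1}(x^{eq}-\delta)]$ and any trajectory starting in it stays in $[0,x^{eq}-\delta]$ up to time $t$; because $G$ is increasing on $[0,x^{eq}]$ (recall $x^{eq}<x^*$), this yields the numerator bound $h_t(x_0)\le \exp\!\big(\lambda G(x^{eq}-\delta)\,t\big)$ on the relevant range of $x_0$.

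For the denominator I would use a fixed pocket of groups near $x^{eq}$. If $\mu_0$ charges $(x^{eq}-\delta,x^{eq})$, choose $[y_0,y_0']\subseteq(x^{eq}-\delta,x^{eq})$ with $\mu_0([y_0,y_0'])>0$; trajectories started there stay in $[y_0,x^{eq})$, where $G\ge G(y_0)>G(x^{eq}-\delta)$, so $\ds\int_0^1 h_t\,d\mu_0\ge \mu_0([y_0,y_0'])\exp\!\big(\lambda G(y_0)\,t\big)$. If $\mu_0$ does not charge that window, I would first let the flow lift a pocket up out of $(0,x^{eq})$: the lower comparison in Equation \ref{eq:HDPirankings} gives $\phi_s(x_0)\ge\Pi_s(1-\tfrac{\beta}{|\alpha|};x_0)\to x^{eq}$, so after some finite time $t_0$ the image of such a pocket lies inside $(x^{eq}-\delta,x^{eq})$ and $\mu_{t_0}$ charges it; one then repeats the argument from time $t_0$, which is legitimate because the equation is autonomous. (If $\mu_0$ charges neither $(0,x^{eq})$ nor the window, then, using $\zeta>0$ to exclude an atom at $0$, $\mu_0$ and hence $\mu_t$ is carried by $[x^{eq},1]$ and the claim is trivial.)

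Combining the two bounds,
\[ \mu_t\!\left(\left[0,\tfrac{\beta}{|\alpha|}-\delta\right]\right)\le\frac{\mu_0\!\big([0,\phi_t^{-1}(x^{eq}-\delta)]\big)}{\mu_0([y_0,y_0'])}\,\exp\!\Big(\lambda\big(G(x^{eq}-\delta)-G(y_0)\big)\,t\Big), \]
where the exponent is strictly negative ($G$ is strictly increasing on $[0,x^{eq}]$) while the prefactor is bounded — indeed it tends to $0$, since the backward comparison in Equation \ref{eq:HDPirankingsbackwards} gives $\phi_t^{-1}(x^{eq}-\delta)\le\Pi_t^{-1}(1-\tfrac{\beta}{|\alpha|};x^{eq}-\delta)\to0$ and $\mu_0$ has no atom at $0$. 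Letting $t\to\infty$ gives the result.

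\textbf{The main obstacle} is precisely the control of $w_t$: a crude bound like $w_t(\phi_t(x_0))\le e^{\lambda G^* t}$, or even $\le e^{\lambda G(x^{eq}-\delta)t}$, does not help, because $\mu_0\big([0,\phi_t^{-1}(x^{eq}-\delta)]\big)$ decays only exponentially at a within-group rate that $\lambda$ may out-run. The point is that $w_t$ need not be estimated on its own: the $\langle G\rangle_{\mu_s}$ contribution is common to all trajectories and cancels in the ratio, so one only compares $\int_0^t G(\phi_s(x_0))\,ds$ along the low trajectories (where $G\le G(x^{eq}-\delta)$) against the same integral along a pocket near $x^{eq}$ (where $G\ge G(y_0)$), and monotonicity of $G$ on $[0,x^{eq}]$ makes this comparison cost nothing. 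The comparison principles of Section \ref{sec:Comparison} enter only to supply, for within-group dynamics that are not solvable, the two qualitative inputs used above — that groups below $x^{eq}$ climb to $x^{eq}$, and that $\phi_t^{-1}(x^{eq}-\delta)\to0$.
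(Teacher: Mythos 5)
Your argument is correct, but it takes a genuinely different route from the paper's. The paper estimates the weight $w_t(\phi_t(x_0))$ absolutely: it discards the mean-payoff term via the sign condition $\langle G(\cdot)\rangle_{\mu_s}\geq 0$ (valid for the HD game since $\gamma>|\alpha|$), and then shows that $\int_0^t G(\phi_s(x_0))\,ds$ is bounded uniformly in $t$ for trajectories ending below $\tfrac{\beta}{|\alpha|}-\delta$ at time $t$, by invoking the explicit integrals \ref{eq:Piintegral}--\ref{eq:Pisquaredintegral} along the comparison curves $\Pi_s(1;x)$ expressed in terms of the endpoint; all of the decay then comes from the backward-characteristic bound on $\mu_0\left(\left[0,\Pi_t^{-1}(1-\tfrac{\beta}{|\alpha|};\cdot)\right]\right)$, which uses the \holder exponent $\zeta$ near $x=0$ quantitatively to produce an explicit exponential rate. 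You instead cancel the mean-payoff term exactly through the normalization $\mu_t([0,1])=1$ and obtain decay by comparing Boltzmann weights, $e^{\lambda G(x^{eq}-\delta)t}$ along the low trajectories against $e^{\lambda G(y_0)t}$ along a pocket near $x^{eq}$, using only monotonicity of $G$ on $[0,x^{eq}]$; the \holder exponent enters only qualitatively, to exclude an atom at $0$, which is what makes your prefactor (and hence the $\lambda=0$ case) vanish. Your route avoids both the sign condition and the explicit integral computations, at the price of the case analysis on where $\mu_0$ puts its mass and the restart argument. One caveat: the paper reuses this lemma quantitatively in Lemma \ref{lem:HDleftint}, where it needs $\mu_t\left(\left[0,\tfrac{\beta}{|\alpha|}-\delta_a\right]\right)$ to decay at a $t$-integrable rate that does not degenerate as $\lambda\to 0$; your ratio bound supplies an exponential rate only when $\lambda>0$, so for that downstream purpose you would still need to feed the \holder-exponent rate into your prefactor, at which point you essentially recover the paper's estimate.
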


\begin{proof} 
Using the measure-valued formulation, we have for test function $\psi(x)$ that
\[ \int_0^{\tfrac{\beta}{|\alpha|} - \delta} \psi(x) \mu_t(dx) = \int_0^{\phi_t^{-1}\left(\tfrac{\beta}{|\alpha|} - \delta \right)} \psi\left(\phi_t(x) \right) w_t\left( \phi_t(x)\right) \mu_0(dx) \leq  ||\psi||_{\infty} \int_0^{\phi_t^{-1}\left(\tfrac{\beta}{|\alpha|} - \delta \right)}  w_t\left( \phi_t(x)\right) \mu_0(dx) \]
Using the characterization of $w_t(\phi_t(x))$ from equation \ref{eq:wtx}, we have that
\begin{align*} \int_0^{\phi_t^{-1}\left(\tfrac{\beta}{|\alpha|} - \delta\right)} \psi(x) \mu_t(dx)  &\leq \int_0^{\phi_t^{-1}\left(\tfrac{\beta}{|\alpha|} - \delta\right)} \exp\left(\lambda  \int_0^t  \left[ G(\phi_s(x)) - \langle G(\cdot) \rangle_{\mu_s} \right] ds\right) \mu_0(dx)  \\ &\leq  \int_0^{\phi_t^{-1}\left(\tfrac{\beta}{|\alpha|} - \delta\right)} \exp\left(\lambda  \int_0^t  G(\phi_s(x))  ds\right) \mu_0(dx)  \\ &\leq  \int_0^{\phi_t^{-1}\left(\tfrac{\beta}{|\alpha|} - \delta\right)} \exp\left(\lambda  \int_0^t  G(\Pi_s(1;x))  ds\right) \mu_0(dx) \end{align*}
The second inequality holds above because $\langle G (\cdot) \rangle_{\mu_s} = \gamma M_1^{\mu} - |\alpha| M_2^{\mu} \geq |\alpha| \left( M_1^{\mu} - M_2^{\mu} \right) \geq 0$ (as $\gamma > |\alpha|$ for the HD game and $x \geq x^2$ for $x \in [0,1]$). The third inequality holds because $G(\phi_t(x)) \leq G\left(\Pi_t(1:x)\right)$ (as $G(x)$ is increasing for $x \in [0,\beta/|\alpha|]$ and $\phi_t(x) \leq \Pi_t(1;x)$ from Equation \ref{eq:HDPirankings}). %
Because we see from Equations \ref{eq:Piintegral} and \ref{eq:Pisquaredintegral} that $\int_0^t G(\Pi_t(1;x)) ds = \gamma \int_0^t \Pi_s(1;x) ds - |\alpha| \int_0^t \Pi_s(1;x)^2 ds$ is bounded in $t$, we can deduce that $\exists M > 0$ such that
\[ \int_0^{\tfrac{\beta}{|\alpha|} - \delta} \psi(x) \mu_t(dx)   \leq M ||\psi||_{\infty} \int_0^{\phi_t^{-1}\left(\tfrac{\beta}{|\alpha|} - \delta\right)} \mu_0(dx) \leq M  ||\psi||_{\infty} \mu_0\left(\left[0,\phi_t^{-1}\left(\frac{\beta}{|\alpha|} - \delta \right) \right] \right)  %
   \]
Then we can use the ranking of backward characteristics from Equation \ref{eq:HDPirankingsbackwards} to see that 
\[  \int_0^{\tfrac{\beta}{|\alpha|} - \delta} \psi(x) \mu_t(dx)   \leq  M  ||\psi||_{\infty} \mu_0\left( \left[0, \Pi_t^{-1}\left(1 - \frac{\beta}{|\alpha|};x \right) \right] \right)\]
After a large time $t$, we see from Equation \ref{eq:Piinvt} and our assumption about the initial distribution near $x=0$ that $\exists C > 0$ such that
\begin{align*} \mu_t \left([0,\Pi_t^{-1}\left(1 - \tfrac{\beta}{|\alpha|};x \right) \right) &\approx C \left(  \frac{\beta x}{|\alpha| x + \left( \beta - |\alpha| x \right) e^{\frac{\beta}{|\alpha|} \left(|\alpha| - \beta\right) t}} \right)^{ \zeta} \\ &= C\left( \frac{\beta x}{|\alpha| x e^{-\frac{\beta}{|\alpha|} \left(|\alpha| - \beta\right) t} + \beta - |\alpha| x } \right)^{\zeta}e^{-\frac{\beta}{|\alpha|} \left(|\alpha| - \beta\right) t}  \end{align*}
Because $x \in [0, \frac{\beta}{|\alpha|} - \delta)$ and $\beta, |\alpha|x e^{-\frac{\beta}{|\alpha|} \left(|\alpha| - \beta\right) t} > 0$, we further see that 
\[ \mu_t \left([0,\Pi_t^{-1}\left(1 - \tfrac{\beta}{|\alpha|};x \right) \right)  \leq C  \left( \frac{\beta}{\beta - |\alpha| x} \right)^{\zeta}  e^{-\frac{\beta}{|\alpha|} \left(|\alpha| - \beta\right) t} \leq C \left(\frac{\beta}{|\alpha| \delta} \right)^{\zeta} e^{-\frac{\beta}{|\alpha|} \left(|\alpha| - \beta\right) t} \]
and, after choosing the test function $\psi(x) \equiv 1$, we can use the fact that $\beta < |\alpha|$ for the HD game to deduce that 
\[ \mu_t\left(\left[0,\tfrac{\beta}{|\alpha|} - \delta\right] \right) = \int_0^{\tfrac{\beta}{|\alpha|} - \delta} \mu_t(dx) \leq \left[\frac{C M \beta^{\zeta}}{\left(|\alpha| \delta\right)^{\zeta}}\right] %
e^{-\frac{\beta}{|\alpha|} \left(|\alpha| - \beta\right) t}\to 0 \: \: \mathrm{as} \: \: t \to \infty \] \end{proof}

\begin{lemma} \label{lem:HDleftint} Suppose $\mu_0(dx)$ has \holder exponent $\zeta > 0$ near the all-defector group $x=0$. For any $a \in \left[0,\frac{\beta}{|\alpha|}\right)$, there exists an $A_a >  -\infty$ such that $\int_0^{\infty} j_a(s) ds > A_a$. 
  \end{lemma}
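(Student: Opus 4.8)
The plan is to identify $j_a(s)$ with the gap between the population's mean group payoff and the payoff of the fixed composition $a$. Writing $G(x)=\gamma x-|\alpha|x^2$ as in equation \ref{eq:grouppayoffparamsimplified}, the definition \ref{eq:jas} becomes
\[ j_a(s)=\int_0^1\big(G(x)-G(a)\big)\,\mu_s(dx)=\langle G(\cdot)\rangle_{\mu_s}-G(a). \]
Since the conclusion we want is a uniform-in-$t$ lower bound on $\int_0^t j_a(s)\,ds$, a pointwise bound such as $j_a(s)\ge -G(a)$ is useless (it integrates to $-\infty$); instead I would show that $G(x)-G(a)$ can be negative only on a set of $\mu_s$-measure that decays exponentially in $s$, which is exactly the content of Lemma \ref{lem:HDleftprob}.

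To do this I would first record the shape of $G$ for the Hawk--Dove game. We have $G(x)=x(\gamma-|\alpha|x)\ge 0$ on $[0,1]$ (since $\gamma>|\alpha|$, i.e.\ $R>P$), and $G$ is a downward parabola with vertex $x^*=\gamma/(2|\alpha|)$ satisfying $x^*>x^{eq}=\beta/|\alpha|$, as established in Section \ref{sec:grouppayoff} from $\gamma>2\beta$; in particular $G$ is increasing on $[0,x^*]$. A direct computation gives
\[ G(1)-G(x^{eq})=(\gamma-|\alpha|)-\tfrac{\beta}{|\alpha|}(\gamma-\beta)=\frac{(|\alpha|-\beta)\big(\gamma-|\alpha|-\beta\big)}{|\alpha|}=\frac{(|\alpha|-\beta)(R-S)}{|\alpha|}>0. \]
Hence $G$ is increasing on $[0,x^*]$ and decreasing on $[x^*,1]$ with $G(1)\ge G(x^{eq})\ge G(a)$ for every $a\le x^{eq}$, so $G(x)\ge G(a)$ for all $x\in[a,1]$; the only place where $G(x)-G(a)<0$ is possible is $x\in[0,a)$, and there $0\le G(x)\le G(a)$.

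Then I would split and estimate:
\[ j_a(s)=\int_{[0,a)}\big(G(x)-G(a)\big)\mu_s(dx)+\int_{[a,1]}\big(G(x)-G(a)\big)\mu_s(dx)\ \ge\ -G(a)\,\mu_s\big([0,a)\big). \]
Since $a<\beta/|\alpha|$, fix $\delta\in(0,\beta/|\alpha|-a)$, so $[0,a)\subset[0,\beta/|\alpha|-\delta]$. The proof of Lemma \ref{lem:HDleftprob} produces an explicit constant $K_\delta$ with $\mu_s([0,\beta/|\alpha|-\delta])\le K_\delta e^{-\frac{\beta}{|\alpha|}(|\alpha|-\beta)s}$, and the rate $\frac{\beta}{|\alpha|}(|\alpha|-\beta)$ is positive because $\beta>0$ and $|\alpha|>\beta$ for the HD game; thus $\int_0^\infty\mu_s([0,a))\,ds\le K_\delta|\alpha|/\big(\beta(|\alpha|-\beta)\big)<\infty$. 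Consequently $\int_0^t j_a(s)\,ds\ge -G(a)\,K_\delta|\alpha|/\big(\beta(|\alpha|-\beta)\big)=:A_a>-\infty$ for every $t$ (the case $a=0$ being immediate since then $j_0(s)=\langle G(\cdot)\rangle_{\mu_s}\ge 0$), which is the claim.

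The main obstacle is the middle step: ruling out a persistent negative contribution from groups rich in cooperators (i.e.\ $x$ near $1$), since on any region where $G<G(a)$ the running integral could otherwise diverge to $-\infty$. This is handled precisely by the HD-specific inequality $G(1)\ge G(x^{eq})$, which reduces to $R\ge S$; once that is in place everything follows from the exponential decay already proved in Lemma \ref{lem:HDleftprob}, and no new estimates are needed. (Note that this is also where the argument would genuinely break for a general game lacking the ordering $R\ge S$.)
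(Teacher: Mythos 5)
Your proposal is correct, and it follows the same skeleton as the paper's proof: split $j_a(s)$ into the contribution from $[0,a]$ and from $[a,1]$, show the latter is nonnegative, and control the former by the exponential decay of $\mu_s\left(\left[0,\tfrac{\beta}{|\alpha|}-\delta\right]\right)$ established in Lemma \ref{lem:HDleftprob}. The one place where you genuinely diverge is the nonnegativity of the $[a,1]$ piece. The paper rescales $[a,1]$ to $[0,1]$ via $z=\frac{x-a}{1-a}$, rewrites the integrals in terms of moments $M_1^{\nu},M_2^{\nu}$ of the rescaled conditional measure, and concludes from $M_1^{\nu}\geq M_2^{\nu}$ together with $\gamma-(|\alpha|+\beta)=R-S>0$. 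You instead argue directly from the shape of $G$: it is increasing up to $x^*=\tfrac{\gamma}{2|\alpha|}>x^{eq}$ and decreasing thereafter, and the computation $G(1)-G(x^{eq})=\frac{(|\alpha|-\beta)(R-S)}{|\alpha|}>0$ shows $G(x)\geq G(a)$ on all of $[a,1]$ for $a\leq x^{eq}$. Both arguments pivot on exactly the same payoff inequality $R>S$, so nothing is gained in generality, but your version is more elementary and makes transparent \emph{why} the cooperative tail cannot contribute a persistently negative term; the paper's moment computation has the advantage of recycling machinery already set up in \cite{cooney2019replicator}. Your bound $j_a(s)\geq -G(a)\,\mu_s([0,a))$ is in fact marginally sharper than the paper's $j_a^1(s)\geq -\gamma a\,\mu_s([0,a])$ since $G(a)\leq\gamma a$, and your handling of the $a=0$ case and the choice of $\delta<\tfrac{\beta}{|\alpha|}-a$ are both fine.
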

  
  \begin{proof}  We can decompose the integrand $j_s(a)$ as 
  \begin{align*} j_a(s) = \int_0^a \left[ \gamma \left( x - a\right) - |\alpha| \left( x^2 - a^2 \right) \right] \mu_t(dx) + \int_a^1 \left[ \gamma \left( x - a\right) - |\alpha| \left( x^2 - a^2 \right) \right] \mu_t(dx) := j_a^1(s) + j_a^2(s)  \end{align*} 
  Following the proof of Lemma 4.2 in \cite{cooney2019replicator}, we are able to show that $j_a^2(s) \geq 0$ using the change of variable $z = \frac{x-a}{1-a}$ mapping $[a,1] \to [0,1]$ and the measure $\nu_t(dz) := \frac{1-a}{p_a} \mu_t(dz)$ with $p_a = \mu_t\left([a,1] \right)$. As calculated in \cite{cooney2019replicator}, this approach allows us express the integrals in $j_a^2(s)$ in terms of the moments of $\nu_t(dx)$, denoted by $M_j^{\nu} := \int_0^1 z^j \nu_t(dz)$, yielding 
  \begin{subequations} \begin{align*} \int_a^1 \left(x - a \right) \mu_t(dx) &=  p_a \left(1-a\right) M_1^{\nu} \\ \int_a^1 \left(x^2 - a^2 \right) \mu_t(dx) &=  p_a \left(1 - a \right)\left[ 2 a M_1^{\nu} + \left( 1 - a \right) M_2^{\nu} \right] \end{align*} \end{subequations} 
  Using these expressions, we can compute $j_a^2(s)$ as 
  \begin{align*} j_a^2(s) &= \gamma \int_a^1 \left(x - a \right) \mu_t(dx) - |\alpha| \int_a^1 \left( x^2 - a^2 \right) \mu_t(dx) \\ &= p_a \left( 1 - a \right) \left[ \left( \gamma - 2 |\alpha| a \right) M_1^{\nu} - |\alpha| \left( 1 - a \right) M_2^{\nu} \right] \\ &= p_a \left( 1 - a \right) \left\{ \left[\gamma - |\alpha| \left(1+ a\right) \right] M_1^{\nu} + |\alpha| \left(1-a\right) \left(M_1^{\nu} - M_2^{\nu}\right) \right\} \\ &\geq p_a \left( 1 - a \right)  \left[\gamma - |\alpha| \left(1+ a\right) \right] M_1^{\nu}   \end{align*}
  where we noted that $M_1^{\nu} \geq M_2^{\nu}$ because $\supp(\nu_t(dz)) \subseteq [0,1]$. Because $a \leq \frac{\beta}{|\alpha|}$ by assumption, we further see that \[j_a^2(s) \geq p_a \left(1-a\right) \left[ \gamma - \left( |\alpha| + \beta\right)\right] = p_a \left(1 - a \right) \left( R - S \right) \geq 0 \]
 Now turning our attention to $j_a^1(s)$, we can adapt the approach from \cite{cooney2019replicator} to get a lower bound for $j_a^1(s)$ and $\int_0^t j_a^1(s) ds$. We can observe that 
 \begin{align*} j_a^1(s) &= \gamma \int_0^a \left( x - a \right) \mu_t(dx) - |\alpha| \int_0^a \underbrace{\left(x^2 - a^2\right)}_{\leq 0}  \mu_t(dx) \geq - \gamma a \int_0^1 \mu_t(dx) = - \gamma a \mu_t\left([0,a] \right) \end{align*}
 Because $a < \frac{\beta}{|\alpha|}$, there is a $\delta_a > 0$ such that $a = \frac{\beta}{|\alpha|} - \delta_a$. This allows us to use the proof of Lemma \ref{lem:HDleftprob} to see that $\exists C, M < \infty$ such that $j_a^1(s)$ satisfies the bound
 \[j_a^1(s) \geq - \gamma a \mu_t\left(\left[0,\tfrac{\beta}{|\alpha|} - \delta_a \right] \right) \geq - \frac{CM\gamma a \beta^{\zeta}}{|\alpha|^{\zeta} \delta_a^{\zeta}} e^{- \tfrac{\beta}{|\alpha|} \left(|\alpha| - \beta\right) t} \]
 Using this bound and that $j_a^2(s)$, we can bound the integral of $j_a(s)$ as
 \begin{align*} \int_0^t j_a(s) ds = \int_0^t \left(j_a^1(s) + j_a^2(s) \right) ds & \geq - \int_0^t  \frac{CM\gamma a \beta^{\zeta}}{|\alpha|^{\zeta} \delta_a^{\zeta}} e^{- \tfrac{\beta}{|\alpha|} \left(|\alpha| - \beta\right) s} ds \\ &= \frac{CM\gamma a \beta^{\zeta-1}}{|\alpha|^{\zeta-1} \delta_a^{\zeta} \left(|\alpha| - \beta\right)} \left(e^{- \tfrac{\beta}{|\alpha|} \left(|\alpha| - \beta\right) t} - 1   \right)  \end{align*}
and then we see that in the long-time limit that \[\int_0^{\infty} j_a(s) ds \geq - \frac{CM\gamma a \beta^{\zeta-1}}{|\alpha|^{\zeta-1} \delta_a^{\zeta} \left(|\alpha| - \beta\right)} \geq -\infty \] \end{proof}

Now we are ready to characterize the conditions under which the population converges to a delta-function at the within-group equilibrium of the HD game. In Proposition \ref{prop:HDdelta}, we use Lemmas \ref{lem:HDleftprob} and \ref{lem:HDleftint} and the comparison principles for HD characteristic curves to show that there exists a critical level of between-group selection $\lambda^*$ such that the population converges to $\delta\left(x - \frac{\beta}{|\alpha|}\right)$ as $t \to \infty$ when $\lambda < \lambda^*$.
\begin{proposition} \label{prop:HDdelta} Suppose $\mu_0(dx)$ has \holder exponents of $\zeta$ near $x = 0$ and $\theta$ near $x=1$. If $\lambda \left(\gamma - \left(\beta +  |\alpha| \right) \right) <  |\alpha| \theta$, then $\mu_t(dx) \rightharpoonup \delta\left(x - \tfrac{\beta}{|\alpha|}\right)$ as $t \to \infty$.   \end{proposition}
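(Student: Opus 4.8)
The plan is to mirror the argument of Proposition~\ref{prop:PD13delta}, with the stable defector equilibrium $x=0$ there playing the role here of the within-group Hawk--Dove equilibrium $x^{eq}:=\tfrac{\beta}{|\alpha|}$. Fix a continuous test function $\psi$ and $\epsilon>0$, and by uniform continuity pick $\delta>0$ with $|\psi(x)-\psi(x^{eq})|<\epsilon$ whenever $|x-x^{eq}|\le\delta$. Splitting $[0,1]$ into $[0,x^{eq}-\delta]$, $[x^{eq}-\delta,x^{eq}+\delta]$ and $[x^{eq}+\delta,1]$ gives
\[ \Bigl|\int_0^1\psi(x)\,\mu_t(dx)-\psi(x^{eq})\Bigr|\le \epsilon+2\|\psi\|_{\infty}\,\mu_t\bigl([0,x^{eq}-\delta]\bigr)+2\|\psi\|_{\infty}\,\mu_t\bigl([x^{eq}+\delta,1]\bigr). \]
By Lemma~\ref{lem:HDleftprob} the middle term vanishes as $t\to\infty$, so the real task is to show $\mu_t([x^{eq}+\delta,1])\to 0$.

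Using the push-forward representation, $\mu_t([x^{eq}+\delta,1])=\int_{\phi_t^{-1}(x^{eq}+\delta)}^{1}w_t(\phi_t(y))\,\mu_0(dy)$, and I would bound the two factors separately. For $w_t(\phi_t(y))$ I would rearrange it about a value $a\in(0,x^{eq})$ as in Equation~\eqref{eq:wtphixHDa}: since Lemma~\ref{lem:HDleftint} gives $\int_0^t j_a(s)\,ds\ge A_a>-\infty$, this yields $w_t(\phi_t(y))\le e^{-\lambda A_a}\exp\!\bigl(\lambda\int_0^t[G(\phi_s(y))-G(a)]\,ds\bigr)$. I would then estimate $\int_0^t G(\phi_s(y))\,ds$ for $y\in(x^{eq},1]$ using the comparison bounds $\Xi_s(1;y)\le\phi_s(y)\le\Xi_s(\tfrac{\beta}{|\alpha|};y)$ from Equation~\eqref{eq:HDcharranking} together with the explicit integrals~\eqref{eq:Xiintegral}--\eqref{eq:Xisquaredintegral}; since $G(x)=\gamma x-|\alpha|x^2$ with $\gamma>0$ and $-|\alpha|<0$, this gives $\int_0^t G(\phi_s(y))\,ds\le\gamma\int_0^t\Xi_s(\tfrac{\beta}{|\alpha|};y)\,ds-|\alpha|\int_0^t\Xi_s(1;y)^2\,ds$. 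The crucial point is that the coefficient of the linear-in-$t$ part of this bound is $G(1)$, the group payoff at the \emph{unstable} fixed point: a characteristic crosses the interior maximizer $x^*$ in $O(1)$ time, and can otherwise only linger near $x=1$ (accumulating payoff at rate $G(1)$) or near $x^{eq}$ (rate $G(x^{eq})$), with $G(1)>G(x^{eq})$ for every HD game since $|\alpha|\,(G(1)-G(x^{eq}))=(|\alpha|-\beta)(\gamma-\beta-|\alpha|)=(T-R)(R-S)>0$. Hence one obtains a bound $w_t(\phi_t(y))\le C(\delta,a)\,e^{\lambda(G(1)-G(a))t}$ that is uniform in $y\in(x^{eq},1]$.

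For the second factor, $\mu_0([\phi_t^{-1}(x^{eq}+\delta),1])$ behaves like $C\,(1-\phi_t^{-1}(x^{eq}+\delta))^{\theta}$ by the Hölder hypothesis at $x=1$, so everything reduces to the rate at which $\phi_t^{-1}(x^{eq}+\delta)\to 1$. The naive comparison $\phi_t^{-1}(x)\ge\Xi_t^{-1}(\tfrac{\beta}{|\alpha|};x)$ from~\eqref{eq:HDcharbackwardranking} only gives the suboptimal escape rate $\tfrac{\beta}{|\alpha|}(|\alpha|-\beta)$, so, exactly as in Proposition~\ref{prop:PD13delta}, I would refine: for any $\Gamma<1$ there is a time $\tGamma$ past which the backward trajectory has moved above $\Gamma$, and for $z\in[\Gamma,1]$ the backward field obeys $z(1-z)(|\alpha|z-\beta)\ge\Gamma(|\alpha|\Gamma-\beta)(1-z)$, so $1-\phi_t^{-1}(x^{eq}+\delta)\le(1-\Gamma)e^{-\Gamma(|\alpha|\Gamma-\beta)(t-\tGamma)}$ and therefore $\mu_0([\phi_t^{-1}(x^{eq}+\delta),1])\le C'e^{-\Gamma(|\alpha|\Gamma-\beta)\theta t}$ for large $t$, with $\Gamma(|\alpha|\Gamma-\beta)\to|\alpha|-\beta$ as $\Gamma\uparrow1$. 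Multiplying the two estimates bounds $\mu_t([x^{eq}+\delta,1])$ by a constant times $e^{[\lambda(G(1)-G(a))-\Gamma(|\alpha|\Gamma-\beta)\theta]t}$. Because the hypothesis $\lambda(\gamma-(\beta+|\alpha|))<|\alpha|\theta$ is precisely $\lambda(G(1)-G(x^{eq}))<(|\alpha|-\beta)\theta$ (via the identity $|\alpha|(G(1)-G(x^{eq}))=(|\alpha|-\beta)(\gamma-\beta-|\alpha|)$ used above), one can take $a$ close enough to $x^{eq}$ and $\Gamma$ close enough to $1$ that this exponent is negative; then $\mu_t([x^{eq}+\delta,1])\to 0$, and letting $\epsilon\downarrow0$ gives $\mu_t(dx)\rightharpoonup\delta(x-x^{eq})$.

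I expect the main obstacle to be the uniform-in-$y$ estimate on $w_t(\phi_t(y))$: one must establish that its exponential growth rate in $t$ is governed by $G(1)$, the payoff at the unstable fixed point, rather than by $\max_{[x^{eq},1]}G=G(x^*)$, which is exactly what forces the use of the comparison principles of Section~\ref{sec:Comparison} in place of a crude pointwise bound on $G$ along characteristics, and also why the $\Gamma$-refinement of the backward flow must be carried through in order to hit the sharp threshold rather than only a sufficient condition with a worse constant.
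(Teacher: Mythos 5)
Your proposal is correct and follows essentially the same route as the paper's proof: the same three-interval decomposition handled by Lemma \ref{lem:HDleftprob}, the same rearrangement of $w_t$ about a reference point $a$ via Equation \ref{eq:wtphixHDa} and Lemma \ref{lem:HDleftint}, the same $\Xi$-comparison giving the growth rate $G(1)-G(a)$ (the paper's $H(a)$), and the same $\Gamma$-refinement of the backward characteristics to reach the sharp escape rate $(|\alpha|-\beta)\theta$. The only cosmetic difference is your choice of comparison ODE in the refinement step (rate $\Gamma(|\alpha|\Gamma-\beta)$ versus the paper's $\Gamma(|\alpha|-\beta)$ from $\Xi_t^{-1}(\Gamma;\cdot)$), both of which tend to $|\alpha|-\beta$ as $\Gamma\uparrow 1$.
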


\begin{proof} Again, we consider a continuous test function $\psi(x)$ and will show that $\int_0^1 \psi(x) \mu_t(dx) \to \int_0^1 \psi(x) \delta\left(x - \tfrac{\beta}{|\alpha|}\right) dx = \psi\left(\tfrac{\beta}{|\alpha|}\right)$. Using the continuity of $\psi(x)$, we know that $\forall \epsilon> 0$, $\exists \delta$ such that $|\psi(x) - \psi\left( \frac{\beta}{|\alpha|}\right)| < \epsilon$ when $x \in \left[\tfrac{\beta}{|\alpha|} - \delta, \tfrac{\beta}{|\alpha|} + \delta \right]$. Because $\mu_t(dx)$ is a probability distribution, we see that
\begin{align*} \bigg| \int_0^1 \psi(x) \mu_t(dx) - \psi\left(\tfrac{\beta}{|\alpha|}\right) \bigg| 
& \leq \int_0^{\tfrac{\beta}{|\alpha|}-\delta} \bigg| \psi(x) - \psi\left(\tfrac{\beta}{|\alpha|} \right) \bigg| \mu_t(dx)  +  \int_{\tfrac{\beta}{|\alpha|} + \delta} ^{1} \bigg| \psi(x) - \psi\left(\tfrac{\beta}{|\alpha|} \right) \bigg| \mu_t(dx)  \\ &+ \int_{\tfrac{\beta}{|\alpha|} - \delta}^{\tfrac{\beta}{|\alpha|} + \delta}  \bigg| \psi(x) - \psi\left(\tfrac{\beta}{|\alpha|} \right) \bigg| \mu_t(dx)  \\ & \leq \epsilon  + 2||\psi||_{\infty}  \int_0^{\tfrac{\beta}{|\alpha|}-\delta} \mu_t(dx)  +  2||\psi||_{\infty}   \int_{\phi_t^{-1}\left(\tfrac{\beta}{|\alpha|} + \delta\right)} ^{1} w_t(\phi_t(x)) \mu_0(dx) \\ & \leq \epsilon + 2 ||\psi||_{\infty} \mu_t\left(\left[0,\tfrac{\beta}{|\alpha|} - \delta\right]\right)  +  2||\psi||_{\infty}   \int_{\phi_t^{-1}\left(\tfrac{\beta}{|\alpha|} + \delta\right)} ^{1} w_t(\phi_t(x)) \mu_0(dx)
\end{align*}
From Lemma \ref{lem:HDleftprob}, we know that for $\delta > 0$ that $ \mu_t\left(\left[0,\tfrac{\beta}{|\alpha|} - \delta\right]\right)   \to 0$ as $t \to \infty$, so there exists a time $T_{1}$ such that $\mu_t\left(\left[0,\tfrac{\beta}{|\alpha|} - \delta\right]\right) \leq \frac{\epsilon}{2 ||\phi||_{\infty}}$ for $t > T_1$. For such $t$, we have that 
\[ \bigg| \int_0^1 \psi(x) \mu_t(dx) - \psi\left(\tfrac{\beta}{|\alpha|}\right) \bigg| \leq 2 \epsilon + 2||\psi||_{\infty}   \int_{\phi_t^{-1}\left(\tfrac{\beta}{|\alpha|} + \delta\right)} ^{1} w_t(\phi_t(x)) \mu_0(dx) \]
Using Equation \ref{eq:wtphixHDa}, we can see for $a \in [0,\tfrac{\beta}{|\alpha|})$ that 
\[w_t(\phi_t(x)) = \exp\left(\lambda\left[ \int_0^t \left( \gamma \phi_s(x) - |\alpha| \psi_s(x)^2 \right) ds - \left( \gamma a - |\alpha|a^2\right) t - \int_0^t j_s(a) ds  \right] \right) \]
From Lemma \ref{lem:HDleftint}, we know that there exists an $A(a) > -\infty$ such that $\int_0^t j_s(a) ds \geq A(a)$ for all $t$, and therefore we know that  
\[w_t(\phi_t(x)) \leq e^{- \lambda \left(\gamma a - |\alpha| a^2 \right) t-\lambda A(a)}\exp\left(\lambda \int_0^t \left( \gamma \phi_s(x) - |\alpha| \psi_s(x)^2 \right) ds \right) \]
Using the ranking of characteristic curves from Equation \ref{eq:HDcharranking}, we know that $\phi_s(x) \leq \Xi_s\left(\tfrac{\beta}{|\alpha|} ;x\right)$ and that $-\phi_s(x)^2 \leq - \Xi_s(1;x)^2$. This allows us to say that 
\[w_t(\phi_t(x)) \leq e^{- \lambda \left(\gamma a - |\alpha| a^2 \right) t-\lambda A_a}\exp\left(\lambda\left[ \int_0^t \left( \gamma \Xi_s(\tfrac{\beta}{|\alpha|};x) - |\alpha| \Xi_s(1;x)^2 \right) ds   \right] \right) \]
  Using the integrals along simplified characteristic curves $\Xi_t(k;x_0)$ from Equations \ref{eq:Xiintegral} and \ref{eq:Xisquaredintegral}, there is a bound $M_1(\delta)$ such that 
  \[w_t(\phi_t(x)) \leq   e^{- \lambda \left(\gamma a - |\alpha| a^2 \right) t-\lambda A_a} \left(M_1(\delta) e^{\lambda \left(\gamma - |\alpha| \right) t} \right) =  M_1(\delta) e^{\lambda \left(1 - a \right) \left(\gamma  - \left( 1 + a \right) |\alpha|  \right) t} e^{-\lambda A(a)} \]
  The exponential rate \begin{equation} \label{eq:Hofa} H(a) := \left(1 - a \right) \left( \gamma - \left(1 + a \right) |\alpha| \right) \end{equation} is a continuous and decreasing function of $a$, which ranges between \begin{equation} \label{eq:Hinequality}  \left(\frac{|\alpha| - \beta}{|\alpha|} \right) \left( \gamma - \left(|\alpha| + \beta\right)\right) = H\left(\frac{\beta}{|\alpha|}\right)  \leq H(a) \leq H(0) = \gamma - |\alpha| \end{equation}  for $a \in [0,\tfrac{\beta}{|\alpha|}]$. 
We can rewrite the hypothesis $\lambda  \left( \gamma - \left(|\alpha| + \beta\right)\right) <  |\alpha| \theta$ in terms of our use our expression for $H\left(\tfrac{\beta}{|\alpha|}\right)$. Because $|\alpha| > \beta > 0$ for the HD game, we can see that 
\begin{equation}  \label{eq:HDcondition} \lambda  \left( \gamma - \left(|\alpha| + \beta\right)\right) <  |\alpha| \theta \Longleftrightarrow \lambda H\left( \frac{\beta}{|\alpha|} \right) :=  \left(\frac{|\alpha| - \beta}{|\alpha|} \right) \left( \gamma - \left(|\alpha| + \beta\right)\right) < \left(|\alpha| - \beta\right) \theta  \end{equation}
so we will use the form $\lambda H(\left(\tfrac{\beta}{|\alpha}\right) < \left( |\alpha| - \beta \right) \theta$ in order to best make use of the estimate
\[w_t(\phi_t(x)) \leq  M_1(\delta) e^{\lambda H(a) t} e^{-\lambda A(a)}.  \]
If $\lambda H\left(\tfrac{\beta}{|\alpha|}\right) < \left( |\alpha| - \beta \right) \theta$, we can use the fact that $H(a)$ is continuous and decreasing to deduce that either there exists an $a^* \in [0,\tfrac{\beta}{|\alpha|}]$ such that $\lambda H(a^*) =   \left( |\alpha| - \beta \right) \theta$ or that $\lambda H(0) < \left(  |\alpha| - \beta \right)  \theta$. In either case, this tells us that we can choose $a^{**} = \tfrac{1}{2} \left(\max(a^*,0)+\tfrac{\beta}{|\alpha|} \right)$ which satisfies $\lambda H(a^{**}) <   \left(|\alpha| - \beta \right) \theta$. We will stick this choice of $a^{**}$ for the rest of the proof, which means that we can write
\[w_t(\phi_t(x)) \leq M_1(\delta) e^{\lambda H(a^{**})t} e^{-\lambda A(a)}  \]
 Now that we have bounded $w_t(\phi_t(x))$, we can say for $t > T_1$ that
 \begin{equation} \label{eq:HD2eps} \bigg| \int_0^1 \psi(x) \mu_t(dx) - \psi\left(\tfrac{\beta}{|\alpha|}\right) \bigg| \leq 2 \epsilon + 2||\psi||_{\infty} M_1(\delta) e^{\lambda H(a^{**})t} e^{-\lambda A(a)} \mu_0\left( \left[\phi_t^{-1}\left( \tfrac{\beta}{|\alpha|} + \delta \right),1 \right] \right)  \end{equation}
To consider the effect of $ \mu_0\left( \left[\phi_t^{-1}\left( \tfrac{\beta}{|\alpha|} + \delta \right),1 \right] \right)$ on the long-time behavior of $\mu_t(dx)$, we consider separately the cases in which $\lambda H(a^{**}) < \frac{\beta}{|\alpha|} \left( |\alpha| - \beta\right) \theta < \left( |\alpha| - \beta\right) \theta$ and in which $\frac{\beta}{|\alpha|} \left( |\alpha| - \beta\right) \theta < \lambda H(a^{**}) < \left( |\alpha| - \beta \right) \theta$. When $\lambda H(a^{**}) < \tfrac{\beta}{| \alpha|} \left( |\alpha| - \beta \right) \theta$, we use the ranking of backward characteristics from Equation \ref{eq:HDcharbackwardranking} to recall that $ \Xi_t^{-1}\left(\tfrac{\beta}{|\alpha|};x\right) \leq \phi_t^{-1}(x)$ for $x \in [\tfrac{\beta}{\alpha},1]$, yielding 
\begin{align*} \mu_0\left(\left[ \phi_t^{-1}(x),1 \right] \right) \leq \mu_0\left(\left[\Xi_t^{-1} \left(\tfrac{\beta}{|\alpha|};x\right),1\right] \right) &= \mu_0\left(\left[\frac{\left(1 - x\right) \beta + \left(|\alpha| x - \beta \right) e^{-\frac{\beta}{|\alpha|}\left( \beta - |\alpha| \right)t}}{\left(1-x \right) |\alpha| + \left( |\alpha| x - \beta \right) e^{-\frac{\beta}{|\alpha|}\left( \beta - |\alpha| \right) t} },1 \right]\right)  \\ &= \mu_0\left( \left[1 - \frac{\left(1-x \right)\left(|\alpha| - \beta\right)  }{\left(1-x \right) |\alpha| + \left( |\alpha| x - \beta \right)  e^{-\frac{\beta}{|\alpha|}\left( \beta - |\alpha| \right) t}} ,1\right] \right) \\ &\leq \mu_0 \left( \left[ 1 - \frac{ \left(|\alpha| - \beta\right)}{|\alpha| \delta} \, e^{-\frac{\beta}{|\alpha|}\left( |\alpha| - \beta \right) t} , 1\right]\right)  \end{align*} 
Denoting $K := \frac{\left(1 - \delta\right) \left(|\alpha| - \beta\right)}{|\alpha| \delta}$, we can use our assumption about the \holder exponent of $\mu_0(dx)$ near $x=1$ to show, for sufficiently large time $t$, that
\begin{equation} \label{eq:HDsimpleineq} \mu_0\left(\left[ \phi_t^{-1}(x),1 \right] \right) \leq \mu_0( [1 - K e^{-\frac{\beta}{|\alpha|}\left( |\alpha| - \beta \right) t} , 1  ] ) \approx C K^{\theta}  e^{-\tfrac{\beta}{|\alpha|}\left( |\alpha| - \beta \right) \theta t}    \end{equation}
Because this holds for all $x \in [\tfrac{\beta}{|\alpha},1]$, in the case that $\lambda H(a^{**}) < \frac{\beta}{|\alpha|} \left( |\alpha| - \beta\right) \theta$, we see that 
\begin{dmath} \label{eq:HDsimpleconverge} 2||\psi||_{\infty} M_1(\delta) e^{\lambda H(a^{**})t} e^{-\lambda A(a)} \mu_0\left( \left[\phi_t^{-1}\left( \tfrac{\beta}{|\alpha|} + \delta \right),1 \right] \right) \\ \leq 2 ||\psi||_{\infty} C M_1(\delta)   K^{\theta}  e^{-\lambda A(a)}    \exp\left(\left[\lambda H(a^{**}) -\tfrac{\beta}{|\alpha|}\left( |\alpha| - \beta \right) \theta \right] t\right) \condition{as $t \to \infty$.}    \end{dmath}
For the alternate case in which $\frac{\beta}{|\alpha|} \left( |\alpha| - \beta\right) \theta < \lambda H(a^{**}) < \left( |\alpha| - \beta\right) \theta$, we must use a more refined comparison principle. From this condition, we know that there exists $k \in (\tfrac{\beta}{|\alpha|},1)$ such that $\lambda H(a^{**}) = k \left( |\alpha| - \beta \right) \theta$. Further, this means that choosing $\Gamma := \frac{k+1}{2}$ yields $\lambda H(a^{**}) < \Gamma \left(|\alpha| - \beta\right) \theta$. Because $\phi_t^{-1}\left(\tfrac{\beta}{|\alpha|} + \delta\right)$ travels from $\tfrac{\beta}{|\alpha|} + \delta$ to $1$ as $t$ increases, we know that there exists a time $\tGamma$ such that $\phi^{-1}_{\tGamma}(\tfrac{\beta}{|\alpha|} +\delta) = \Gamma$ and $\phi_t^{-1}(\tfrac{\beta}{|\alpha|} +\delta) > \Gamma$ for $t >  \tGamma$. Further, we see, for $x \geq \Gamma$, that 
\[ \Gamma \left( 1 -x \right) \left(|\alpha| x - \beta \right) \leq x \left(1-x\right)\left( |\alpha| x - \beta \right) \: \: \mathrm{and} \: \: \dsddx{}{t} \Xi_t^{-1}\left(\Gamma;x\right) \leq \dsddx{}{t} \phi_t^{-1}(x). \]
For $t > T^{\Gamma}_{\delta}$ and $x \geq \tfrac{\beta}{|\alpha|} + \delta$, this tells us that the characteristic curves travel faster than solutions to $\Xi^{-1}_t\left(\Gamma;x\right)$, so we know that $\Xi_t^{-1}\left(\Gamma;x\right) \leq \phi_t^{-1}(x)$. We can track solutions of our slower curves $\ddx{}{t} \Xi_t^{-1}\left(\Gamma;\cdot \right)$ located at $\Gamma$ at time $\tGamma$ (coinciding with the backwards characterstic curve $\phi_t^{-1}(\delta)$ at time $\tGamma$), seeing from Equation \ref{eq:Xiinvt} that its subsequent tracjectory for $t > \tGamma$ is given by
\begin{align*} \Xi_{t-\tGamma}^{-1}\left(\Gamma;\Gamma\right) = \frac{\left(1 - \Gamma\right) \, \beta \: + \left(|\alpha| \Gamma - \beta \right) e^{\Gamma \left(|\alpha| - \beta\right)\left(t - \tGamma\right)}}{\left(1 - \Gamma\right) |\alpha| + \left(|\alpha| \Gamma - \beta \right) e^{\Gamma \left(|\alpha| - \beta \right) \left(t-\tGamma\right)}} &= 1 - \frac{\left(1 - \Gamma\right) \left(|\alpha| - \beta\right)}{\left(1 - \Gamma\right) |\alpha| + \left(|\alpha| \Gamma - \beta \right) e^{\Gamma \left(|\alpha| - \beta \right) \left(t-\tGamma\right)}} \\ &\leq 1 - \left(\frac{(1 - \Gamma) \left(|\alpha| - \beta\right) e^{\Gamma \left( |\alpha| - \beta\right) \tGamma}}{|\alpha| \Gamma - \beta} \right) e^{-\Gamma \left( |\alpha| - \beta\right) t}   \end{align*}
Denoting $K_{\Gamma} := \frac{(1 - \Gamma) \left(|\alpha| - \beta\right) e^{\Gamma \left( |\alpha| - \beta\right) \tGamma}}{|\alpha| \Gamma - \beta}$, and using our assumption about the \holder exponent of $\mu_0(dx)$ near $x=1$, we see, for sufficiently large $t$ that
\[\mu_0\left( \left[\phi_t^{-1}(x),1 \right] \right) \leq \mu_0\left( \left[ \Xi_{t-\tGamma}^{-1}\left(\Gamma;\Gamma\right),1 \right]\right) \leq \mu_0\left( \left[1 - K_{\Gamma}e^{-\Gamma \left( |\alpha| - \beta\right) t} , 1  \right] \right) \approx C K_{\Gamma}^{\theta} e^{-\Gamma \left( |\alpha| - \beta\right) t} \]
Because this holds for $x \in (\tfrac{\beta}{|\alpha|} +\delta, 1]$ and we chose $\Gamma$ such that $\lambda H(a^{**}) < \Gamma \left(|\alpha| - \beta\right) \theta$, we can conclude that 
\begin{dmath} \label{eq:HDcarefulconverge} 2||\psi||_{\infty} M_1(\delta) e^{\lambda H(a^{**})} e^{-\lambda A(a)} \mu_0\left( \left[\phi_t^{-1}\left( \tfrac{\beta}{|\alpha|} + \delta \right),1 \right] \right) \\ \leq 2 ||\psi||_{\infty} C M_1(\delta)   K_{\Gamma}^{\theta}  e^{-\lambda A(a)}    \exp\left(\left[\lambda H(a^{**}) -\Gamma \left( |\alpha| - \beta \right) \theta \right] t\right) \condition{as $t \to \infty$.}    \end{dmath}
From Equations \ref{eq:HDsimpleconverge} and \ref{eq:HDcarefulconverge}, we can see that $\mu_t\left(\left[\tfrac{\beta}{|\alpha|} + \delta,1 \right] \right) \to 0$ as $t \to \infty$ whenever $\lambda H(a^{**}) < \left(|\alpha| - \beta\right) \theta$. %
Recalling that we chose $a^{**}$ in alignment with the equivalent conditions of Equation \ref{eq:HDcondition}, we can see that when the inequality $\lambda H(\tfrac{\beta}{\alpha}) < \left(|\alpha| - \beta\right) \theta$ holds, there exists a time $T_2$ such that for $t > T_2$, $\mu_t\left(\left[\tfrac{\beta}{|\alpha|} + \delta,1 \right] \right) < \epsilon$. Then, returning to Equation \ref{eq:HD2eps}, we see for $t > \max\left(T_1,T_2 \right)$, that 
\[ \bigg| \int_0^1 \psi(x) \mu_t(dx) - \psi\left(\tfrac{\beta}{|\alpha|}\right) \bigg| \leq 3 \epsilon \]
confirming that $\int_0^1 \psi(x) \mu_t(dx) \to \int_0^1 \psi(x) \delta\left(x - \tfrac{\beta}{|\alpha|}\right) dx$, and therefore $\mu_t(dx) \rightharpoonup \delta\left(x - \tfrac{\beta}{|\alpha|}\right)$ when the condition of Equation \ref{eq:HDcondition} holds.
  \end{proof}

\subsection{Steady States of General Multilevel HD}\label{sec:HDsteady}

We look for density steady state solutions $f(x)$ for the multilevel HD dynamics, which satisfy
\begin{equation} \label{eq:HDsteady}0 = \dsdel{}{x} \left(x(1-x)(\beta - |\alpha| x) f(x) \right) +  \lambda f(x) \left[ \gamma x - |\alpha| x^2 - \left( \gamma M_1^f  - |\alpha| M_2^f \right) \right]  \end{equation}
 From Lemma \ref{lem:HDleftprob}, we know that the probability of having groups below the within-group equilibrium $x^{eq} = \frac{\beta}{|\alpha|}$ vanishes as $t \to \infty$. Therefore we explore steady state solutions with zero density below the within-group equilibrium, which take the piecewise form
 $$f(x) = \left\{
     \begin{array}{cr}
       0 & :  0 \leq x \leq \frac{\beta}{|\alpha|} \\
       p(x) & : \frac{\beta}{|\alpha|} < x \leq 1
     \end{array}
   \right.$$
Because the moments $M_j^f$ depend on $f(x)$, we look for implicit solutions of Equation \ref{eq:HDsteady} taking the moments as given. An implicit expression for $p(x)$ can be found using separation of variables and integration by partial fractions, yielding 
\begin{dmath} \label{eq:HDpofxmoment} p(x) = Z_f^{-1} x^{\beta^{-1} \lambda \left( \gamma M_1^f - |\alpha| M_2^f \right) - 1}  \left(1 - x \right)^{\left(|\alpha| - \beta\right)^{-1} \left[ \lambda \left(\gamma - |\alpha| \right) - \lambda \left( \gamma M_1^f - |\alpha| M_2^f \right) \right] - 1 } \\ \times  \left(|\alpha| x - \beta \right)^{\left(|\alpha| - \beta\right)^{-1} \left[\lambda \left( \beta - \gamma\right) + \lambda \left(\gamma M_1^f - |\alpha| M_2^f \right) \right] - 1}  \end{dmath}
From Proposition \ref{prop:HDholder}, we know that the \holder exponent near $x=1$ is preserved in time for the multilevel HD dynamics. Therefore it is sensible to parametrize steady states by the Hölder exponent $\theta$ of $p(x)$ near $x=1$. In the same manner as was computed for the PD dynamics in section \ref{sec:PDsteady}, we find that $\theta$ is related to the exponent of the $(1-x)$ term by 
\[\theta = \frac{ \lambda (\gamma - |\alpha|)- \lambda (\gamma M_1^f - |\alpha|  M_2^f)}{|\alpha| -\beta} \] %
This equation can also be rewritten to characterize the average payoff of the population at steady state, $\gamma M_1^f - |\alpha| M_2^f$, as the following
\begin{equation} \label{eq:HDsteadyfitness} \langle G(\cdot) \rangle_{f} = \int_0^1 G(x) f(x) dx =  \gamma M_1^f - 2 M_2^f  = (\gamma - |\alpha|)  - \frac{(|\alpha|-\beta) \theta}{\lambda}. \end{equation}  
Substituting this expression into Equation \ref{eq:HDpofxmoment} allows us to obtain an explicit expression for $p(x)$ in terms of the \holder exponent $\theta$, taking the form
\begin{dmath} \label{eq:HDpofxtheta} p^{\lambda}_{\theta}(x) = Z_f^{-1} x^{\beta^{-1} \left[ \lambda \left(|\alpha| - \gamma\right) + \left(|\alpha| - \beta \right) \theta \right] - 1} \left( 1 - x \right)^{\theta - 1} \left(|\alpha| x - \beta\right)^{\beta^{-1} \left[ \lambda \left(\gamma - |\alpha| - \beta \right) - |\alpha| \theta \right] - 1} \end{dmath}
We note that there is a threshold level of between-group selection strength $\lambda^*$ such these steady states are integrable if and only if 
\begin{equation} \label{eq:lambdastar} \lambda > \lambda^* := \left( \frac{|\alpha|}{ \gamma - (\beta + |\alpha|)} \right) \theta, \end{equation} and that if $\gamma = (\beta + |\alpha|)$, it is not possible to have an integrable steady state of this form, regardless of the value of $\lambda$. We can further rearrange our expression for the threshold $\lambda^*$ to obtain
\begin{equation*}  \lambda^* =  \frac{\left(|\alpha| - \beta \right) \theta}{\left( \frac{|\alpha| - \beta}{|\alpha|} \right) \left( \gamma - \left( \beta + |\alpha| \right) \right) } =  \frac{\left(|\alpha| - \beta \right) \theta}{\left(\gamma - |\alpha| \right) - \left( \gamma \left(\tfrac{\beta}{|\alpha|}\right) - |\alpha| \left( \frac{\beta}{|\alpha|} \right)^2 \right) }   \end{equation*}
Using the expressions for $G(x)$ and $\pi_C(x) - \pi_D(x)$ from Equations \ref{eq:grouppayoffparamsimplified} and \ref{eq:picminuspid}, we can then see that the threshold corresponds to 
\begin{equation} \label{eq:lambdaHDpayoff} \lambda^* = \left( \frac{\pi_D(1) - \pi_C(1)}{G(1) - G( \tfrac{\beta}{|\alpha|})}  \right) \theta. \end{equation}
From this characterization, we see that the threshold $\lambda^*$ is increasing in the defector's payoff advantage over a cooperator in an otherwise full-cooperator group $\pi_D(1) - \pi_C(1)$, and that $\lambda^*$ is decreasing in the collective advantage of a full-cooperator group over a group at the Hawk-Dove equilibrium composition $G(\tfrac{\beta}{|\alpha|})$. As a result, we can understand the struggle to promote cooperation above the within-group equilibrium as a tug-of-war between the individual incentive to be a defector in a group with many cooperators and the group incentive to have many cooperators rather than the level of cooperation that would result from within-group dynamics alone. When the full-cooperator group confers the same average payoff as the HD-equilibrium group, then the individual incentive $\pi_D(1) - \pi_C(1)$ wins out for any strength of between-group selection, corresponding to the threshold satisfying $\lambda^* \to \infty$ as $G(1) \to G(\tfrac{\beta}{|\alpha|})$.
 We can rewrite the threshold selection strength $\lambda^*$ in terms of our original payoff matrix as 
\begin{equation} \label{eq:HDlambdastarpayoff} \lambda^* = \left(\frac{T-P}{R - S} - 1 \right) \theta  \end{equation}
This allows us to make the following observations \begin{itemize} \item $\lambda^*$ is increasing in $T - P$, the relative advantage to a defector of playing against  a cooperator rather than a defector  \item $\lambda^*$ is decreasing in $R - S$, the relative advantage to a cooperator of playing against a cooperator rather than a defector  \item  $\lambda^* \to \infty$ as $R \to S$, so achieving cooperation above the within-group equilibrium by multilevel selection becomes increasingly difficult as cooperators lose the benefit of playing with cooperators rather than defectors.  \end{itemize}

For the multilevel HD dynamics, we have shown in Section \ref{sec:HDlongtime} that a population with initial \holder exponent $\theta$ near $x=1$ will converge to a delta-concentration at the full-defector group when $\lambda (\gamma - \left(\beta +  |\alpha|\right)) <  |\alpha| \theta$. Under the alternate condition $\lambda (\gamma - \left(\beta +  |\alpha|\right)) > |\alpha| \theta$, we have shown in this section that there is a unique and integrable steady state with \holder exponent $\theta$, and we showed in Section \ref{sec:Holderpreserve} that the \holder exponent near $x=1$ is preserved in time for solutions of Equation \ref{eq:replicatormeasurepde} for the HD game. It is then natural to suspect that populations should converge to the steady state with the same \holder exponent near $x=1$ as the initial distribution, so we make the following conjecture about the long-time behavior of solutions to Equation \ref{eq:replicatormeasurepde}.
\begin{conjecture} \label{con:hddensity} Suppose we have an initial distribution $\mu_0(dx)$ has a \holder exponents $\zeta$ near $x=0$ and  $\theta$ near $x=1$. If $\lambda \left( \gamma -\left(\beta + |\alpha| \right) \right) >  |\alpha| \theta$, then \[\mu_t(dx) \rightharpoonup \mu_{\infty}(dx) = \left\{
     \begin{array}{lr}
       0 &: x < \frac{\beta}{|\alpha|} \\
    Z_f^{-1} x^{\beta^{-1} \left[ \lambda \left(|\alpha| - \gamma\right) + \left(|\alpha| - \beta \right) \theta \right] - 1} \left( 1 - x \right)^{\theta - 1} \left(|\alpha| x - \beta\right)^{\beta^{-1} \left[ \lambda \left(\gamma - |\alpha| - \beta \right) - |\alpha| \theta \right] - 1} &: x \geq \frac{\beta}{|\alpha|}   \end{array}
   \right.
    \] 
where $Z_f^{-1}$ is a normalizing constant such that $\int_0^1 \mu_{\infty}(dx) = 1$. \end{conjecture}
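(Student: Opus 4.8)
The plan mirrors the arguments for the exactly solvable cases in \cite{luo2017scaling,cooney2019replicator}, with the comparison principles of Section \ref{sec:Comparison} and the \holder-exponent preservation of Proposition \ref{prop:HDholder} standing in for the closed-form characteristics. I would first record what is already known about the candidate limit: by the separation-of-variables classification of Section \ref{sec:HDsteady}, $\mu_\infty(dx)$ is a stationary solution of Equation \ref{eq:replicatormeasurepde}, it has \holder exponent $\theta$ near $x=1$, and it is the \emph{unique} density steady state that both has that \holder exponent and assigns zero mass to $[0,\tfrac{\beta}{|\alpha|})$ (once $\theta$ and the game are fixed, the remaining two exponents in \eqref{eq:HDpofxtheta} are determined). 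Hence it suffices to show (i) $\{\mu_t\}_{t\ge 0}$ converges weakly to some measure $\nu$, (ii) $\nu$ is stationary, (iii) $\nu([0,\tfrac{\beta}{|\alpha|}))=0$, and (iv) $\nu$ has \holder exponent $\theta$ near $x=1$.

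Items (iii) and (iv) are within reach of the tools already developed. Lemma \ref{lem:HDleftprob} gives $\mu_t([0,\tfrac{\beta}{|\alpha|}-\delta])\to 0$ for every $\delta>0$, so any weak limit point is supported on $[\tfrac{\beta}{|\alpha|},1]$, giving (iii). For (iv), Proposition \ref{prop:HDholder} keeps the \holder exponent near $x=1$ equal to $\theta$ along the flow; the same $\limsup$/$\liminf$ sandwiching used in its proof --- the lower bound on $w_t(\phi_t(x))$ limiting how much mass can vanish near $x=1$, the upper bound limiting how much can appear --- should show this exponent is inherited by weak limits. Since $[0,1]$ is compact, $\{\mu_t\}$ is tight and has limit points; the work is in upgrading this to genuine convergence (i) with a stationary limit (ii).

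For (i)--(ii) I would track the scalar $q_t := \langle G(\cdot)\rangle_{\mu_t} = \gamma M_1^{\mu_t} - |\alpha| M_2^{\mu_t}$. The representation formula \eqref{eq:mutimplicitweak}, together with \eqref{eq:wtphixHD}, writes $\mu_t$ as an explicit functional of $\mu_0$ and the history $\{q_s\}_{s\le t}$, so if $q_t\to q_\infty$ then $\mu_t$ converges to the steady state of average payoff $q_\infty$, which by \eqref{eq:HDsteadyfitness} and (iv) must be $\mu_\infty$. To show $q_t$ converges, the idea is to sandwich $G(\phi_s(x_0))$ between exactly solvable expressions using the comparison principles \eqref{eq:HDcharranking} and \eqref{eq:HDPirankings} and the integral formulas \eqref{eq:Xiintegral}--\eqref{eq:Pisquaredintegral}; combined with Lemma \ref{lem:HDleftprob}, which forces the bulk of the mass onto $[\tfrac{\beta}{|\alpha|},1]$ where these bounds collapse toward $G(\tfrac{\beta}{|\alpha|})$ and $G(1)$, one hopes to extract a differential inequality for $q_t$ that rules out oscillation precisely when $\lambda(\gamma-(\beta+|\alpha|)) > |\alpha|\theta$.

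The main obstacle is exactly the one noted after Conjecture \ref{con:pddensity}: showing that $q_t$, equivalently $\mu_t$, actually converges rather than oscillating. Near $x=1$ the between-group growth rate $\lambda[G(x)-q_t]$ is positive and competes with the within-group transport that drives mass toward $x^{eq}$, so a priori the tail of $\mu_t$ near $x=1$ could pump up and down periodically while keeping $q_t$ in a bounded interval. A cleaner route would be a Lyapunov functional for Equation \ref{eq:replicatormeasurepde} --- the natural candidate is the Kullback--Leibler divergence $\int_0^1 \log\!\big(\tfrac{d\mu_t}{d\mu_\infty}\big)\,d\mu_t$, reflecting the replicator/gradient-flow structure one expects at the between-group level --- but making it monotone in the presence of the within-group transport term is the step I do not see how to complete in general, which is why the statement is left as a conjecture.
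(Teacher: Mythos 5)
The statement you are addressing is Conjecture \ref{con:hddensity}; the paper offers no proof of it, and explicitly identifies the missing ingredient as the absence of a convergence result for $\mu_t(dx)$ outside the exactly solvable families. Your proposal is therefore not in competition with a proof in the paper but with the paper's own informal justification, and on that score it lines up well: the classification of density steady states in Section \ref{sec:HDsteady} (uniqueness once $\theta$ is fixed, via Equation \ref{eq:HDsteadyfitness} determining the remaining exponents), Lemma \ref{lem:HDleftprob} forcing any limit point onto $[\tfrac{\beta}{|\alpha|},1]$, and Proposition \ref{prop:HDholder} pinning the \holder exponent are exactly the facts the paper assembles before stating the conjecture, and the obstruction you name --- ruling out oscillation of $q_t=\langle G(\cdot)\rangle_{\mu_t}$, equivalently establishing convergence to \emph{some} stationary measure --- is the same one the paper concedes in the sentence following the conjecture and revisits in the Discussion (where gradient-flow and moment-invariant strategies are floated, close in spirit to your Lyapunov suggestion).

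Two cautions on the parts you treat as ``within reach.'' First, Proposition \ref{prop:HDholder} gives $\theta_t=\theta_0$ for each \emph{finite} $t$; it does not by itself imply that a weak limit as $t\to\infty$ has \holder exponent $\theta$. Weak convergence controls $\mu_t([1-y,1])$ only up to boundary effects and does not commute with the $y\to 0$ limit, so mass could in principle leak toward $x=1$ at a rate that degrades the exponent in the limit (this is precisely what happens below the threshold in Proposition \ref{prop:HDdelta}, where the exponent is preserved for all finite $t$ yet the limit is a delta at $x^{eq}$). A uniform-in-$t$ two-sided bound on $\mu_t([1-y,1])/y^{\Theta}$ would be needed, not just a per-time identity. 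Second, the step ``$q_t\to q_\infty$ implies $\mu_t$ converges to the steady state with payoff $q_\infty$'' requires more than the representation formula: $w_t$ depends on $\int_0^t q_s\,ds$, so one needs quantitative control on how fast $q_s$ approaches its limit (e.g.\ integrability of $q_s-q_\infty$) before the forward-characteristic picture selects the right stationary density. Neither caveat undermines your proposal as a roadmap, but both belong on the list of genuine gaps alongside the convergence question you already flagged.
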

This long-term behavior has already been shown to hold for special families of HD games with $\alpha = -2$ and $\beta = 1$ \cite{cooney2019replicator}. The main impediment to the general proof of the conjecture is that we have not yet shown that solutions $\mu_t(dx)$ of Equation \ref{eq:replicatormeasurepde} necessarily converge to a steady state outside of the aforementioned special case with exactly solvable within-group dynamics. However, for the remainder of this section, we will study the properties of the density steady states given by Equation \ref{eq:HDpofxtheta}, knowing that these are time-independent solutions of Equation \ref{eq:replicatormeasurepde}, with the potential additional relevance that these steady states will be the long-time outcome for initial data with \holder exponent $\theta$ near $x=1$ when $\lambda \left( \gamma - \left(\beta + | \alpha| \right) \right) > |\alpha| \theta$. In Figure \ref{fig:hdsteadydensities}, we illustrate sample steady state densities for HDs in which average group payoff is maximized by full-cooperator groups (left) and by groups with an intermediate level of cooperation (right).
\begin{figure}[H]
    \centering
   \hspace{-5mm} \includegraphics[width=0.505\textwidth]{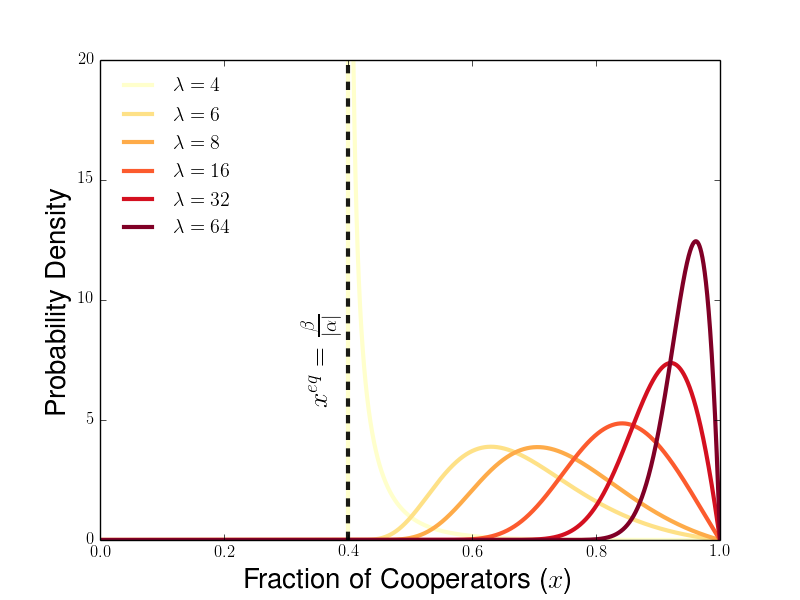} 
    \hspace{-5mm} \includegraphics[width=0.505\textwidth]{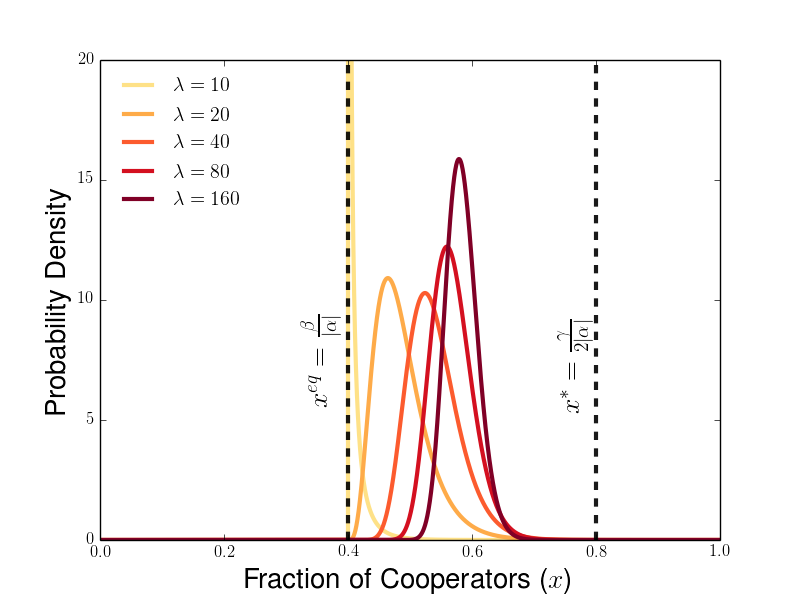}
    \caption{Steady state densities for the HD for various values of $\lambda$. Parameters shown are $\gamma = 4.5$ (Left) and $\gamma = 3.2$ (Right), with $\alpha = -2$, $\beta = \frac{4}{5}$ and $\theta = 2$ for both panels. Left dotted lines in both panels correspond to within-group HD equilibrium $x^{eq} = \tfrac{\beta}{|\alpha|}$ and dotted line on right in right panel corresponds to the group type $x^* = \frac{\gamma}{2|\alpha|}$ with maximal average payoff.}
    \label{fig:hdsteadydensities}
\end{figure}

Now we will examine peak abundance at steady state $\hat{x}_{\lambda}$, with an emphasis on the limit at $\lambda \to \infty$.  As was the case in the PD, we consider only the case of $\theta \geq 1$.  In Proposition \ref{prop:HDpeak}, as in the corresponding Proposition \ref{prop:PDpeak}, we show that if $\gamma \geq 2 |\alpha|$ (full-cooperator groups achieve maximal average payoff), the most abundant group type at steady state $\hat{x}_{\lambda}$ becomes full cooperator groups as the relative strength group selection $\lambda \to \infty$. If $\gamma < 2 |\alpha|$ (the type of group maximizing average payoff has both cooperators and defectors), we see that the most abundant group type at steady states has more defectors than the type of group maximizing average payoff, even in the limit at $\lambda \to \infty$.

\begin{proposition} \label{prop:HDpeak} Suppose $\theta \geq 1$. If  $\gamma  <  2 |\alpha|$, $\ds\lim_{\lambda \to \infty} \hat{x}_{\lambda} (f^{\lambda}_{\theta}) = \tfrac{\gamma}{|\alpha|} -  1 \in (0,1)$ and $\ds\lim_{\lambda \to \infty} \hat{x}_{\lambda} (f^{\lambda}_{\theta}) < x^*$. If  $\gamma > 2 |\alpha|$, $ \ds\lim_{\lambda \to \infty} \hat{x}_{\lambda} (f^{\lambda}_{\theta} ) = 1$.   \end{proposition}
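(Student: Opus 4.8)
The plan is to follow the strategy of Proposition \ref{prop:PDpeak}: differentiate the steady-state density $p^{\lambda}_{\theta}(x)$ of Equation \ref{eq:HDpofxtheta}, isolate the quadratic controlling the sign of its derivative, and track the roots of that quadratic as $\lambda \to \infty$. On the support $\left(\tfrac{\beta}{|\alpha|},1\right)$, taking the logarithmic derivative of $p^{\lambda}_{\theta}$ and clearing the positive common factor $x(1-x)(|\alpha| x - \beta)$ shows that the critical points of $p^{\lambda}_{\theta}$ are exactly the zeros of
\[ g^{\lambda}(x) = A\,(1-x)(|\alpha| x - \beta) - (\theta - 1)\,x(|\alpha| x - \beta) + B\,|\alpha|\,x(1-x), \]
where $A = \beta^{-1}\!\left[\lambda(|\alpha| - \gamma) + (|\alpha| - \beta)\theta\right] - 1$ and $B = \beta^{-1}\!\left[\lambda(\gamma - |\alpha| - \beta) - |\alpha|\theta\right] - 1$ are the exponents of $x$ and of $|\alpha| x - \beta$ appearing in Equation \ref{eq:HDpofxtheta}. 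A short computation gives leading coefficient $|\alpha|(\lambda + 3)$, and using $A + B = -\lambda - \theta - 2$ one finds that the two roots $x^{\lambda}_{\pm}$ have sum tending to $\gamma/|\alpha|$ and product tending to $\gamma/|\alpha| - 1$; hence $x^{\lambda}_{\pm}$ converge to the roots of $(t-1)\!\left(t - (\gamma/|\alpha| - 1)\right)$, namely $1$ and $\gamma/|\alpha| - 1$. As in Proposition \ref{prop:PDpeak}, one may instead write $x^{\lambda}_{\pm}$ out with the quadratic formula and pass to the limit directly.

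Next I would pin down which critical point is the actual peak. For every HD game $\gamma - |\alpha| = R - P > 0$ and $\gamma - |\alpha| - \beta = R - S > 0$, so once $\lambda$ exceeds the integrability threshold $\lambda^{*}$ of Equation \ref{eq:lambdastar} the exponent $B$ is positive and $p^{\lambda}_{\theta}\!\left(\tfrac{\beta}{|\alpha|}\right) = 0$; likewise $\theta \geq 1$ forces $p^{\lambda}_{\theta}(1) = 0$ when $\theta > 1$, while the borderline case $\theta = 1$ is handled exactly as in Proposition \ref{prop:PDpeak}. Evaluating $g^{\lambda}$ at the endpoints gives $g^{\lambda}\!\left(\tfrac{\beta}{|\alpha|}\right) = B\beta\left(1 - \tfrac{\beta}{|\alpha|}\right) > 0$ (using $\beta < |\alpha|$, i.e. $T > R$) and $g^{\lambda}(1) = -(\theta-1)(|\alpha| - \beta) \leq 0$; since $g^{\lambda}$ opens upward, it has exactly one zero in $\left(\tfrac{\beta}{|\alpha|},1\right)$, namely the smaller root $x^{\lambda}_{-}$, and at that point $p^{\lambda}_{\theta}$ turns from increasing to decreasing, so $\hat{x}_{\lambda} = x^{\lambda}_{-}$. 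Therefore $\lim_{\lambda \to \infty} \hat{x}_{\lambda} = \lim_{\lambda \to \infty} x^{\lambda}_{-}$.

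It then remains to read off the limit in the two regimes. If $\gamma < 2|\alpha|$, the smaller of the limiting roots $\{1,\gamma/|\alpha|-1\}$ is $\gamma/|\alpha| - 1$, so $\lim_{\lambda\to\infty}\hat{x}_{\lambda} = \gamma/|\alpha| - 1$; moreover $R > S$ gives $\gamma - |\alpha| > \beta$, so this limit lies in $\left(\tfrac{\beta}{|\alpha|},1\right) \subset (0,1)$, and since $x^{*} = \gamma/(2|\alpha|)$ in this regime, the inequality $\gamma/|\alpha| - 1 < \gamma/(2|\alpha|)$ is equivalent to $\gamma < 2|\alpha|$, giving $\lim_{\lambda\to\infty}\hat{x}_{\lambda} < x^{*}$. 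If $\gamma > 2|\alpha|$, then $\gamma/|\alpha| - 1 > 1$ falls outside the support, the smaller limiting root is $1$, and $\lim_{\lambda\to\infty}\hat{x}_{\lambda} = 1$, which coincides with $x^{*} = \min(\gamma/(2|\alpha|),1) = 1$.

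The step I expect to be the main obstacle is the endpoint and feasibility bookkeeping, which is genuinely different from the PD because the HD steady state is supported on $\left[\tfrac{\beta}{|\alpha|},1\right]$ rather than on $[0,1]$: one must confirm that, for all large enough $\lambda$, the relevant root of $g^{\lambda}$ lies strictly inside $\left(\tfrac{\beta}{|\alpha|},1\right)$ and is a maximizer (not a minimizer) of $p^{\lambda}_{\theta}$, while the other root migrates out past $x = 1$. The endpoint sign evaluations of $g^{\lambda}$ above, together with the fact that $g^{\lambda}$ opens upward, settle this, and the limiting-root computation itself is the same as in Proposition \ref{prop:PDpeak}.
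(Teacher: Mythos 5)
Your proposal is correct and follows essentially the same route as the paper: differentiate $p^{\lambda}_{\theta}$, reduce to the quadratic $g^{\lambda}(x)$ (yours expands to exactly the paper's $g(x)$ with leading coefficient $|\alpha|(\lambda+3)$), and identify the limiting roots $1$ and $\gamma/|\alpha|-1$, the only cosmetic difference being Vieta's formulas in place of the explicit quadratic formula. Your endpoint sign evaluations $g^{\lambda}(\beta/|\alpha|)>0$ and $g^{\lambda}(1)\leq 0$ are a welcome addition that pins down $\hat{x}_{\lambda}=x^{\lambda}_{-}$ inside the support more carefully than the paper does (just note that $B>0$ requires $\lambda$ slightly above $\lambda^{*}$, which is immaterial as $\lambda\to\infty$).
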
 
\begin{proof}
 To determine the fraction of cooperators $x$ at which the density $p_{\theta}(x)$ is maximized, we compute $$p'_{\theta}(x) = Z_f^{-1} g(x) x^{\beta^{-1} \left((2 - \gamma) \lambda + (|\alpha| - \beta) \theta \right) - 2} \left(1-x\right)^{\theta -|\alpha|} \left( |\alpha| x - \beta \right)^{\beta^{-1} \left(\lambda (\gamma - \beta - |\alpha|) - |\alpha| \theta \right) - 2} $$ where $$g(x) = (\gamma -|\alpha|) \lambda - (|\alpha|-\beta)\theta + \beta - \left[\lambda \gamma + 2 (\beta + |\alpha|) \right] x+ |\alpha| \left( \lambda + 3 \right) x^2$$ 
 If $\lambda \left[\gamma - \left(\beta + |\alpha| \right) \right] > |\alpha| \theta$, we have from Equation \ref{eq:HDpofxtheta} that $p^{\lambda}_{\theta} (\tfrac{\beta}{|\alpha|}) = p^{\lambda}_{\theta}(1) = 0$ when $\theta \geq 1$ (when $\theta < 1$, the steady state density blows up near $x=1$). Because $\tfrac{\beta}{|\alpha|}$ and $1$ are the only possible critical points of $f^{\lambda}_{\theta}(x)$ other than the roots of $g(x)$, %
we see that $f^{\lambda}_{\theta}(x)$ is maximized at a root of $g(x)$. These roots are given by $$x^{\lambda}_{\pm} = \frac{\lambda \gamma + 2 (\beta + |\alpha|)}{2 |\alpha| \left(\lambda + 3\right)} \pm \sqrt{\frac{\left( \lambda \gamma + 2 (\beta + |\alpha|)\right)^2}{4 |\alpha|^2 ( \lambda + 3)^2} - \frac{(\gamma -|\alpha|) \lambda - (|\alpha|-\beta)\theta + \beta}{|\alpha| ( \lambda + 3)}}$$
For large $\lambda$, we see that $$x^{\infty}_{\pm} := \ds\lim_{\lambda \to \infty} x^{\lambda}_{\pm} = \frac{\gamma}{2|\alpha|} \pm \sqrt{\frac{\gamma^2}{4 |\alpha|^2} - \frac{\left( \gamma - |\alpha| \right) }{|\alpha|}}  = \frac{\gamma}{2 |\alpha|} \pm \sqrt{\left( \frac{\gamma - 2 |\alpha|}{2 |\alpha|} \right)^2}$$ We now break our analysis into two cases: 
\begin{itemize}
\item If $\gamma \geq 2 |\alpha|$ (and the full-cooperator group maximizes average payoff), then $x^{\infty}_{\pm} = \frac{\gamma}{2 |\alpha|} \pm \left( \frac{\gamma}{2 |\alpha|} - 1\right)$, so $x^{\infty}_+ = \frac{\gamma}{|\alpha|} - 1 > 1$ and $x^{\infty}_{-} = 1$, so the most abundant group type in the steady state $f_{\theta}(x)$ has cooperator fraction $x$ approach $1$ as $\lambda \to \infty$. 
\item If $\gamma < 2 |\alpha|$ (and $x^* = \frac{\gamma}{2|\alpha|}$ maximizes average payoff) , then $x^{\infty}_{\pm} = \frac{\gamma}{2 |\alpha|} \pm \left( 1 - \frac{\gamma}{2 |\alpha|} \right)$, so $x^{\infty}_{+} = 1$ and $x^{\infty}_{-} = \frac{\gamma}{|\alpha|} - 1 \in (0,1)$. In this case, the most abundant group type in steady state approaches an intermediate fraction $\frac{\gamma}{|\alpha|} - 1 $ as $\lambda \to \infty$. We further see that \[x^{\infty}_{-} =  \frac{\gamma}{2|\alpha|} + \underbrace{ \frac{\gamma}{2|\alpha|} - 1}_{< 0} < \frac{\gamma}{2 |\alpha|} = x^*,\] and therefore there are most abundant group type at steady states has fewer cooperators than is optimal even in the limit at $\lambda \to \infty$. 
\end{itemize}
\end{proof}
In particular, we notice in the limit in which $\gamma  \to \beta + |\alpha|$ that
$ \lim_{\gamma \to \left(\beta + |\alpha|  \right)}x^{\infty}_{-}  = \tfrac{\beta + |\alpha|}{|\alpha|} - 1 = \tfrac{\beta}{|\alpha|}$. This tells us that as $\gamma$ approaches a value at which the density $p(x)$ cannot be integrable, %
then the most abundant group type has cooperator composition approaching $x^{\infty}_{-} |_{\gamma = \beta + |\alpha|}  = \frac{\beta}{|\alpha|}$, consistent with the prediction from Proposition \ref{prop:HDdelta} the long-time behavior of Equation \ref{eq:replicatormeasurepde} produces a steady-state of $\delta\left(x - \tfrac{\beta}{|\alpha|}\right)$ when $\gamma = \beta + |\alpha|$, regardless of the value of $\lambda$.

In the process of proving Proposition \ref{prop:HDpeak}, we also see that the most abundant group compostion at steady state is given by  
\begin{equation} \label{eq:xhatlambdahd} \hat{x}^{\lambda} = \left\{
     \begin{array}{cl}
       0 & :  0 \leq \lambda < \lambda^* + \beta \\
       \hat{x}^{\lambda}_{-} & :\lambda >  \lambda^* + \beta  
     \end{array}
   \right. \end{equation}
 We can also show a similar piecewise characterization of the average payoff in the population. From Equation \ref{eq:HDsteadyfitness} and the expressions for $G(x)$, $\pi_C(x) - \pi_D(x)$ from Equations \ref{eq:grouppayoffparamsimplified} and \ref{eq:picminuspid}, we see that 
 \[ \langle G (\cdot) \rangle_{f^{\lambda}_{\theta}} = \left(\gamma - |\alpha|\right) - \frac{\left(|\alpha| - \beta \right) \theta}{\lambda} = G(1) - \frac{\left(\pi_D(1) - \pi_C(1)\right)\theta}{\lambda} \]
 Using the expression for $\lambda^*$ from Equation \ref{eq:lambdaHDpayoff}, we further see that 
 \begin{equation} \label{eq:HDsteadyfitnesspayoff} \langle G(\cdot) \rangle_{f^{\lambda}_{\theta}} = \left( \frac{\lambda^*}{\lambda} \right) G\left( \tfrac{\beta}{|\alpha|} \right) + \left(1 - \frac{\lambda^*}{\lambda} \right) G(1) \end{equation}
 Therefore we see that $ G \langle (\cdot) \rangle_{f^{\lambda}_{\theta}} = G(\tfrac{\beta}{|\alpha|})$ when $\lambda = \lambda^*$ and that $\langle G \rangle_{f^{\lambda}_{\theta}}  \to G(1)$ as $\lambda \to \infty$. Notably, the average payoff at steady-state $\langle G(\cdot) \rangle_{f^{\lambda}_{\theta}}$ is limited by the average payoff of a full-cooperator group $G(1)$, and therefore an HD game with an intermediate average payoff optimum will always see suboptimal outcomes, even in the limit in which between-group selection is infinitely stronger than within-group selection. Because we know from Proposition \ref{prop:HDdelta} that $\mu_t(dx) \rightharpoonup \delta\left(x - \tfrac{\beta}{|\alpha|}\right)$ when $\lambda < \lambda^*$, we can characterize the average payoff at steady state for all values of $\lambda$ with the piecewise description
 \begin{equation} \label{eq:Glambdahd} \langle G \rangle_{f^{\lambda}_{\theta}} = \left\{
     \begin{array}{cl}
       G\left( \tfrac{\beta}{|\alpha|} \right) & :  0 \leq \lambda < \lambda^* \\
   \left( \frac{\lambda^*}{\lambda} \right) G\left( \tfrac{\beta}{|\alpha|} \right) + \left(1 - \frac{\lambda^*}{\lambda} \right) G(1)& :\lambda >  \lambda^* 
     \end{array}
   \right. \end{equation}
 
 In Figure \ref{fig:HDghostfigure}(left), we plot the average payoff at steady state $\langle G (\cdot) \rangle_{f^{\lambda}_{\theta}}$ and the average payoff of the most abundant group type at steady state $G(\hat{x}_{\lambda})$ as functions of $\lambda$, showing that both tend to $G(1)$ as $\lambda \to \infty$ (lower-dashed line) rather than the maximal possible group payoff (upper-dotted line). In Figure \ref{fig:PDghostfigure}(right), we plot the maximal group payoff $G(x^*)$ and the average payoff at steady state in the limit as $\lambda \to \infty$ ($\ds\lim_{\lambda \to \infty} \langle G (\cdot) \rangle_{f^{\lambda}_{\theta}} = G(1)$). In Figure \ref{fig:HDfitnesspeakalphagamma}, we present heat maps of the most abundant group type at steady state (left) and the average payoff at steady state (right) in the large $\lambda$ limit as a function of $\alpha$ and $\gamma$. In Figure \ref{fig:hdpeakvsmax}, we plot the group payoff function $G(x)$ for two sets of parameter values, showing how the group payoff achieved by the most abundant group composition $G\left(\lim_{\lambda \to \infty} \hat{x}_{\lambda}\right) = G(1)$, rather than the group type which maximizes collective payoff.
 
\begin{figure}[H]
    \centering
    \hspace{-5mm} \includegraphics[width=0.505\textwidth]{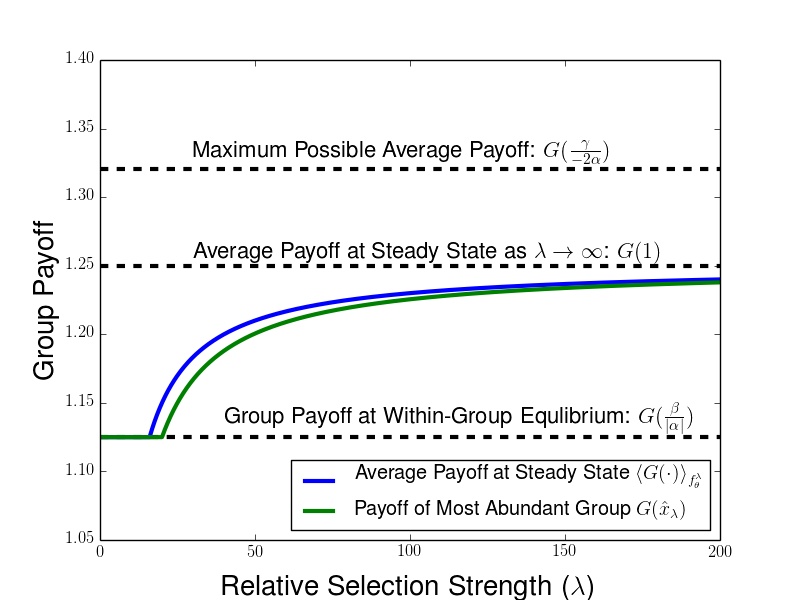} 
   \hspace{-5mm} \includegraphics[width=0.505\textwidth]{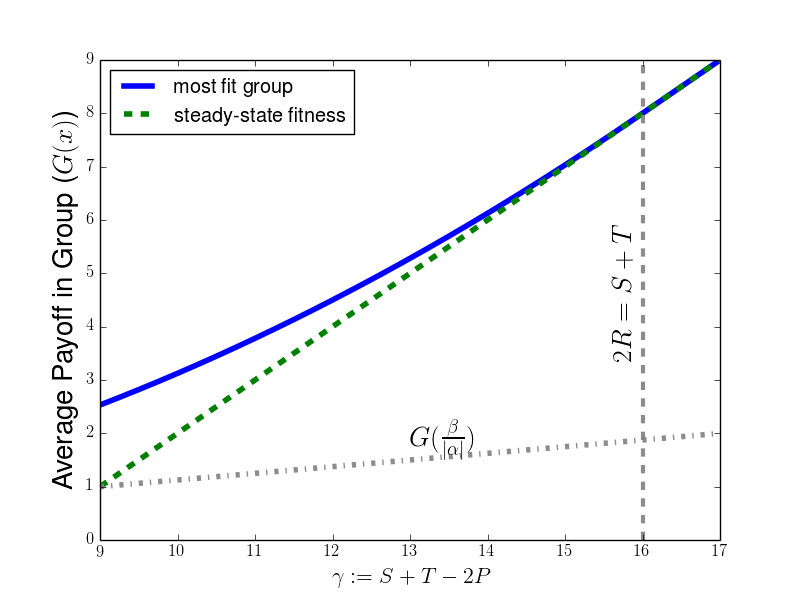} 
    \caption{Comparison between average payoff at steady state and optimal average payoff for a group in the HD game. (Left) Plots of average payoff at steady state $\langle G (\cdot) \rangle_{f^{\lambda}_{\theta}}$ and payoff of most abundant group type in steady state distribution $G(\hat{x}_{\lambda})$ as a function of $\lambda$ for an HD game with optimal group composition $x^* = 0.8125$. Lower dashed line corresponds to $G(1)$, the limit of average steady state payoff as $\lambda \to \infty)$, and the upper dashed line corresponds to maximal possible group payoff $G(x^*)$. (Right) Plots of the group type with maximal average payoff $G(x^*)$ (blue solid line) and the average payoff at the population at steady state $\lim_{\lambda \to \infty} \langle G (\cdot) \rangle_{f^{\lambda}_{\theta}}$ in the large $\lambda$ limit (green dashed line), each described as a function of $\gamma$ with a fixed choice of $\alpha = -8$. The gray dotted verticle line corresponds to the value $\gamma =16$ at which the group compostion maximizing payoff $x^* = -\frac{\gamma}{-2\alpha} \big|_{\alpha = -8} = 1$. We notice that the two lines coincide for $\gamma > 16$, when the full-cooperator groups are optimal for between-group competition, while the green dotted line falls below the blue solid line when $\gamma < 16$, as the average payoff of the population falls belows the interior optimal group payoff. In particular, when $\gamma \to 9$, we see that $\langle G (\cdot) \rangle_{f^{\lambda}_{\theta}} \to 1 = G(\frac{1}{8}) =  G(\tfrac{\beta}{|\alpha|})$, the payoff of the within-group equilibrium for the HD game. Therefore the group achieves  even though the for this game the group average payoff is $G(x) = 9x - 8x^2$, for which groups composed of $x^* = \frac{9}{16}$ cooperators  are most favored by between-group competition.  }
    \label{fig:HDghostfigure}
\end{figure}

\begin{figure}[H]
    \centering
   \hspace{-5mm} \includegraphics[width=0.535\textwidth]{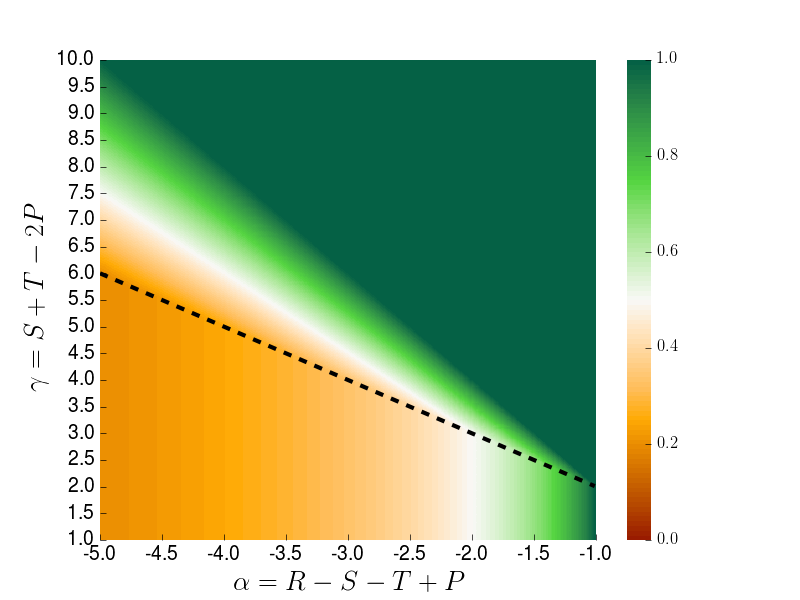} 
    \hspace{-7mm} \includegraphics[width=0.505\textwidth]{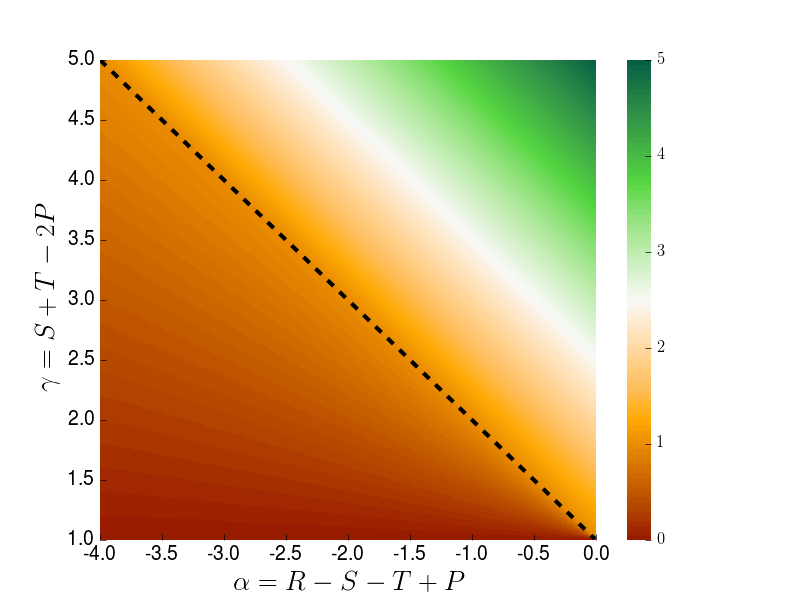}
    \caption{Most abundant group at steady state (left) and average payoff of the population (right) in the limit as between-group selection $\lambda \to \infty$. The region above the dotted black lines indicates values of $\alpha$ and $\gamma$ which constitute a HD game when $\beta = 1$.}
    \label{fig:HDfitnesspeakalphagamma}
\end{figure}

\begin{figure}[H]
    \centering
   \hspace{-5mm} \includegraphics[width=0.455\textwidth]{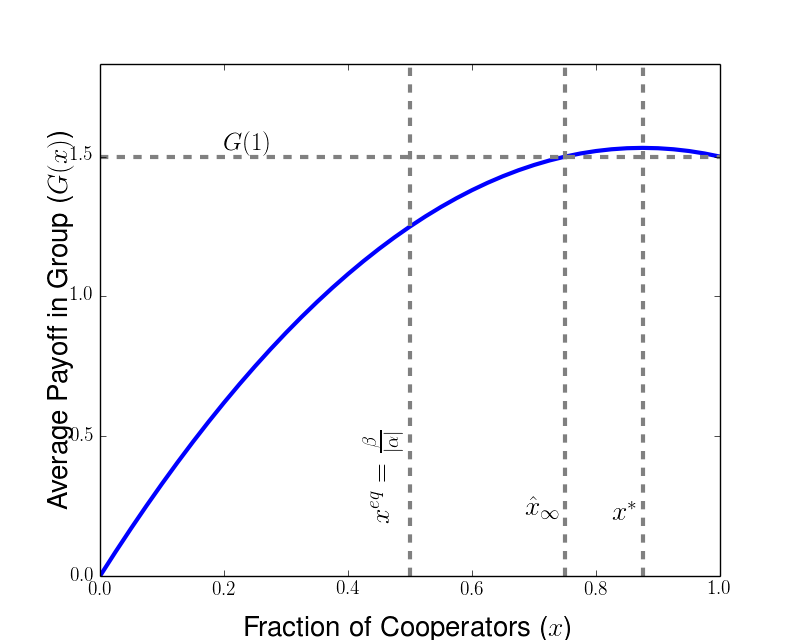} 
    \hspace{-5mm} \includegraphics[width=0.455\textwidth]{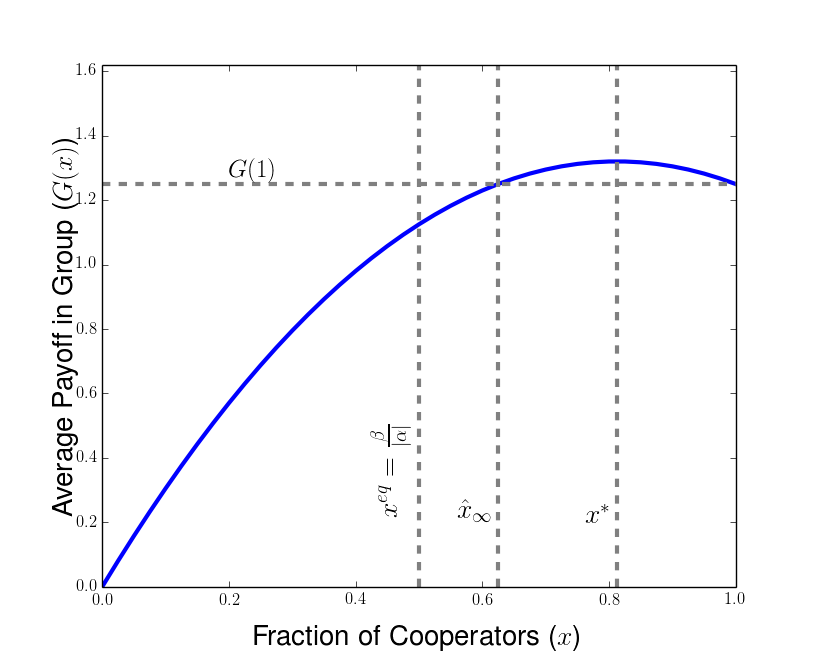}

    \caption{Illustration of peak abudance at steady state as $\lambda \to \infty$ and group type with maximal payoff $x^*$ for the HD game with $\alpha = -2$, $\beta = 1$, and $\gamma = 3.5$ (left) and $\gamma = 3.25$ (right). In both cases, we see how the modal level of cooperation at steady state $\hat{x}_{\lambda}$ in the large $\lambda$ limit has the same collective payoff as the full cooperator group, $G(\hat{x}_{\infty}) = G(1)$, although the gap between the optimal and modal group compositions widens as the optimum $x^*$ decreases.
     }
    \label{fig:hdpeakvsmax}
\end{figure}

\section{SH Games and Other Social Dilemmas} \label{sec:SHandother}

In this section, we discuss the multilevel dynamics for the other two-player, two-strategy social dilemmas. Considering a game with the payoff matrix of Equation \ref{eq:generalpayoffmatrix}, we characterize social dilemmas in terms of the following four conditions on payoffs
\begin{enumerate}[(i)]
\item $R > P$: mutual cooperation yields greater payoff than mutual defection
\item $T > R$: the temptation to defect exceeds the reward of mutual cooperation
\item $P > S$: punishment for mutual defection yields greater payoff than cooperation with a defector
\item $T > S$: the defector gets the best outcome of an interaction between a cooperator and defector 
 \end{enumerate}
A relatively expansive definition of a social dilemma is any game for which condition (i) holds and at least one of conditions (ii)-(iv) holds as well \cite{cooney2016assortment,allen2015games}. Other definitions of social dilemmas have been proposed for more general classes of games \cite{broom2018generalized} or for more a more restrictive class of matrix games \cite{kerr2004altruism,hauert2006synergy,pena2016ordering}, but we will consider the definition above to illustrate the variety of qualitative behaviors of multilevel dynamics that are possible for different games. In addition to the PD, HD, and SH games, there are five other games which satisfy our definition of a social dilemma, which rank payoffs as follows 
\begin{align*}
 T > S > R > P \: &: \: \: \textnormal{anti-coordination game 1 (AC1)} \\ 
 S > T > R > P \: &: \: \: \textnormal{anti-coordination game 2 (AC2)} \\
 R > P > S > T \: &: \: \: \textnormal{coordination game 1 (CG1)} \\ 
 R > P > T > S \: &: \: \: \textnormal{coordination game 2 (CG2)} \\
 R > T > S > P \: &: \: \: \textnormal{Prisoners' Delight (PDel)} 
\end{align*}
Within-group dynamics promote coexistence of a cooperators and defectors in anti-coordination games at $x^{eq} = \frac{\beta}{|\alpha|}$, bistability of all-cooperator and all-defector groups in coordination games (with unstable equilibrium $x^{eq} = \frac{|\beta|}{\alpha}$), and dominance of cooperators over defectors in the Prisoners' Delight. 

Using the defining payoff rankings, we find that group payoff is maximized by the full-cooperator group for the Prisoners' Delight and both of the coordination games. Furthermore, we can use calculations like we made for the SH game in Section \ref{sec:grouppayoff} to see that $G(x)$ is always increasing for the Prisoners' Delight and that the group level payoff for the coordination games is analogous to the behavior of the SH game. For the anti-coordination games, we see that $S + T > 2R$, so we know that group payoff is maximized by an intermediate level of cooperators $x^* = \frac{\gamma}{2 |\alpha|}$. Comparing the optimum group outcome with the stable within-group equilibrium $x^{eq} = \frac{\beta}{|\alpha|}$, we see that $x^* > x^{eq}$ if and only if $\gamma > 2 \beta$. This is true when $T > S$, and therefore we see that $x^* > x^{eq}$ for AC1 and that $x^* < x^{eq}$ for AC2.

Now we would like to characterize the long-time behavior of the remaining social dilemmas. For stag-hunt games, the other coordination games, and the Prisoners' delight, the all-cooperator equilibrium is locally stable under the within-group replicator dynamics. Because average group payoff is maximized by full-cooperator groups as well for these games, we can show that full-cooperation will be promoted by multilevel selection. In Proposition \ref{prop:staghunt}, we show, in the presence of any between-group competition (namely when $\lambda > 0$), that the population will eventually concentrate to full-cooperation in all groups if there is initially some density of groups arbitrarily close to the full-cooperator equilibrium. For convenience, we will analyze the dynamics using the density-valued formulation of Equation \ref{eq:replicatorpde}, though it would be nice to generalize this result to the measure-valued formulation.

\begin{proposition} \label{prop:staghunt} Suppose we have an initial density $f_0(x)$ such that for some $z^- < 1$, $\int_z^1 f_0(y) dy > 0$ for each $z > z^-$. If $\lambda > 0$, then the solution $f(t,x)$ solving Equation \ref{eq:replicatorpde} satisfies $f(t,x) \rightharpoonup \delta(1-x)$ as $t \to \infty$.  \end{proposition}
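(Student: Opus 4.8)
The plan is to use the push-forward representation $\mu_t(dx) = w_t(x)\,[\mu_0\circ\phi_t^{-1}](dx)$ of Equation \ref{eq:mutimplicit}, applied to $\mu_t(dx) = f(t,x)\,dx$, where $w_t(\phi_t(x)) = \exp\bigl(\lambda\int_0^t[G(\phi_s(x)) - \langle G(\cdot)\rangle_{\mu_s}]\,ds\bigr)$, together with two structural facts that hold for the Stag Hunt, the two coordination games, and the Prisoners' Delight, all recorded in the preceding discussion of this section and in Section \ref{sec:grouppayoff}: (i) $x=1$ is the only stable within-group equilibrium, its basin of attraction is $(x^{eq},1]$ with $x^{eq}$ the interior unstable threshold $|\beta|/\alpha$ (with the convention $x^{eq}=0$ for the Prisoners' Delight), the flow $\phi_s$ is increasing in $x$, the interval $[a,1]$ is forward invariant for every $a\in(x^{eq},1)$, and consequently $\phi_s(x)\in[0,a]$ for all $s\in[0,t]$ whenever $x\le\phi_t^{-1}(a)$; and (ii) $G$ is increasing on $[x^{eq},1]$ and attains its strict maximum over $[0,1]$ only at $x=1$. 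First I would reduce the claim to showing that $\mu_t([a,1])\to1$ for every $a\in(x^{eq},1)$: this suffices because, for any continuous test function $\psi$,
\[
\Bigl|\textstyle\int_0^1\psi\,d\mu_t-\psi(1)\Bigr|\le\sup_{x\in[a,1]}|\psi(x)-\psi(1)|+2\|\psi\|_\infty\,\mu_t([0,a)),
\]
and one lets $t\to\infty$ and then $a\to1$.

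Next I would fix $a\in(\max(z^-,x^{eq}),1)$; the hypothesis then gives $m_a:=\mu_0([a,1])=\int_a^1 f_0(y)\,dy>0$ (if $\mu_0([0,a))=0$ then forward invariance of $[a,1]$ already yields $\mu_t([a,1])\equiv1$, so assume otherwise). On $[a,1]$ one has $\phi_s(x)\ge\phi_s(a)$, hence $G(\phi_s(x))\ge G(\phi_s(a))$ by monotonicity of $G$ on $[x^{eq},1]$, so
\[
\mu_t([a,1])\ \ge\ \int_a^1 w_t(\phi_t(x))\,\mu_0(dx)\ \ge\ m_a\exp\Bigl(\lambda\textstyle\int_0^t\bigl[G(\phi_s(a))-\langle G(\cdot)\rangle_{\mu_s}\bigr]\,ds\Bigr),
\]
while for $x\le\phi_t^{-1}(a)$ fact (i) gives $\phi_s(x)\in[0,a]$ for $s\le t$, so $G(\phi_s(x))\le G(1)-c$ with $c:=G(1)-\max_{x\in[0,a]}G(x)>0$ by fact (ii), whence
\[
\mu_t([0,a))\ =\ \int_0^{\phi_t^{-1}(a)}w_t(\phi_t(x))\,\mu_0(dx)\ \le\ \exp\Bigl(\lambda\textstyle\int_0^t\bigl[(G(1)-c)-\langle G(\cdot)\rangle_{\mu_s}\bigr]\,ds\Bigr).
\]
Since $\mu_t([a,1])\le1$, the first display forces $\exp\bigl(-\lambda\int_0^t\langle G(\cdot)\rangle_{\mu_s}\,ds\bigr)\le m_a^{-1}\exp\bigl(-\lambda\int_0^t G(\phi_s(a))\,ds\bigr)$; substituting this into the second display gives
\[
\mu_t([0,a))\ \le\ \frac{1}{m_a}\exp\Bigl(-\lambda\textstyle\int_0^t\bigl[G(\phi_s(a))-G(1)+c\bigr]\,ds\Bigr).
\]
Because $\phi_s(a)\to1$, the integrand tends to $c>0$, so for large $t$ the exponent is $\le-\tfrac{\lambda c}{2}t$ and $\mu_t([0,a))\to0$, i.e.\ $\mu_t([a,1])\to1$, which completes the argument.

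The point requiring care --- more the right choice of bookkeeping than a real obstruction --- is the elimination of the a priori uncontrolled factor $\exp(-\lambda\int_0^t\langle G(\cdot)\rangle_{\mu_s}\,ds)$ by pitting the ``good'' mass near $x=1$ against \emph{all} the remaining mass at once. A direct attempt to bound $\mu_t([0,1-\epsilon])$ --- combining the crude weight bound $w_t(\phi_t(x))\le e^{\lambda(G(1)-\min_{[0,1]}G)t}$ with the $\mu_0$-measure of the backward image $[0,\phi_t^{-1}(1-\epsilon))$ --- fails for large $\lambda$, since that set contracts only at the fixed rate $\alpha x^{eq}(1-x^{eq})$ at which trajectories escape the threshold. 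Using instead that $\mu_t([a,1])\le1$ trades this uncontrolled factor for the known lower bound on the near-$1$ mass, leaving the growth rate $\lambda G(1)$ of that mass competing against the strictly smaller common rate $\lambda(G(1)-c)$ of everything else --- including the ``slow'' groups lingering near $x^{eq}$, whose trajectories obey $\phi_s(x)\le a$ and hence inherit the suppressed rate --- and producing the clean gap $e^{-\lambda c t/2}$. The remaining work is routine: verifying (i)--(ii) is precisely the Stag Hunt / coordination-game / Prisoners'-Delight analysis of Section \ref{sec:grouppayoff}, and the concluding weak-convergence step is the $\epsilon$-argument of the first display.
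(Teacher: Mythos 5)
Your proof is correct, but it takes a genuinely different route from the paper's. The paper argues entirely at the level of the density PDE: it integrates Equation \ref{eq:replicatorpde} over $\mc{I}_z=[z,1]$, checks that the boundary flux at $z>|\beta|/\alpha$ is favorable, observes that the conditional mean of $G$ on $[z,1]$ dominates the population mean so that $P_{\mc{I}_z}(t)=\int_z^1 f(t,y)\,dy$ is non-decreasing and hence converges to some $P^*_{\mc{I}_z}$, and then rules out $P^*_{\mc{I}_z}<1$ by a two-threshold contradiction: for $z'>z$ it derives $\frac{d}{dt}P_{\mc{I}_{z'}}\ge \lambda P_{\mc{I}_{z'}}(1-P^*_{\mc{I}_z})\left(G(z')-G(z)\right)$, which would force $P_{\mc{I}_{z'}}\to\infty$. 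You instead work from the push-forward representation of Equation \ref{eq:mutimplicit} and pit the growth factor of the mass on $[a,1]$ (driven at rate $\lambda G(\phi_s(a))\to\lambda G(1)$) against that of the mass on $[0,a)$ (capped at rate $\lambda(G(1)-c)$), using normalization $\mu_t([a,1])\le 1$ to cancel the common unknown factor $\exp\left(-\lambda\int_0^t\langle G\rangle_{\mu_s}\,ds\right)$ --- the same device the paper uses in Propositions \ref{prop:PD13delta} and \ref{prop:HDdelta}, here pointed at the cooperative endpoint. Each approach buys something: the paper's is self-contained given the density formulation and needs no representation formula, while yours yields a quantitative exponential rate $\mu_t([0,a))\lesssim m_a^{-1}e^{-\lambda c t/2}$ and, because it never integrates by parts, applies verbatim to measure-valued initial data --- which is exactly the generalization the paper flags as desirable just before the proposition. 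Your verification obligations (monotonicity of $\phi_s$ in the initial condition, forward invariance of $[a,1]$ for $a>x^{eq}$, strict monotonicity of $G$ on $[x^{eq},1]$ and $G(1)>\max_{[0,a]}G$) are all established in Section \ref{sec:grouppayoff}, so there is no gap.
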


\begin{proof} %
We denote the probability $P_{\mc{I}_z}(t)$ of being in the interval $\mc{I}_{z} = [z,1]$ at time $t$, given by $P_{\mc{I}_z}(t) = \int_z^1 f(t,y) dy$, and integrate Equation \ref{eq:replicatorpde} from $z$ to $1$ to obtain 
\begin{align*} \ddt{P_{\mc{I}_z}} = \int_z^1 \dsdel{f(t,y)}{t} dy &= -\left[x(1-x) \left(\pi_C(x) - \pi_D(x) \right) f(t,x) \right] \bigg|_z^1  \\  &+ \lambda \left[ \int_z^1 G(x) f(t,x) dx - P_{\mc{I}_z} \left( \int_0^1 G(y) f(t,y) dy \right) \right]
\end{align*} 
First we will address the SH and the coordination games. Because $\pi_C(x) > \pi_D(x)$ for $x > \frac{|\beta|}{\alpha}$ for the SH and coordination games, choosing $z > \frac{|\beta|}{\alpha}$ for these games implies that $z(1-z) \left( \pi_C(z) - \pi_D(z) \right) \geq 0$. Further noting that $x=1$ is an equilibrium of the within-group dynamics, we have that
\begin{align*}  \ddt{P_{\mc{I}_z}}  &\geq \lambda \left[ \int_z^1 G(x) f(t,x) dx - P_{\mc{I}_z} \left( \int_0^1 G(y) f(t,y) dy \right) \right] \\ &=  \lambda P_{\mc{I}_z} \left[ \frac{1}{P_{\mc{I}_z}} \int_z^1 G(x) f(t,x) dx -  \left( \int_0^1 G(y) f(t,y) dy \right) \right]   \end{align*}
Noting that $ \frac{1}{\mc{I}_z} \int_z^1 G(x) f(t,x) dx$ is the conditional mean of $G(x)$ for $x \in \mc{I}_z$ and that $G(x)$ is increasing in $x$ and greater than $\max_{y \in [0,x^{eq}]} G(y)$ when $x > x^{eq}$ for both the SH and coordination games, we know that $ \frac{1}{P_{\mc{I}_z}} \int_z^1 G(x) f(t,x) dx \geq \langle G(\cdot)\rangle_f(x)$, 
for $x \geq z$. Therefore we have that $\ddt{P_{\mc{I}_z}}  > 0$ whenever $0 < P_{\mc{I}_z}(t) < 1$, and, then, if $z > z^{-}$, we further know that $0 < P_{\mc{I}_z}(0) \leq  P_{\mc{I}_z}(t) \leq 1$. Therefore we know for $z > \max\left(\tfrac{|\beta|}{\alpha}, z^{-} \right)$ that there exists a limit $P_{\mc{I}_z}^* \in \left[ P_{\mc{I}_z}(0) , 1 \right]$ such that $P_{\mc{I}_z} \to P_{\mc{I}_z}^*$ as $t \to \infty$. Now we will show that $P_{\mc{I}_z}^* = 1$. Suppose instead that $P_{\mc{I}_z}^* < 1$. Then consider $z$ and $z'$ such that $z' > z > \max\left(\tfrac{|\beta|}{\alpha},z^-\right)$, we know from the inequality above that 
\begin{dmath*} \ddt{P_{\mc{I}_{z'}}(t)} \geq \lambda P_{\mc{I}_{z'}(t)}\left[ \frac{1}{P_{\mc{I}_{z'}}} \int_{z'}^1 G(x) f(t,x) dx -  \left( \int_0^1 G(y) f(t,y) dy \right) \right] \\ \geq  \lambda P_{\mc{I}_{z'}(t)}\left[ \left( \frac{1}{P_{\mc{I}_{z'}}(t)} - 1\right) \int_{z'}^1 G(x) f(t,x) dx -   \int_0^z  G(y) f(t,y) dy  - \int_z^{z'} G(y) f(t,y) dy  \right]  \end{dmath*}
We see that the first integral has the lower bound 
\[  \int_{z'}^1 G(y) f(t,y) dy \geq G(z') P_{\mc{I}_{z'}}(t), \]
while the second and third integrals have the upper bounds \[ \int_0^z G(y) f(t,y) dy \leq G(z) \left( 1 - P_{\mc{I}_z}(t) \right)   \: \: \mathrm{and} \: \:  \int_{z}^{z'} G(y) f(t,y) dy \leq G(z') \left( P_{\mc{I}_z}(t) - P_{\mc{I}_{z'}}(t) \right) \]
Incorporating these estimates into our above inequality  and noting that  $\frac{1}{P_{\mc{I}_{z'}}(t)} - 1 > 0$ for $P_{\mc{I}_{z'}}(t) \leq P_{\mc{I}_{z'}}^* < 1$ yields 
\begin{dmath}  \label{eq:SHproofsecondlastinequality} \ddt{P_{\mc{I}_{z'}}(t)} \geq \lambda P_{\mc{I}_{z'}}(t)\left[ \left( 1 - P_{\mc{I}_{z'}}(t) \right) G(z') - \left( P_{\mc{I}_z}(t) - P_{\mc{I}_{z'}}(t) \right) G(z') - \left( 1 - P_{\mc{I}_z}(t) \right) G(z) \right] 
\\ =  \lambda P_{\mc{I}_{z'}}(t) \left( 1 - P_{\mc{I}_{z}}(t) \right) \left( G(z') - G(z) \right) \end{dmath}
Because $z' > z$, we know for our relevant games that $G(z') > G(z)$ when $z > \max\left(\frac{|\beta|}{\alpha}, z^- \right)$. Furthermore, we know that $1 -  P_{\mc{I}_{z}}(t)$ is a decreasing function of time, and therefore we know that $1 - P_{\mc{I}_{z}}(t) \geq 1 - P_{\mc{I}_{z}}^* > 0$. This let's us know that
\begin{dmath}  \label{eq:SHprooflastinequality} \ddt{P_{\mc{I}_{z'}}(t)} \geq \left[ \lambda \left(1 - P_{\mc{I}_{z}}^* \right) \left( G(z') - G(z) \right) \right] P_{\mc{I}_{z'}}(t)  \end{dmath}
We can solve this differential inequality to find that 
\[  P_{\mc{I}_{z'}}(t) \geq P_{\mc{I}_{z'}}(0) e^{ \left[ \lambda \left(1 - P_{\mc{I}_{z}}^* \right) \left( G(z') - G(z) \right) \right] t}. \] 
Knowing that $\lambda \left(1 - P_{\mc{I}_{z}}^*\right) \left( G(z') - G(z) \right) > 0$, we deduce that $P_{\mc{I}_{z'}}(t) \to \infty$ as $t \to \infty$, contradicting the fact that $P_{\mc{I}_{z'}}(t)$ is a probabilit. This allows us to conclude that $P_{\mc{I}_{z}}^* = 1$ for $z  > \max\left(\frac{|\beta|}{\alpha}, z^- \right)$. 
Because this condition holds for all $z$ sufficiently close to 1, we know that probability of being in intervals of the form $[z,1]$ gets arbitrarily close to $1$ in the long-run for all such $z$. This means that the probability density concentrates arbitrarily closely to $1$ in the long-time limit, and we can conclude that $f(t,x) \rightharpoonup \delta(1-x)$ as $t \to \infty$ for the SH and coordination games.

For the Prisoners' Delight, we know both that $\pi_C(x) - \pi_D(x) \geq 0$ and that $G(x)$ is increasing for all $x \in [0,1]$. Therefore we can use the same argument as above, but don't have to worry about an interior equilbrium at $\frac{|\beta|}{\alpha}$, so intead we simply require choosing $z' > z > z^-$ and repeat the steps above.   \end{proof}

\begin{remark} Our assumption on the initial density requires that there is some density of groups arbitrarily close to a full-cooperator composition. In principle, we could relax this assumption to only requiring that there is some initial positive fraction of groups within the basin of attraction for the full-cooperator groups under the within-group dynamics. This would necessitate waiting for the within-group dynamics to produce sufficiently high levels of cooperation, and then invoking the arguments made above. \end{remark}

The remaining social dilemmas to analyze are the anti-coordination games. Like the Hawk-Dove game, within-group dynamics push groups towards a mixed composition of cooperators and defectors at $x^{eq} = \tfrac{\beta}{|\alpha|}$. 

\begin{proposition} \label{prop:anti-coordination}
Consider either of the anti-coordination games and an initial distribution with \holder exponents $\zeta > 0$ and $\theta > 0$ near $x=0$ and $x=1$, respectively. For any level of selection strength $\lambda \geq 0$, $\mu_t(dx) \rightharpoonup \delta(x - \frac{\beta}{|\alpha|})$ as $ t \to \infty$.
\end{proposition}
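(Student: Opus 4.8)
The plan is to run the argument of Proposition~\ref{prop:HDdelta} essentially verbatim, the point being that the hypothesis appearing there is \emph{automatically} satisfied for the anti-coordination games. As in that proof, fix a continuous test function $\psi$ and, for $\delta>0$ small, split $\int_0^1\psi\,\mu_t(dx)$ into contributions from $[0,\tfrac{\beta}{|\alpha|}-\delta]$, $[\tfrac{\beta}{|\alpha|}-\delta,\tfrac{\beta}{|\alpha|}+\delta]$, and $[\tfrac{\beta}{|\alpha|}+\delta,1]$; on the middle interval $|\psi(x)-\psi(\tfrac{\beta}{|\alpha|})|<\epsilon$, so it remains to show $\mu_t([0,\tfrac{\beta}{|\alpha|}-\delta])\to 0$ and $\mu_t([\tfrac{\beta}{|\alpha|}+\delta,1])\to 0$. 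The within-group dynamics of both anti-coordination games have $\beta=S-P>0$, $\alpha=R-S-T+P<0$, and interior stable equilibrium $x^{eq}=\tfrac{\beta}{|\alpha|}\in(0,1)$, so the comparison principles of Section~\ref{sec:Comparison} (the $\Xi_t,\Pi_t$ families and Equations \ref{eq:HDcharranking}--\ref{eq:HDPirankingsbackwards}), together with Propositions~\ref{prop:HDholder} and Lemmas~\ref{lem:HDleftprob}--\ref{lem:HDleftint}, apply; one only has to note that in AC2, where $x^*<x^{eq}$, the function $G$ is not monotone on all of $[0,x^{eq}]$, so the step in Lemma~\ref{lem:HDleftprob} bounding $G(\phi_s(x))\le G(\Pi_s(1;x))$ for $x$ below $x^{eq}$ must be replaced by splitting the time interval at the moment $\phi_s(x)$ crosses $x^*$ and bounding $G$ by $G(x^*)$ during the (uniformly bounded) crossing time. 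With that minor modification, $\mu_t([0,\tfrac{\beta}{|\alpha|}-\delta])\to 0$ for every $\delta>0$.

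For the tail near $x=1$ I would use the push-forward representation $\mu_t([\tfrac{\beta}{|\alpha|}+\delta,1])=\int_{\phi_t^{-1}(\beta/|\alpha|+\delta)}^1 w_t(\phi_t(x))\,\mu_0(dx)$, noting $\phi_t^{-1}(\tfrac{\beta}{|\alpha|}+\delta)\uparrow 1$ (by Equation~\ref{eq:HDcharbackwardranking}) and that $\mu_0$ has no atom at $1$ since $\theta>0$, so that it suffices to bound $w_t(\phi_t(x))$ uniformly on $[\phi_t^{-1}(\tfrac{\beta}{|\alpha|}+\delta),1]$. Following Proposition~\ref{prop:HDdelta}, write $w_t(\phi_t(x))=\exp\!\big(\lambda\!\int_0^t(\gamma\phi_s(x)-|\alpha|\phi_s(x)^2-(\gamma a-|\alpha|a^2))\,ds-\lambda\!\int_0^t j_a(s)\,ds\big)$ for a shift $a\in[0,\tfrac{\beta}{|\alpha|})$, apply the ranking $\phi_s(x)\le \Xi_s(\tfrac{\beta}{|\alpha|};x)$, $\phi_s(x)^2\ge\Xi_s(1;x)^2$, and invoke Lemma~\ref{lem:HDleftint} to get $\int_0^t j_a(s)\,ds\ge A_a>-\infty$, arriving (as in that proof) at $w_t(\phi_t(x))\le M(\delta)e^{-\lambda A_a}e^{\lambda H(a)t}$ with $H(a)=(1-a)(\gamma-(1+a)|\alpha|)$. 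The decisive observation is that for the anti-coordination games $\gamma-(\beta+|\alpha|)=R-S<0$ (equivalently $G(\tfrac{\beta}{|\alpha|})>G(1)$, and also $>G(0)$), so the threshold inequality $\lambda(\gamma-(\beta+|\alpha|))<|\alpha|\theta$ of Proposition~\ref{prop:HDdelta} holds for \emph{every} $\lambda\ge 0$; moreover, because $R>P$ gives $\tfrac{\gamma}{|\alpha|}>1$ and $R<S$ gives $\tfrac{\gamma}{|\alpha|}-1<\tfrac{\beta}{|\alpha|}$, the point $a^{**}:=\tfrac{\gamma}{|\alpha|}-1$ lies in $(0,\tfrac{\beta}{|\alpha|})$, satisfies $\gamma-(1+a^{**})|\alpha|=0$ (so Lemma~\ref{lem:HDleftint} applies with this $a$, the term $[\gamma-(1+a)|\alpha|]M_1^\nu$ being nonnegative) and makes $H(a^{**})=0$. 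With $a=a^{**}$ the weight is bounded, $w_t(\phi_t(x))\le M(\delta)e^{-\lambda A_{a^{**}}}$ uniformly in $t$, so $\mu_t([\tfrac{\beta}{|\alpha|}+\delta,1])\le M(\delta)e^{-\lambda A_{a^{**}}}\mu_0([\phi_t^{-1}(\tfrac{\beta}{|\alpha|}+\delta),1])\to 0$. Choosing $t$ large makes all three pieces $<3\epsilon$, whence $\mu_t(dx)\rightharpoonup\delta(x-\tfrac{\beta}{|\alpha|})$, for all $\lambda\ge 0$. (When $\lambda=0$ the claim is the trivial statement that the pure within-group flow pushes a measure with no atom at $0$ or $1$ to $x^{eq}$.)

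The main obstacle, and the reason this is not a one-line corollary of Proposition~\ref{prop:HDdelta}, is that here there is no threshold regime to fall back on: since between-group selection genuinely favors $x^*$ over $x^{eq}$ (indeed $G(x^*)>G(x^{eq})$ whenever $x^*\in(0,1)$), one cannot afford the exponential growth of $w_t(\phi_t(x))$ that the Hawk–Dove proof absorbs with the H\"older tail. The resolution is precisely the choice $a^{**}=\tfrac{\gamma}{|\alpha|}-1$: it is the unique shift at which $G(a^{**})$ equals the growth rate of $\int_0^t(\gamma\phi_s(x)-|\alpha|\phi_s(x)^2)\,ds$ arising from the comparison curves, so that $w_t$ becomes bounded rather than exponentially large, and verifying that the payoff rankings $R>P$ and $R<S$ of the anti-coordination games place $a^{**}$ in the admissible range $(0,x^{eq})$ and keep the $j_a^2$-positivity in Lemma~\ref{lem:HDleftint} valid there. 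The only other place requiring care is the AC2 case $x^*<x^{eq}$, where, as noted above, the non-monotonicity of $G$ below $x^{eq}$ forces a small adjustment to the estimate in Lemma~\ref{lem:HDleftprob}; alternatively, if one prefers not to optimize the shift, one can take $a=a^{**}-\eta$ for small $\eta>0$ (so $H(a)>0$ but as small as desired) and then beat the slow exponential growth with the H\"older tail $\theta>0$ and the linearization rate of the unstable equilibrium $x=1$ as in the refined ($\Gamma$) argument of Proposition~\ref{prop:HDdelta}, again with no constraint on $\lambda$.
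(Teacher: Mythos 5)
Your proposal is correct and follows the paper's own route: the paper's argument for this proposition is simply the observation that the proof of Proposition~\ref{prop:HDdelta} applies verbatim because $\gamma - (\beta + |\alpha|) = R - S < 0$ for both anti-coordination games, so the threshold inequality $\lambda(\gamma - (\beta+|\alpha|)) < |\alpha|\theta$ holds for every $\lambda \geq 0$, and Lemmas~\ref{lem:HDleftprob} and \ref{lem:HDleftint} carry over. Your two refinements --- the explicit shift $a^{**} = \tfrac{\gamma}{|\alpha|} - 1 \in (0,\tfrac{\beta}{|\alpha|})$ at which $H(a^{**})=0$ (correctly placed in the admissible range by $R>P$ and $R<S$), and the warning that the monotonicity step $G(\phi_s(x)) \leq G(\Pi_s(1;x))$ in Lemma~\ref{lem:HDleftprob} needs adjustment for AC2 where $x^* < x^{eq}$ --- are sound and in fact more careful than the paper, which asserts without comment that the lemmas carry through.
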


\begin{remark} The proof strategy is identitical to that of Proposition \ref{prop:HDdelta}, except that we can omit the condition $\lambda \left( \gamma - \left( \beta + |\alpha| \right) \right) < |\alpha| \theta$ because \[ \gamma - \left( \beta + |\alpha| \right) = \left( S + T - 2 P \right) - \left( S - P - (R - S - T + P) \right) = R -S < 0  \]
for the anti-coordination games. Then the lefthand side of the inequality is negative, and therefore is satisfied for any possible non-negative $\lambda$. The results of Lemmas \ref{lem:HDleftprob} and \ref{lem:HDleftint} carry through for the anti-coordination games as well. 
\end{remark}

One manner in which we can explore the most extreme effects of the shadow of lower-level selection in the HD game is to increase $S$ towards $R$ until the threshold for achieving additional cooperation $\lambda^*_{HD} \to \infty$ as $S \to R$ (or $\gamma \to \beta + |\alpha|)$. Taking an HD game and increasing $S$ past $R$ sees the transition in payoff rankings
\[ T > R > S > P \Longrightarrow  T > S > R > P, \] resulting in an anti-coordination game (AD1). Therefore one way to understand the result of Proposition \ref{prop:anti-coordination} is as the behavior HD dynamics in the limit of $S \uparrow R$ extended into the region in which $R < S$. Further increasing $S$ past $T$ then results in the transition of payoff rankings
\[ T > S > R > P \Longrightarrow  S > T > R > P, \] correpsonding to a transition from AD1 to AD2. 

\myindent Because the long-time dynamics for the anti-coordination games results in achieving an $x^{eq}$ fraction of cooperators, this means that there are fewer cooperators than what is optimal for the group when $x^* > x^{eq}$ (AD1), while there are more cooperators than what is optimal for the group when $x^* < x^{eq}$ (AD2). Then the behavior of the AD2 game is somewhat unique amongst the other cooperative dilemmas, as the multilevel dynamics actually produces too much cooperation. This result is not in and of itself inconsistent with the idea of the shadow of lower-level selection, and this outcome with more cooperators than is optimal still produces a lower collective payoff the desired state with fewer cooperators. Perhaps this calls into question the meaning of the word ``cooperation'' in the context in which groups are best off with a level of cooperation less than what is achieved by individual-level selection alone, as is the case with the AD2 game. However, together with the results of the PD and HD games with interior group payoff optima, the results for the anti-coordination games do show us that the intermediate optimum is never achievable for any level of relative selection strength $\lambda$, even in the limit of infinitely strong between-group selection. Therefore, we have seen that all two-player, two-strategy games with intermediate group payoff optima display an indelible shadown of individual-level competition.

\section{Discussion} \label{sec:Discussion}

In this paper, we have shown that comparison principles can be used to characterize when multilevel replicator dynamics converge to delta-functions at within-group equilibria, and to show that the \holder exponents near $x=1$ are preserved in time, demonstrating how this invariant quantity can select a single steady state for a given initial population out of an inifinity of possibilities. Notably, we have characterized the cooperative dilemmas for which the there is a shadow of lower-level selection: for games in which the average payoff of a group is maximized by a composition with less than full cooperation, the most abundant group composition at steady state always has fewer cooperators than in the optimal composition, even in the limit of infinitely-strong between group selection.  

\myindent With the methods used in this paper, we have also found a potential strategy for analyzing a broader class of multilevel selection problems. By proving that the \holder exponent near $x=1$ is preserved by the multilevel dynamics, we have also gained more insight into the long-time behavior found in previous analyses of special cases of these models \cite{luo2017scaling,cooney2019replicator}. These techniques are employed in a companion paper exploring the role of assortment and reciprocity mechanisms for individual interactions in altering the within-group and between-group competition \cite{cooney2019assortment}. In those cases, varying a parameter characterizing the probability of assortment or reciprocity will change the within-group replicator dynamics, and therefore one naturally would like to compare multilevel selection models beyond a cherry-picked family of payoff matrices with solvable withn-group dynamics. 

\myindent In addition, the comparison principle techniques from this paper may find application in other nonlocal PDEs for models in game theory and collective behavior. Advection terms also arise in models for continuous-strategy games in which individuals gradually adjust their strategies by climing their local payoff gradient \cite{friedman2010gradient,friedman2013evolutionary,friedman2008conspicuous}. In such models, the individual dynamics can depend nonlocally on the strategy distribution of the population, leading to nonlocal characteristic ODEs that may require comparison principle approaches to study long-time behavior. These models and methods have also found use in continuum models of opinion dynamics \cite{aletti2007first}, and similar approaches could find use with proving convergence to consensus of opinions or to Nash equilibria in continuous-strategy games.

\myindent The work in this paper also lays out challenges for future analytical and numerical work. We have made Conjectures \ref{con:pddensity} and \ref{con:hddensity} about the long-time behavior of multilevel PD and HD dynamics, and we would still like to show that solutions of Equation \ref{eq:replicatormeasurepde} converge to a density steady-state in the cases in which we know that the steady-states exists and that the dynamics does not converge to a delta-function. 
Dawson suggests a strategy for characterizing long-time behavior using an invariant quantity for the moments of the distribution \cite{dawson2013multilevel}. The result in this paper on \holder exponents serves a similar purpose to the idea suggested by Dawson, but characterizing the equivalent approach with moments may provide an alternate strategy for proving convergence to steady state and bears resemblence to the strategy used to analyze the long-time behavior for Becker-D{\''o}ring models of aggregation-fragmentation processes \cite{ball1986becker,bressloff2016aggregation,conlon2019non}. In related models possesing individual level selection and replicator-mutator or replicator-diffusion dynamics, formulation of these systems as gradient flows has provided a strategy for proving convergence of the dynamics to a steady-state solution \cite{chalub2019gradient,jabin2017non}, and others models with diffusion have used arguments regarding the principle eigenvalue of the diffusion operator \cite{burger1996stationary,alfaro2018evolutionary} or decay of an energy-like function \cite{ogura1987stationary,ogura1987stationary2} to prove convergence to steady-state. In addition to analytical attempts to charactertize the long-time behavior, effort should be placed into developing numerical methods for solving Equations \ref{eq:replicatorpde} and \ref{eq:replicatormeasurepde} and comparing the numerical solutions to analytical predictions.

\myindent So far, we have only considered between-group competition in which group-level birth and death events depend on the average payoff of group members, and don't consider any interaction between the groups. One could attempt to extend the multilevel replicator dynamics PDE to describe strategic or frequency-dependent interactions between groups. Mathematically, the term describing between-group competition would be reminiscent of the continuous trait competition models explored by in replicator diffusion equations \cite{jabin2017non}, and could address game-theoretic problems such as the frameworks of team games \cite{bornstein2003intergroup} and hierachical games \cite{fujimoto2017hierarchical}, which model games played between groups in which group strategy is determined by the choices of its constituent indviduals. Another interesting direction for future research is to explore the competing effects of three or more levels of selection. Three competing levels of selection arise naturally in virus dynamics, as defecting viral genomes \cite{manzoni2018defective} and collective infectious units \cite{leeks2019evolution} make possible the misalignment of evolutionary incentives at the within-cell, with-host between-cell, and between-host levels of pathogen transmission. To describe simultaneous selection at three levels of organization, one must describe the population by a probability distribution of probability distributions, which was the approach taken by Ambrosio et al to describe a spatial version of the replicator dynamics with mixed strategies and spatial movement \cite{ambrosio2018spatially}.

\myindent From both a biological and mathematical perspective, an important direction for future research is to better understand the mechanisms and significance of the shadow of lower-level selection. In particular, do signatures of this outsized effect of individual selection continue to arise in models which take into account heterogeneous group size and group level fission/fusion events \cite{simon2010dynamical,simon2012numerical,simon2013towards,simon2016group,puhalskii2017large,markvoort2014computer} or in models in which group structure is emergent from assortative interactions \cite{jensen2018evolutionary,bergstrom2002evolution} or spatial clustering \cite{krieger2019turbulent,boza2010beneficial}. It is also interesting to ask whether a scenario in which collapse of group benefit in a two-level system, as seen in phenomena from mitochondrial DNA \cite{haig2016intracellular} and cancer \cite{aktipis2015cancer} are better attributed to incentives favoring the dominance of cheaters or a group-level scenario preventing the possibility of cooperation, as would hold in determinsitic analogues of stochastic corrector models \cite{szathmary1987group,fontanari2013solvable}. Overall, this phenomenon displayed by two-level replicator dynamics provides an interesting motivation to further explore related mathematical frameworks and to connect the behavior of two-level models with empirical work in biology.

\renewcommand{\abstractname}{Acknowledgments}
\begin{abstract} 
 This research was supported by NSF grants DMS-1514606 and GEO-1211972 and by ARO grant W911NF-18-1-0325. I am thankful to Simon Levin, Joshua Plotkin, Carl Veller, and Fernando Rossine for helpful discussions. 

\end{abstract}

\addtocontents{toc}{\protect\setcounter{tocdepth}{1}}
\appendix

\section{Well-Posedness for Measure-Valued Formulation} \label{sec:measureexistence}

In this section, we will demonstrate that our representation of $\mu_t(dx)$ in terms of the push-forward measure of $\mu_0(dx)$ is well-posed, justifying our use of the push-forward representation in proving preservation of the \holder exponents and in characterizing convergence to delta-functions at within-group equlibrium below the threshold for existence of steady-state. Our strategy for proving existence of solutions to our measure-valued equations involves a contraction mapping approach often used for hyperbolic scalar or systems of equations with nonlocal terms, which often arise in models of populations structured by age or size \cite{dawidowicz1986existence},  as well as models of collective motion of animal groups \cite{eftimie2009weakly,canizo2011well} or bacterial chemotaxis \cite{hillen2000hyperbolic}. This approach has also been considered in the context of measure-valued solutions for transport equations \cite{evers2015mild,evers2016mild}, models of collective motion, and models with genetic or age structure \cite{canizo2011well,canizo2013measure}.

\myindent To discuss existence of solutions, we can consider the following slight generalization of the measure-valued dynamics of Equation \ref{eq:replicatormeasurepde} 
\begin{equation} \label{eq:measurepdeexistence} \dsddt \ds\int_0^1 \Psi(x) \mu_t(dx) =  \ds\int_0^1 \left\{j(x) \dsdel{\Psi(x)}{x}   + \lambda  \Psi(x) \left[G(x) - \left( \int_0^1 G(y) \mu_t(dy)  \right) \right] \right\} \mu_t(dx) \end{equation}
where the within-group dynamics $j(x)$, the group payoff $G(x)$, and the test function $\Psi(x)$ are defined on $[0,1]$ and are continuously differentiable in $x$. We also assume that $j(0) = j(1) = 0$ to capture the feature that all-cooperator and all-defector groups are steady states of the within-group dynamics. 
To understand the solutions of Equation \ref{eq:measurepdeexistence}, we consider an associated linear PDE. Given an arbitrary $h(t) \in C([0,T])$, we define the linear PDE  
\begin{subequations} \label{eq:linearmeasureexistencepde}
\begin{align}  \dsddt \ds\int_0^1 \Psi(x) \mu^h_t(dx) &=  \ds\int_0^1 \left\{ j(x) \dsdel{\Psi(x)}{x}   + \lambda  \Psi(x) \left[G(x) - h(t)\right] \right\} \mu^h_t(dx) \\ \label{eq:initialh} \mu_0^{h}(dx) &= \mu_0(dx) \end{align}  \end{subequations}
where $\mu_t^h(dx)$ denotes our solution of Equation \ref{eq:linearmeasureexistencepde} for given $h(t)$. 
Because $j(x)$ is Lipschitz and $j(0) = j(1) = 0$, we see that the characteristic curves $x(t,x_0)$ satisfying $\dsddt{} x(t,x_0) = j(x(t))$ and $x(0,x_0) = x_0$ exist globally in time. %

\myindent Our strategy will be to use solutions of Equations \ref{eq:linearmeasureexistencepde} in order to establish existence a solution to Equation \ref{eq:measurepdeexistence}. We can do this by considering an arbitary forcing function $h^0(t)$, finding the corresponding solution $\mu_t^{h^0}(dx)$, and then finding a new forcing function $h^1(t) := \int_0^1 G(x) \mu_t^{h^0}(dx)$. By iterating this process, we hope to construct a sequence of forcing functions $h^j$ converging to a fixed point.
In other words, if we define our iteration function as \begin{equation} \label{eq:Hdefinition} H(h(t)) := \int_0^1 G(x)  \mu_t^h(dx), \end{equation} then we are looking to find a fixed point $h_{\flat}(t)$ satisfying $H(h_{\flat}(t)) = h_{\flat}(t)$, in which case $\mu_t^{h_{\flat}}(dx)$ satisfies Equation \ref{eq:measurepdeexistence}.
For our subsequent analysis, we denote $G^* := \max_{x \in [0,1]} G(x)$, $(G')^* = \max_{x \in [0,1]} |G'(x)|$, $j^* := \max_{x \in [0,1]} j(x)$, and $||h(t)||_T := \sup_{s \in [0,t]} |h(s)|$. 
Using the norm $||\cdot||_T$, we aim to show that $H(h(t))$ is a contraction on $C\left([0,T] \right)$. To discuss contraction mappings, we will look to estimate $||H(h(t)) - H(\tilde{h}(t))||_T$ for any two given functions $h(t), \tilde{h}(t)$ in our function space. In describing the measure-valued dynamics, it is helpful to use the shorthand notation \begin{equation} \label{eq:innerproduct} \langle \Psi, \mu_t \rangle := \int_0^1 \Psi(x) \mu_t(dx) \end{equation}
For example, this allows us to describe the dynamics of the auxiliary linear problem of Equation \ref{eq:linearmeasureexistencepde} as follows 
\begin{subequations} \label{eq:linearmeasureexistencepdeshorthand}
\begin{align}  \dsddt{} \langle \Psi , \mu^h_t \rangle &=  \left\langle j(x) \frac{\partial \Psi}{\partial x} , \mu^h_t \right\rangle  + \lambda  \langle \Psi \left[G(x) - h(t)\right]  ,  \mu^h_t \rangle \\ \label{eq:initialh} \mu_0^h &= \mu_0 \end{align}  \end{subequations}

\myindent Now we will present the results on well-posedness for the two-level PDE of Equation \ref{eq:measurepdeexistence}. First, we have two lemmas dealing with the linear auxiliary problem from Equation \ref{eq:linearmeasureexistencepde}. In Lemma \ref{lem:variationofconstants}, we show that there is a variation of constants formula which must be satisfied by any solution $\mu^h_t(dx)$ of Equation \ref{eq:linearmeasureexistencepde}, which serves as a useful tool for proving uniqueness and computing contraction-mapping estimates. In Lemma \ref{lem:pdeexistence}, we show that there exists a unique solution to Equation \ref{eq:linearmeasureexistencepde}, which has an explicit representation formula reminiscent of the implicit formula of Equation \ref{eq:mutimplicit} for the nonlinear problem. In Proposition \ref{prop:nonlinearexistence}, we use the results of the two lemmas and a contraction-mapping argument to show that Equation \ref{eq:measurepdeexistence} has a unique solution and that this solution $\mu_t(dx)$ satisfies the implicit representation formula of Equation \ref{eq:mutimplicit}.

\begin{lemma} \label{lem:variationofconstants} If $\mu^h_t(dx)$ is a measure-valued solution to Equation \ref{eq:linearmeasureexistencepde}, then it also satisfies the variation of constants formula 
\begin{equation} \label{eq:vocformula} \langle \Psi, \mu^h_t \rangle = \langle P_t \Psi, \mu^h_0 \rangle +  \lambda  \int_0^t  \langle P_{t-s}  \left(G(x) - h(t) \right), \mu_s^h \rangle ds \end{equation}
where $P_t (f(x)) = f(\phi_t(x))$ denotes evaluating functions along characteristic curves $\phi_t(x)$.  \end{lemma}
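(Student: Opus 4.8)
The plan is to verify the variation of constants formula \eqref{eq:vocformula} by differentiating both sides in $t$ and checking that the resulting identity coincides with the weak formulation \eqref{eq:linearmeasureexistencepde}, then invoking uniqueness (or, more carefully, by a direct Duhamel-type argument that does not presuppose uniqueness). The key observation is that the semigroup $P_t$ generated by the characteristic flow $\phi_t$ satisfies $\frac{d}{dt} P_t \Psi(x) = P_t\left( j(x) \frac{\partial \Psi}{\partial x}\right)(x)$ by the chain rule, since $\frac{d}{dt}\Psi(\phi_t(x)) = \Psi'(\phi_t(x)) \dot\phi_t(x) = \Psi'(\phi_t(x)) j(\phi_t(x))$; this is the infinitesimal generator identity that lets the advection term be absorbed into the propagator.

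First I would fix a $C^1$ test function $\Psi$ and define $F(t) := \langle P_t \Psi, \mu_0^h\rangle + \lambda \int_0^t \langle P_{t-s}(G - h(s)) \cdot \Psi\,?,\dots\rangle\,ds$ — more precisely I would show that the candidate right-hand side of \eqref{eq:vocformula}, call it $R(t)$, satisfies the same weak ODE as $\langle \Psi, \mu_t^h\rangle$. Differentiating $\langle P_t\Psi, \mu_0^h\rangle$ gives $\langle P_t(j\,\partial_x\Psi), \mu_0^h\rangle$ using the generator identity above. Differentiating the Duhamel integral $\lambda\int_0^t \langle P_{t-s}\big((G-h(s))\Psi\big), \mu_s^h\rangle\,ds$ by the Leibniz rule produces the boundary term at $s=t$, namely $\lambda\langle (G - h(t))\Psi, \mu_t^h\rangle$ (since $P_0 = \mathrm{id}$), plus $\lambda\int_0^t \langle P_{t-s}\big(j\,\partial_x[(G-h(s))\Psi]\big), \mu_s^h\rangle\,ds$. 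I would then need to reorganize: the point is that once one writes $\langle \Psi, \mu_t^h\rangle$ via \eqref{eq:vocformula} and differentiates, the advection pieces from the first term and from inside the integral reassemble to give exactly $\langle j\,\partial_x\Psi, \mu_t^h\rangle$, while the boundary term of the Duhamel integral gives exactly $\lambda\langle (G-h(t))\Psi, \mu_t^h\rangle$, recovering \eqref{eq:linearmeasureexistencepde}.

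The cleanest route, which I would actually write up, is the reverse direction: assume $\mu_t^h$ solves \eqref{eq:linearmeasureexistencepde} and show it must satisfy \eqref{eq:vocformula}. For this, fix $t$ and consider the function $s \mapsto \langle P_{t-s}\Psi, \mu_s^h\rangle$ on $[0,t]$. Differentiating in $s$: the derivative hitting $\mu_s^h$ gives (by the weak equation applied to the test function $P_{t-s}\Psi$) the term $\langle j\,\partial_x(P_{t-s}\Psi), \mu_s^h\rangle + \lambda\langle P_{t-s}\Psi \cdot (G - h(s)), \mu_s^h\rangle$, while the derivative hitting $P_{t-s}$ gives $-\langle P_{t-s}(j\,\partial_x\Psi), \mu_s^h\rangle = -\langle j\,\partial_x(P_{t-s}\Psi), \mu_s^h\rangle$ by the generator identity. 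These advection terms cancel exactly, leaving $\frac{d}{ds}\langle P_{t-s}\Psi, \mu_s^h\rangle = \lambda\langle P_{t-s}\Psi\,(G - h(s)), \mu_s^h\rangle$. Integrating from $s=0$ to $s=t$ and using $P_0 = \mathrm{id}$ at the upper endpoint and $\mu_0^h = \mu_0$ at the lower endpoint yields precisely \eqref{eq:vocformula}.

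The main obstacle I anticipate is justifying the differentiation under the integral sign and the interchange of $\frac{d}{ds}$ with the pairing $\langle\,\cdot\,,\mu_s^h\rangle$ — in particular confirming that $s\mapsto P_{t-s}\Psi$ is $C^1$ as a map into $C^1([0,1])$ (which follows from $j \in C^1$ and global existence of characteristics, already noted in the excerpt) and that the weak formulation \eqref{eq:linearmeasureexistencepde} may legitimately be applied with the time-dependent test function $P_{t-s}\Psi$ rather than a fixed one. This requires a standard approximation argument (e.g., approximating by piecewise-constant-in-time test functions, or appealing to the fact that the weak equation holds in an integrated sense that accommodates smooth time dependence). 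Everything else is bookkeeping with the chain rule; the cancellation of advection terms is exact and is the crux of why the formula takes this clean Duhamel form.
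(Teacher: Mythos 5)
Your proposal is correct, and it actually supplies an argument where the paper gives none: the paper's ``proof'' of Lemma \ref{lem:variationofconstants} is only a remark deferring to Lemma 11 of Luo and Mattingly, so there is nothing to compare against except that reference, whose derivation is the same Duhamel computation you describe. Your ``cleanest route'' is the right one: differentiate $s \mapsto \langle P_{t-s}\Psi, \mu_s^h\rangle$, apply the weak equation \eqref{eq:linearmeasureexistencepde} with the frozen test function $P_{t-s}\Psi$, observe that the two advection contributions cancel, and integrate over $[0,t]$. Two small points deserve explicit attention in a write-up. First, the cancellation rests on the commutation identity $P_\tau\left(j\,\partial_x\Psi\right) = j\,\partial_x\left(P_\tau\Psi\right)$, which is not literally the generator identity $\tfrac{d}{d\tau}P_\tau\Psi = P_\tau(j\,\partial_x\Psi)$ you state; it follows from the one-line computation $j(\phi_\tau(x)) = j(x)\,\partial_x\phi_\tau(x)$ (both sides solve $\dot u = j'(\phi_\tau(x))\,u$ with the same initial value), and since $j\in C^1$ with $j(0)=j(1)=0$ this is available globally. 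You should include that line. Second, the formula you derive is
\begin{equation*}
\langle \Psi, \mu_t^h\rangle = \langle P_t\Psi, \mu_0^h\rangle + \lambda\int_0^t \left\langle \left(P_{t-s}\Psi\right)\left(G(\cdot) - h(s)\right),\, \mu_s^h\right\rangle ds,
\end{equation*}
with the test function present inside the integral and with $h(s)$ rather than $h(t)$; the displayed Equation \eqref{eq:vocformula} omits the factor $\Psi$ and writes $h(t)$, but the version actually invoked later in Lemma \ref{lem:pdeexistence} restores the $\Psi$, so your form is the correct one and the discrepancy is a typo in the statement rather than an error in your argument. The remaining issue you flag, justifying differentiation of $s\mapsto\langle P_{t-s}\Psi,\mu_s^h\rangle$ with a time-dependent test function, is genuine but routine given that $s\mapsto P_{t-s}\Psi$ is $C^1$ into $C^1([0,1])$ and the total mass of $\mu_s^h$ is locally bounded by the a priori estimate \eqref{eq:aprioriestimate}.
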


\begin{remark} This is a generalization of the variation of constants formula described in Lemma 11 of Luo and Mattingly \cite{luo2017scaling}, which described the special case in which $j(x) = -sx(1-x)$ and $G(x) = x$. The derivation of the formula is analogous for the generalized formula, so we will omit the proof.   \end{remark}

\begin{lemma} \label{lem:pdeexistence} Given $T > 0$, the flow of measures $\mu^h_t(dx)$ given by the formula \begin{subequations} \label{eq:linearmeasuresolution} \begin{align}  \label{eq:measureformula} \mu^h_t(dx)& = w^h_t(x) (\mu_0 \circ \phi_t^{-1})(dx) \\ \label{eq:whformula} w^h_t(\phi_t(x_0)) &= \exp\left( \lambda \int_0^t \left[ G(\phi_s(x_0)) - h(s) \right] ds \right) \end{align} is the unique solution of Equation \ref{eq:linearmeasureexistencepde} for each $t \in [0,T]$. \end{subequations}  \end{lemma}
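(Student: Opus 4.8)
The plan is to treat existence by direct verification that the explicit formula solves the weak equation, and uniqueness by a duality argument using time-dependent test functions that solve the adjoint transport equation (alternatively, via the variation-of-constants formula of Lemma \ref{lem:variationofconstants} and a Gr\"onwall estimate). For existence, I would start from the push-forward representation, writing $\langle \Psi, \mu^h_t \rangle = \int_0^1 \Psi(\phi_t(x_0))\, w^h_t(\phi_t(x_0))\, \mu_0(dx_0)$. Since $j(\cdot)$ is $C^1$ with $j(0)=j(1)=0$, the characteristic flow $\phi_t(x_0)$ exists globally, maps $[0,1]$ onto itself, and is jointly $C^1$ in $(t,x_0)$; together with $\Psi,\Psi',G\in C^1([0,1])$ and $h\in C([0,T])$ this justifies differentiating under the integral sign. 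Using $\tfrac{d}{dt}\phi_t(x_0) = j(\phi_t(x_0))$ and $\tfrac{d}{dt} w^h_t(\phi_t(x_0)) = \lambda\big(G(\phi_t(x_0)) - h(t)\big) w^h_t(\phi_t(x_0))$, the derivative collapses to $\int_0^1 \big[\, j(\phi_t(x_0))\Psi'(\phi_t(x_0)) + \lambda\Psi(\phi_t(x_0))(G(\phi_t(x_0)) - h(t))\,\big] w^h_t(\phi_t(x_0))\, \mu_0(dx_0)$, which is exactly $\langle\, j\Psi' + \lambda\Psi(G-h),\, \mu^h_t\,\rangle$, the right-hand side of Equation \ref{eq:linearmeasureexistencepde}. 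The initial condition holds since $\phi_0=\mathrm{id}$ and $w^h_0\equiv 1$, and $\mu^h_t(dx)$ is a finite nonnegative measure because it is the push-forward of $\mu_0$ weighted by the strictly positive continuous function $w^h_t$.

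For uniqueness, suppose $\nu_t(dx)$ is any measure-valued solution of Equation \ref{eq:linearmeasureexistencepde} with $\nu_0=\mu_0$. Fix $T$ and an arbitrary $\Psi_T\in C^1([0,1])$, and construct a time-dependent test function $\Psi(t,x)$ solving the backward adjoint equation $\partial_t\Psi + j(x)\partial_x\Psi + \lambda(G(x)-h(t))\Psi = 0$ on $[0,T]$ with terminal data $\Psi(T,\cdot)=\Psi_T$. By the method of characteristics this is given explicitly by $\Psi(t,x) = \Psi_T(\phi_{T-t}(x))\exp\big(\lambda\int_t^T [G(\phi_{s-t}(x)) - h(s)]\,ds\big)$, which is $C^1$ in $(t,x)$ because $j(0)=j(1)=0$ keeps characteristics inside $[0,1]$. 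Inserting $\Psi(t,\cdot)$ into the weak formulation and applying the chain rule (legitimate since the equation holds for $C^1$ test functions and $t\mapsto\Psi(t,\cdot)$ is $C^1$) gives $\tfrac{d}{dt}\langle\Psi(t,\cdot),\nu_t\rangle = 0$, hence $\langle\Psi_T,\nu_T\rangle = \langle\Psi(0,\cdot),\mu_0\rangle$; the same computation applied to the explicit $\mu^h_t$ from Equation \ref{eq:linearmeasuresolution} yields the identical value. As $\Psi_T$ and $T$ are arbitrary and $C^1$ functions are measure-determining on $[0,1]$, we conclude $\nu_T = \mu^h_T$.

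I expect the main obstacle to lie in the uniqueness half: one must verify that $\Psi(t,x)$ is genuinely $C^1$ up to the boundary $x\in\{0,1\}$, and one must be careful about what regularity in $t$ is actually assumed in the notion of ``measure-valued solution'' so that differentiating $\langle\Psi(t,\cdot),\nu_t\rangle$ in $t$ is justified for a merely measure-valued $\nu_t$ — this is where appealing instead to Lemma \ref{lem:variationofconstants} and estimating the difference of two solutions in the dual bounded-Lipschitz (flat) metric, followed by Gr\"onwall, may be the cleaner route. By contrast, the existence half is essentially bookkeeping: differentiating under the integral sign using the smoothness of $\phi_t$, $G$, $h$, and $\Psi$.
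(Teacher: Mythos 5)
Your existence half is exactly the paper's argument: differentiate the push-forward representation $\langle \Psi,\mu^h_t\rangle = \int_0^1 \Psi(\phi_t(x_0))\,w^h_t(\phi_t(x_0))\,\mu_0(dx_0)$ under the integral sign using $\tfrac{d}{dt}\phi_t = j(\phi_t)$ and $\tfrac{d}{dt}w^h_t(\phi_t(x_0)) = \lambda\left(G(\phi_t(x_0))-h(t)\right)w^h_t(\phi_t(x_0))$, and read off the right-hand side of the weak equation. No difference there.

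For uniqueness you propose a different primary route. The paper does precisely what you list as your fallback: it applies the variation-of-constants formula of Lemma \ref{lem:variationofconstants} to two solutions with the same initial datum, subtracts, bounds the difference in the total variation norm using $\|P_{t-s}\Psi\|_\infty \le \|\Psi\|_\infty$ and $|G(x)-h(t)|\le G^* + \|h\|_t$, and closes with \gronwalls inequality. Your primary route --- constructing the backward adjoint test function $\Psi(t,x)=\Psi_T(\phi_{T-t}(x))\exp\bigl(\lambda\int_t^T[G(\phi_{s-t}(x))-h(s)]\,ds\bigr)$ and showing $\tfrac{d}{dt}\langle\Psi(t,\cdot),\nu_t\rangle=0$ --- is a standard and correct duality argument for linear transport, and your explicit formula for the adjoint solution checks out. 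Its cost is exactly the obstacle you identify: the weak formulation is stated for time-independent $C^1$ test functions, so pairing a time-dependent test function against a merely measure-valued $\nu_t$ and invoking the product/chain rule requires an extra approximation step that the mild (integral) formulation avoids. What the duality route buys in exchange is that it never needs a norm on the space of signed measures at all. One small correction to your hedging: the flat (bounded-Lipschitz) metric is not needed here --- the total variation estimate goes through cleanly precisely because $P_{t-s}$ acts on test functions by composition with $\phi_{t-s}$ and therefore does not increase the sup norm, which is all the duality pairing requires. Either route is acceptable; the paper's is the shorter one given that Lemma \ref{lem:variationofconstants} is already in hand.
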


\begin{proof}
We see that Equation \ref{eq:linearmeasuresolution} solve Equation \ref{eq:linearmeasureexistencepde} by differentiating with respect to time, obtaining 
\begin{align*} \dsddx{}{t} \ds\int_0^1 \Psi(x) \mu^h_t(dx) &= \dsddx{}{t} \ds\int_0^1 \Psi(x) w^h(x) (\mu^h_0 \circ \phi_t^{-1})(dx)  = \dsddx{}{t} \int_0^1 \Psi(\phi_t(x)) w^h_t(\phi_t(x)) \mu^h_0(dx) \\ &= \ds\int_0^1 \del{\Psi(\phi_t(x))}{\phi_t(x)} \left[ \del{\phi_t(x)}{t} \right] w^h_t(\phi_t(x)) \mu^h_0(dx) + \ds\int_0^1 \Psi(\phi_t(x)) \del{w^h_t(\phi_t(x))}{t} \mu^h_0(dx) \\ &=  \ds\int_0^1 \del{\Psi(\phi_t(x))}{\phi_t(x)} j(\phi_t(x)) w^h_t(\phi_t(x)) \mu^h_0(dx) \\ &+ \lambda \ds\int_0^1 \Psi(\phi_t(x)) \left[G(\phi_t(x)) - h(t) \right] w^h_t(\phi_t(x)) \mu^h_0(dx) \\ &= \ds\int_0^1 \left\{ \del{\Psi(x)}{x} j(x) w^h_t(x) + \lambda \Psi(x) \left[G(x) - h(t) \right] w^h_t(x) \right\} \left[\mu^h_0 \circ \phi_t^{-1} \right] (dx)  \\ &=  \ds\int_0^1 \left\{ \del{\Psi(x)}{x} j(x) + \lambda \Psi(x) \left[G(x) - h(t) \right]  \right\} \mu_t(dx) \end{align*} where we used that $\phi_t(x)$ solves $\del{}{t}\phi_t(x_0) = j(\phi_t(x_0))$, that $\del{}{t}w^h_t(\phi_t(x_0)) = \lambda \left( G(\phi_t(x_0)) - h(t) \right) w^h_t(\phi_t(x_0))$. Putting together the first line and last line, we can conclude that $\mu^h_t(dx)$ given by Equation \ref{eq:measureformula} is a solution to Equation \ref{eq:linearmeasureexistencepde}.

Next we see that $\mu_t^h(dx)$ is unique. Suppose there were two solutions $\mu_t^h(dx)$ and $\nu_t^h(dx)$ to Equation \ref{eq:linearmeasureexistencepde} (where notably the initial conditions agree upon $\mu^h_0(dx) = \nu_0^h(dx) = \mu_0(dx)$. Using the variation of constants formula, we have that \begin{align*} \langle \Psi,  \mu_t^h \rangle \: = \:  \langle P_t \Psi, \mu_0 \rangle + \lambda \int_0^t \langle \left(P_{t-s} \Psi \right) \left( G(x) - h(t) \right) , \mu_s^h \rangle ds  \\  \langle \Psi,  \nu_t^h \rangle \: = \:  \langle P_t \Psi, \mu_0 \rangle + \lambda \int_0^t \langle \left(P_{t-s} \Psi \right) \left( G(x) - h(t) \right) , \nu_s ^h\rangle ds \end{align*}
We see that $$ | \langle \Psi, \mu_t^h - \nu_t^h \rangle  | \leq  \lambda \left( G^* + ||h||_t \right) \int_0^t \langle P_{t-s} \Psi, \mu_s^h - \nu_s^h \rangle ds$$ 
Using the total variation norm $||\mu_t - \nu_t||_{TV} := \sup_{||\Psi|| \leq 1} \langle \Psi, \mu_t - \nu_t \rangle$ and that $||P_{t-s} \Psi||_{\infty} \leq ||\Psi_{\infty}||$ (because $P_{t-s} \Psi := \Psi(\phi_{t-s}(x_0))$), we can further see, for test functions satisfying $||\Psi||_{\infty} \leq 1$, that $$||\mu_t^h - \nu_t^h||_{TV} \leq \lambda\left(G^* + ||h||_t \right) \int_0^t ||\mu_t^h - \nu_t^h ||_{TV} ds $$ and \gronwalls inequality lets us deduce that $$||\mu_t^h - \nu_t^h||_{TV} \leq ||\mu_0^h - \nu_0^h||_{TV} \exp\left(\lambda \left[G^* + ||h||_t \right] T \right) = 0 \: \: \mathrm{because} \: \: \mu_0^h = \nu_t^h,$$
which allows us to conclude that $\mu_t^h(dx) = \nu_t^h(dx)$ for all $t \in [0,T]$ and solutions of Equation \ref{eq:linearmeasureexistencepde} are unique.
Together with our explicit representation formula from $\mu_t^h(dx)$, we can conclude that there exists a unique solution to equation \ref{eq:linearmeasureexistencepde} for given function $h(t)$. We now need to construct an iteration scheme use solutions from Equation \ref{eq:linearmeasureexistencepde} to produce a solution of Equation \ref{eq:measurepdeexistence}.
\end{proof}

\begin{proposition} \label{prop:nonlinearexistence} Assume that $j(x)$, $G(x)$, and $\psi(x)$ are $C^1$ on $[0,1]$. Given initial probability measure $\mu_0(dx)$, there eixsts a unique solution $\mu_t(dx)$ to Equation \ref{eq:measurepdeexistence} for all time $t \geq 0$. Furthermore, the solution $\mu_t(dx)$ satisfies the implicit representation formula from Equation \ref{eq:mutimplicit}  \[\mu_t(dx) = w_t(x) \left(\mu_0 \circ \phi_t^{-1}\right)(dx). \]

\end{proposition}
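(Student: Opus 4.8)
The plan is to solve Equation \ref{eq:measurepdeexistence} by a Banach fixed-point argument for the map $H$ of Equation \ref{eq:Hdefinition}, using Lemma \ref{lem:pdeexistence} to resolve the linear auxiliary problem for each forcing $h$ and Lemma \ref{lem:variationofconstants} to supply the quantitative estimate needed for the contraction. First I would check that $H$ maps $C([0,T])$ into itself: given $h \in C([0,T])$, Lemma \ref{lem:pdeexistence} produces $\mu^h_t(dx)$ via Equations \ref{eq:measureformula}--\ref{eq:whformula}, and since $G$ is continuous while the characteristic flow $\phi_t(x)$ and the weight $w^h_t(\phi_t(x))$ depend continuously on $t$, the function $t \mapsto \langle G, \mu^h_t \rangle$ lies in $C([0,T])$; moreover taking the test function $\Psi \equiv 1$ in Equation \ref{eq:linearmeasureexistencepde} gives $\ddt{}\langle 1, \mu^h_t\rangle = 0$, so $\mu^h_t$ stays a probability measure and $||H(h)||_T \leq G^*$ uniformly in $h$. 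Thus $H$ maps the closed ball $\{\,||h||_T \leq G^*\,\}$ of $C([0,T])$ into itself.

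The heart of the argument is the contraction estimate for $||H(h) - H(\tilde h)||_T$. Writing $H(h(t)) - H(\tilde h(t)) = \langle G, \mu^h_t - \mu^{\tilde h}_t\rangle$, it suffices to control $||\mu^h_t - \mu^{\tilde h}_t||_{TV}$ by $||h - \tilde h||_T$. Applying the variation of constants formula of Lemma \ref{lem:variationofconstants} to both $\mu^h$ and $\mu^{\tilde h}$, subtracting, and using $||P_{t-s}\Psi||_{\infty} \leq ||\Psi||_{\infty}$ together with the probability-measure bound, one obtains
\begin{equation*} ||\mu^h_t - \mu^{\tilde h}_t||_{TV} \leq \lambda\left(G^* + ||h||_T\right)\int_0^t ||\mu^h_s - \mu^{\tilde h}_s||_{TV}\, ds + \lambda\, t\, ||h - \tilde h||_T, \end{equation*}
whence \gronwalls inequality yields $||\mu^h_t - \mu^{\tilde h}_t||_{TV} \leq \lambda\, t\, e^{\lambda(G^* + ||h||_T) t}\,||h - \tilde h||_T$. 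Restricting to $||h||_T \leq G^*$, this gives $||H(h) - H(\tilde h)||_T \leq \lambda\, T\, G^*\, e^{2\lambda G^* T}\,||h - \tilde h||_T$, which is a strict contraction once $T$ is small enough. The Banach fixed-point theorem then produces a unique $h_\flat \in C([0,T])$ with $H(h_\flat) = h_\flat$, and by construction $\mu^{h_\flat}_t(dx)$ solves Equation \ref{eq:measurepdeexistence} on $[0,T]$; Lemma \ref{lem:pdeexistence} shows it satisfies the representation $\mu^{h_\flat}_t(dx) = w_t(x)(\mu_0\circ\phi_t^{-1})(dx)$ with $w_t(\phi_t(x_0)) = \exp\left(\lambda\int_0^t[G(\phi_s(x_0)) - h_\flat(s)]\,ds\right)$, i.e.\ exactly Equations \ref{eq:mutimplicit} and \ref{eq:wtx}.

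To reach all $t \geq 0$, I would observe that the contraction interval length $T$ depends only on $\lambda$ and $G^*$, not on the data, precisely because $\mu_t(dx)$ is a probability measure for every $t$, so the bound $||h||_T \leq G^*$ is automatically available when the construction is restarted from $\mu_T(dx)$ as a new initial condition; iterating over consecutive intervals of length $T$ therefore yields a global solution. Global uniqueness follows either by patching the local uniqueness statements or, more directly, by a \gronwalls estimate on $||\mu_t - \nu_t||_{TV}$ via the variation of constants formula, exactly as in the uniqueness step of Lemma \ref{lem:pdeexistence}. I expect the contraction estimate of the second step to be the main obstacle: one must track how the weight $w^h_t$, which depends on $h$ exponentially, propagates into a total-variation Lipschitz bound on $\mu^h_t$, and it is essential to use conservation of mass so that all constants remain uniform in $t$ — without the probability normalization, $||h||_T$ could grow and destroy the fixed interval length needed for the global continuation.
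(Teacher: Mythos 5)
Your overall architecture is the same as the paper's: solve the linear auxiliary problem (Lemma \ref{lem:pdeexistence}) for each forcing $h$, show the map $H$ of Equation \ref{eq:Hdefinition} is a contraction on $C([0,T])$ preserving a closed ball, invoke the Banach fixed-point theorem, read off the representation formula from the fixed point, and iterate in time. Your contraction estimate via the variation of constants formula of Lemma \ref{lem:variationofconstants} is a legitimate (and arguably cleaner) alternative to the paper's route, which instead differentiates $\langle G, \mu_t^h - \mu_t^{\tilde h}\rangle$ in time and needs bounds on $(G')^*$ and $j^*$. However, there is one genuine error that propagates through three steps of your argument: the claim that taking $\Psi \equiv 1$ in Equation \ref{eq:linearmeasureexistencepde} gives $\frac{d}{dt}\langle 1, \mu_t^h\rangle = 0$. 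That identity holds only for the \emph{nonlinear} equation, where the nonlocal term $G(x) - \int_0^1 G(y)\,\mu_t(dy)$ integrates to zero against $\mu_t$. For the auxiliary problem with an arbitrary external forcing $h$, the choice $\Psi \equiv 1$ gives $\frac{d}{dt}\langle 1, \mu_t^h\rangle = \lambda \langle G - h(t), \mu_t^h\rangle$, which vanishes only at the fixed point; in general the mass of $\mu_t^h$ grows, and the correct statement is the a priori bound $\langle 1, \mu_t^h\rangle \leq \exp\left(\lambda\left[G^* + ||h||_T\right]T\right)$ (Equation \ref{eq:aprioriestimate}). Consequently: (i) the ball $\{\,||h||_T \leq G^*\,\}$ is not obviously invariant — you only get $||H(h)||_T \leq G^* \exp\left(\lambda\left[G^* + ||h||_T\right]T\right)$, so the radius must be enlarged and $T$ taken small to secure invariance; (ii) in your difference estimate the term multiplying $||h - \tilde h||_T$ is $\lambda \int_0^t \langle 1, \mu_s^{\tilde h}\rangle\, ds \leq \lambda\, t\, M^T_{\tilde h}$ rather than $\lambda\, t$; and (iii) the uniformity of the local existence interval cannot be justified by normalization of the iterates $\mu_t^h$, since they are not probability measures.

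All three defects are repairable without changing your strategy: substitute the exponential a priori bound for conservation of mass throughout, which leaves every constant finite and controllable by shrinking $T$. The point you need for the global continuation is that the \emph{fixed-point} solution $\mu_t^{h_\flat}(dx)$ does conserve mass — precisely because at the fixed point $h_\flat(t) = \langle G, \mu_t^{h_\flat}\rangle$, so the $\Psi \equiv 1$ computation does give zero — and hence restarting from $\mu_T(dx)$ as a probability measure keeps the admissible ball and the interval length uniform across iterations, exactly as in the paper. With those substitutions your proof goes through and coincides with the paper's up to the choice of using the variation of constants formula, rather than direct time-differentiation, to obtain the contraction inequality.
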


\begin{proof}

From the push-foward representation of $\mu_t^h(dx)$ and the fact that $\mu_0(dx)$ is a probability measure, we see that $\mu_t^h$ satisfies the following a priori estimate
\begin{dmath} \label{eq:aprioriestimate} \int_0^1 \Psi(x) \mu_t^h(dx) \leq \ds\int_0^1 \Psi(\phi_t(x)) w_t(\phi_t(x)) \mu_0(dx) \leq ||\Psi||_{\infty} \exp\left( \lambda \left[G^* + ||h||_T \right] T \right) \int_0^1 \mu_0(dx) \\ = ||\Psi||_{\infty} \exp\left( \lambda \left[G^* + ||h||_T \right] T \right) , \end{dmath} so for $T < \infty$ and given $h(t)$, there exists $M^T_{h} < \infty$ such that $\langle \Psi, \mu_t^h \rangle \leq M^T_h$.

To apply the Banach fixed-point theorem, we now need to show that 

\begin{enumerate}[(i)]

\item $H(h(t))$ is a contraction: $\exists \eta \in (0,1)$ such that $\forall h(t), \tilde{h}(t) \in C[0,T]$, $||H(h(t)) - H(\tilde{h}(t))||_T \leq \eta ||h(t) - \tilde{h}(t) ||_T $

\item There exists sufficiently large $R$ such that the closed ball $B(R,0)$, centered at $0$ with radius $R$, is mapped to itself by $H(h(t))$. 

\end{enumerate}

First we show that $H(h(t))$ is a contraction. Because $G(x)$ is an admissible test function, we can use Equation \ref{eq:linearmeasureexistencepde} to compute

\begin{align*} \hspace{-10mm} \dsdel{}{t} \bigg| \int_0^1 G(x) \left( \mu_t^h(dx) - \mu_t^{\tilde{h}}(dx) \right) \bigg| & \leq \bigg| \dsdel{}{t}  \int_0^1 G(x) \left( \mu_t^h(dx) - \mu_t^{\tilde{h}}(dx) \right) \bigg| \\ &= \bigg| - \int_0^1 G'(x) j(x) \left(\mu_t^h(dx) - \mu_t^{\tilde{h}}(dx) \right) \bigg| \\ &+ \lambda \bigg| \int_0^1 G(x) \left[ G(x) \left(\mu_t^h(dx) - \mu_t^{\tilde{h}}(dx) \right) + \left(h(t) \mu_t^h - \tilde{h}(t) \mu_t^{\tilde{h}} \right) \right]
\bigg| \\ & \leq (G')^* j^* \bigg| \int_0^1 \left( \mu_t^h(dx) - \mu_t^{\tilde{h}}(dx) \right) \bigg| + \lambda G^* \bigg| \int_0^1 G(x) \left(\mu_t^h(dx) - \mu_t^{\tilde{h}}(dx) \right) \bigg|  \\ &+ \lambda \bigg| \int_0^1 G(x) h(t) \left( \mu_t^h (dx) - \mu_t^{\tilde{h}}(dx) \right) \bigg| + \lambda \bigg| \int_0^1 G(x) \left( h(t) - \tilde{h}(t) \right) \mu_t^{\tilde{h}}(dx) \bigg| \end{align*} 
Because $h(t) \in C([0,T])$, we know that there exists a bound $B_{h} < \infty$ such that $h(t) \leq ||h(t)||_T \leq B_h$. From our a priori estimate on $\mu_t^{\tilde{h}}(dx)$, we know that there is $M^T_{\tilde{h}}$ such that $\int_0^1 \mu_t^{\tilde{h}} (dx) \leq M^T_{\tilde{h}}$ for $t \in [0,T]$ and a corresponding bound $M^T_h$ for $\mu_t^h(dx)$.  Using these, we can now say that 
\begin{align*} \hspace{-10mm} \dsdel{}{t} \bigg| \int_0^1 G(x) \left( \mu_t^h(dx) - \mu_t^{\tilde{h}}(dx) \right) \bigg| & \leq  (G')^* j^* \left(M^T_{h} + M^T_{\tilde{h}} \right)  + \lambda \left(G^* + B_{h} \right) \bigg| \int_0^1 G(x) \left( \mu_t^h(dx) - \mu_t^{\tilde{h}}(dx) \right) \bigg| \\ &+ \lambda G^* M^T_{\tilde{h}} ||h(t) - \tilde{h}(t)||_T  \end{align*} 

If $||h(t) - \tilde{h}(t)||_T = 0$, we know that $h(t) = \tilde{h(t)}$ and $\mu_t^h(dx) = \mu_t^{\tilde{h}}(dx)$ by the uniqueness of solutions to Equation \ref{eq:linearmeasureexistencepde}, and we can correspondingly conclude $\int_0^1 G(x) \left( \mu_t^h(dx) - \mu_t^{\tilde{h}}(dx) \right) = 0$ for $t \in [0,T]$.
In the alternate case that $||h(t) - \tilde{h}(t)||_T > 0$, we know that there exists $W < \infty$ such that $(G')^* j^* \left(M^T_{h} + M^T_{\tilde{h}} \right) \leq W ||h(t) - \tilde{h}(t)||_T$, and we can write that  \begin{align*} \hspace{-10mm} \dsdel{}{t} \bigg| \int_0^1 G(x) \left( \mu_t^h(dx) - \mu_t^{\tilde{h}}(dx) \right) \bigg| & \leq  \lambda \left(G^* + B_{h} \right) \bigg| \int_0^1 G(x) \left( \mu_t^h(dx) - \mu_t^{\tilde{h}}(dx) \right) \bigg| + \left( W + \lambda G^* M^T_{\tilde{h}} \right) ||h(t) - \tilde{h}(t)||_T \end{align*}

By \gronwalls inequality, we have that \[ \bigg| \int_0^1 G(x) \left( \mu_t^{h}(dx) - \mu_t^{\tilde{h}}(dx) \right) \bigg| \leq \left( \frac{W + \lambda G^* M_{\tilde{h}}^T}{\lambda \left(G^* + B_h \right)} \right) \left(e^{\lambda \left(G^* + ||h||_T \right)T} - 1\right) ||h(t) - \tilde{h}(t)||_T. \] 
For an $\eta \in (0,1)$, we can choose $T(\eta)$ close enough to 0 guarantees that \[||H(h(t)) - H(\tilde{h}(t))||_T(\eta) := \sup_{t \in [0,T(\eta)]} \bigg|\int_0^1 G(x) \left( \mu_t^h(dx) - \mu_t^{\tilde{h}}(dx) \right) \bigg| \leq \eta ||h(t) - \tilde{h}(t)||_{T(\eta)},  \]
which tells us that $H(h(t)) : C[0,T] \to C[0,T]$ is a contraction. 
Now we show that $H(h(t))$ is a maps closed balls $B_R$ of radius $R$ to itself. 
We compute
\[ |H(h(t))| =\bigg| \int_0^1 G(x) \mu_t^h(dx) \bigg| \leq G^* | \langle 1, \mu_t \rangle \leq \sup_{||\Psi|_{\infty} \leq 1} \langle \Psi, \mu_t^h \rangle := ||\mu_t^h||_{TV} \] Therefore, to estimate $|H(h(t))|$, it suffices to estimate $||\mu_t^h||_{TV}$. We use the variation of constants formula to compute that
\begin{align*} |\langle \Psi, \mu_t^h \rangle | = \bigg|\langle \Psi, \mu_0^h \rangle + \lambda \int_0^t \langle \left(P_{t-s}\Psi\right) (G(x) - h(t)), \mu_s^h \rangle ds \bigg| \leq ||\mu_0^h||_{TV} + \lambda \left(G^* + ||h||_T \right) \int_0^t \langle P_{t-s} \Psi, \mu_s^h \rangle ds\end{align*} where we used that $P_{t} \Psi(x) = \Psi(\phi_t(x))$ to deduce that $||P_t \Psi||_{\infty} \leq ||\Psi||_{\infty} \leq 1$ and that $\langle P_t \Psi, \mu_0^h \rangle \leq ||\mu_0^h||_{TV}$. Noting further that $ \langle P_{t-s} \Psi, \mu_s \rangle \leq ||P_{t-s} \Psi||_{\infty} \int_0^1 \mu_s^h(dx) \leq ||\Psi||_{\infty} M^T_{h} \leq M^T_{h}$ for $||\Psi||_{\infty} \leq 1$, we find that $$| \langle \Psi, \mu_t \rangle |  ||\mu_0^h||_{TV} + \lambda \left(G^* + ||h||_T \right) M^T_{h} T $$
Beacuse this is true for all test functions $\Psi$ satisfying $||\Psi||_{\infty} \leq 1$, we can combine our inequalities above and the definition of the $||\cdot||_T$ norm to conclude that $$||H(h(t))||_T \leq  ||\mu_0^h||_{TV} + \lambda \left(G^* + ||h||_T \right) M^T_{h} T$$
Choosing $T_{\epsilon} < \left( \lambda \left(G^* + ||h||_T \right) M^T_{h} \right)^{-1} \epsilon$ gives us that \[||H(h(t))||_{T_{\epsilon}} \leq ||\mu_0||_{TV} + \epsilon\], so we see that choosing a ball with radius $R \geq ||\mu_0||_{TV} + \epsilon$, then we see that $H(h(t))$ maps that closed ball $\{h(t) | ||h(t)||_T \leq R\}$ to itself. 

Combining the facts that $H(h(t))$ maps $C[0,T]$ into $C[0,T]$, it is a contraction mapping, and it maps sufficiently large closed balls $B_R$ to themselves, and that $C[0,T]$ is complete with respect to the $||\cdot||_T$ norm, we can apply the Banach fixed-point theorem to show that there exists a unique fixed point $h_{\flat}$ such that  $H(h_{\flat}(t)) = \int_0^1 G(x) \mu_t^{h_{\flat}}(dx) = h_{\flat}(t)$, which further means that $\mu_t^{h_{\flat}}(dx)$ solves Equation \ref{eq:measurepdeexistence} and is unique. Because there exists a $T > 0$ such that the solution $\mu_t(dx)$ exists and is unique, we can use the a priori estimate of Equation \ref{eq:aprioriestimate} to applying a similar argument existence argument starting with an initial population at $\mu_{\frac{T}{2}}(dx)$ and function $h(t)$ definied for $t \in [\frac{T}{2}.\frac{3T}{2}]$ to demonstrate existence of solutions on the time interval $[\frac{T}{2},\frac{3T}{2}]$, and we can continue this iteration to establish the existence of a unique solution $\mu_t(dx)$ to Equation \ref{eq:measurepdeexistence} for any time $t \geq 0$.

Furthermore, because solutions to Equation \ref{eq:linearmeasureexistencepde} satisfy the representation formula \[ \mu^h_t(dx) = w_t^h(x)( \mu_0 \circ \phi_t^{-1})(dx) \: \: \mathrm{or} \: \: \int_0^1 \Psi(x) \mu^h_t(dx) = \int_0^1 \Psi(\phi_t(x)) \exp\left( \lambda \int_0^t \left[ G(\phi_s(x_0)) - h(s) \right] ds \right) \mu_0(dx),    \] 
we can choose $h(t) = h_{\flat}(t)$ and use the fixed point relation $h_{\flat}(t) = H(h_{\flat}(t)) = \langle G(\cdot) \rangle_{\mu^{h_{\flat}}_t}$ to find that
\[ \int_0^1 \Psi(x) \mu_t(dx) = \int_0^1 \Psi(\phi_t(x)) \exp\left( \lambda \int_0^t \left[ G(\phi_s(x_0)) -  \langle G(\cdot) \rangle_{\mu^{h_{\flat}}_s}  \right] ds \right) \mu_0(dx).  \]
Because $\mu_t^{h_{\flat}}(dx)$ is the unique solution to Equation \ref{eq:measurepdeexistence}, we now see that the solution of Equation \ref{eq:measurepdeexistence} satisfies the implicit representation formula $\mu_t^{h_{\flat}}(dx) = w_t(x)( \mu_0 \circ \phi_t^{-1})(dx)$ of Equation \ref{eq:mutimplicit}.
\end{proof}

\section{Integrals Along Simplified Characteristics} \label{sec:integrals}

In this section, we show how to compute the integrals of solutions along the simplified characteristic curves $\Psi_t(k;x_0)$ from the PD and $\Xi_t(k;x_0)$ and $\Pi_t(k;x_0)$ from the HD game. 

\subsection{PD Integrals}

We start with $\Psi_t(k;x)$, the solution for the logistic family serving as the faster and slower characteristics for the within-group replicator dynamics. 

\begin{align*} \int_0^t \Psi_s(k;x_0) ds &=  \int_0^t \left[ \frac{x_0}{x_0 + \left( 1 - x_0 \right) e^{k s}} \right] ds \\ &= x_0 \int_1^{x_0 + \left(1 - x_0\right) e^{k t}} \frac{du}{ k u\left( u - x_0\right)} \: \: \textnormal{(where } u = x_0 + \left( 1 - x_0\right) e^{k s }) \\ &= \frac{1}{k} \int_1^{x_0 + \left(1-x_0\right)e^{k t}} \left(- \frac{1}{u} + \frac{1}{u - x_0} \right) du  \\ &=  \frac{1}{ k} \left[ -\log\left(u \right) + \log\left(u - x_0\right) \right] \bigg|_1^{x_0 + \left(1 - x_0\right) e^{k t}} \\ &=\left[ t  -  \frac{1}{k}  \log\left( x_0 + \left(1 - x_0\right) e^{ k t} \right) \right]  \end{align*} 
Knowing that $x_0$ can be written as $x_0 = \frac{x}{x + (1-x) e^{-kt}}$, we can plug in for $x_0$ above and conclude that
\[ \int_0^t \Psi_s(k;x_0) ds = t - \frac{1}{k} \log\left( \frac{x}{x + (1-x)e^{-kt}} + \left[\frac{(1-x)e^{-kt}}{x + (1-x)e^{-kt}}\right] e^{kt} \right) = t + \frac{1}{k} \log\left(x + (1-x)e^{-kt} \right) \]
For the logistic ODE, we can also compute the integral of $\Psi_t(k.x)$ as 
\begin{align*}  \int_0^t \Psi_s(k;x_0)^2 ds &=  x_0^2 \int_0^t \frac{ds}{\left(x_0 + \left( 1 - x_0 \right) e^{k s}\right)^2} = \frac{x_0^2}{k} \int_1^{x_0 + \left(1 - x_0\right) e^{k t}} \frac{du}{ u^2 \left( u - x_0\right)} \\ &= \frac{1}{k} \int_1^{x_0 + \left(1 - x_0\right) e^{k t}}  \left(\frac{1}{u-x_0} - \frac{1}{u}  - \frac{x_0}{u^2} \right) du \\ &= \frac{1}{k} \left[ \log(u - x_0) - \log(u) + \frac{x_0}{u} \right] \bigg|_{1}^{x_0 + (1 - x_0) e^{k t}} \\ &= t -  \frac{1}{k} \log(x_0 + \left(1-x_0\right)e^{kt}) + \frac{x_0}{k} \left[\frac{1}{x_0 + (1-x_0)e^{kt}} - 1\right]
  \end{align*}
  Using the formula $x_0 = \frac{x}{x + (1-x) e^{-kt}}$ and our expression for $\int_0^t \Psi_t(k;x_0)dt$, we can further see that 
  \begin{align*}  \int_0^t \Psi_s(k;x_0)^2 ds &= t + \frac{1}{k} \log\left(x + (1-x)e^{-kt} \right) + \frac{1}{k} \left[\frac{x}{x + (1-x) e^{-kt}} \right] \left[\frac{x + (1-x)e^{-kt}}{x + \left((1-x) e^{-kt}\right) e^{kt}}  -  1 \right] \\ &=  t + \frac{1}{k} \log\left(x + (1-x)e^{-kt} \right) + \frac{x}{k} \left[ 1 - \frac{1}{x + (1-x)e^{-kt}} \right]  \end{align*}

  \subsection{HD Integrals}
  
  \subsubsection{Dynamics Above Within-Group Equilibrium at $\frac{\beta}{|\alpha|}$}

We first consider $\Xi_t(k,x)$, the faster and slower characteristic curves for the within-group replicator dynamics for the HD games when the level of cooperation exceeds the within-group equilibrium. 
Because we are interested in dynamics in the interval $[\frac{\beta}{|\alpha|},1]$, we can choose a rescaled state variable $X := \frac{|\alpha| x - \beta}{|\alpha| - \beta}$. In terms of our new variable, the ODE from Equation \ref{eq:HDrightequation} takes the following form
\begin{equation} \label{eq:HDrightrescaled} \dsddx{X(t)}{t} = - \left(|\alpha| - \beta\right) k X \left( 1 - X \right) \: \:, \: \: X(0) = X_0 := \frac{|\alpha| x_0 - \beta}{|\alpha| - \beta} \end{equation}
whose solution is given by $\Psi_t\left(\left(|\alpha| - \beta\right) k;X_0\right)$. Because $\Xi_t(k;x_0)$ describes the evolution of $x(t)$ and $\Psi_t\left(\left(|\alpha| - \beta\right) k;X_0\right)$ describes the evolution of $X(t) = \frac{|\alpha| x(t) - \beta}{|\alpha| - \beta}$, we see that we can relate the two named solutions by
\begin{equation} \label{eq:XiintermsofPsi} \Xi_t(k;x_0) = \frac{\beta}{|\alpha|} + \left( \frac{|\alpha| - \beta}{|\alpha|}\right) \Psi_t\left(\left(|\alpha| - \beta\right) k;X_0 \right) %
 \end{equation}
 and we also have that 
 \begin{equation} \label{eq:XisquaredintermsofPsi} \Xi_t(k;x_0)^2 = \frac{\beta^2}{|\alpha|^2} + 2 \frac{\beta}{|\alpha|} \left( \frac{|\alpha| - \beta}{|\alpha|}\right) \Psi_t\left(\left(|\alpha| - \beta\right) k;X_0 \right) + \left( \frac{|\alpha| - \beta}{|\alpha|}\right)^2 \Psi_t\left(\left(|\alpha| - \beta\right) k;X_0 \right)^2  \end{equation}
 Using Equation \ref{eq:XiintermsofPsi} and our result from Equation \ref{eq:psiintegral}, we can compute
\begin{align*} \int_0^t \Xi_s\left(k;x\right) ds &= \int_0^t \left[ \frac{\beta}{|\alpha|} + \left(\frac{|\alpha| - \beta}{|\alpha|} \right) \Psi_s\left(\left(|\alpha| - \beta\right)k;X_0\right) \right] ds  \\ &= \left(\frac{\beta}{|\alpha|}\right) t +  \left(\frac{|\alpha| - \beta}{|\alpha|} \right) \int_0^t \Psi_s\left(\left(|\alpha| - \beta\right)k;X_0\right) ds \\ &= t + \left(\frac{|\alpha| - \beta}{|\alpha|} \right) \left(\frac{1}{\left(|\alpha| - \beta\right) k} \right) \log\left(X + (1-X) e^{-\left(|\alpha| - \beta \right) t} \right) \end{align*}  
Then, using $X = \frac{|\alpha|x - \beta}{|\alpha| - \beta}$, we can deduce that 
\begin{align*} \int_0^t \Xi_s\left(k;x\right) ds &= t + \frac{1}{|\alpha| k} \log\left(\frac{|\alpha| x - \beta + |\alpha| \left(1-x\right) e^{-kt}}{|\alpha| - \beta} \right) \end{align*}
Using Equation \ref{eq:XisquaredintermsofPsi} and the integrals calculated in Equations \ref{eq:psiintegral} and \ref{eq:psisquaredintegral}, we can also see that 
\begin{align*}  \int_0^t \Xi_s\left(k;x\right)^2 ds &= \left(\frac{\beta^2}{|\alpha|^2}\right) t + \frac{2\beta\left(|\alpha| - \beta\right)}{|\alpha|^2} \int_0^t \Psi_s\left(\left(|\alpha| - \beta\right) k; X_0 \right) ds + \left(\frac{|\alpha| - \beta}{|\alpha|} \right)^2 \int_0^t  \Psi_s\left(\left(|\alpha| - \beta\right) k; X_0 \right)^2 ds \\ &= t + \left(\frac{|\alpha| + \beta}{|\alpha|^2 k}  \right)  \log\left(X + (1-X) e^{-\left(|\alpha| - \beta\right)kt} \right) + \left(\frac{|\alpha| - \beta}{|\alpha|^2} \right) \frac{X}{k} \left[ 1 - \frac{1}{X + (1-X) e^{-\left(|\alpha| - \beta\right) k t}} \right] \end{align*}
Using that $X = \frac{|\alpha| x - \beta}{|\alpha| - \beta}$, we are able to see that 
\begin{dmath*}  \int_0^t \Xi_s\left(k;x\right)^2 ds  = t + \left(\frac{|\alpha| + \beta}{|\alpha|^2 k} \right)  \left[ \log\left(|\alpha| x - \beta + |\alpha| (1-x) e^{-\left(|\alpha| - \beta\right) k t}\right) - \log(|\alpha| - \beta) \right] \\ - \frac{1}{|\alpha| k} \left[ \frac{\left(1-x\right)\left( |\alpha| x - \beta \right) \left( 1 - e^{- \left( |\alpha| - \beta \right) k t} \right)}{|\alpha| x - \beta + |\alpha| \left( 1 - x \right) e^{- \left( |\alpha| - \beta \right) k t}} \right]  
 \end{dmath*}

\subsubsection{Dynamics Below $x^{eq} = \frac{\beta}{|\alpha|}$}
For the HD dynamics below the within-group dynamics, we choose rescaled state variable $X = \frac{|\alpha|}{\beta}x$. In terms of the new variable, the within-group dynamics of Equation \ref{eq:HDleftequation} can be written as
\begin{equation} \label{eq:HDleftrescaled} \dsddx{X(t)}{t} = \beta k X\left(1 - X\right) \: \: , \: \: X_0 = \frac{|\alpha|}{\beta} x_0 \end{equation}
Because Equation \ref{eq:HDleftrescaled} can be obtained from Equation \ref{eq:HDrightrescaled} by reversing time, we see that solutions to Equation \ref{eq:HDleftrescaled} are the backwards-in-time solution to the logistic ODE, $\Psi_t^{-1}(k;X_0)$. Therefore we have that $\Pi_t(k;x_0) = \Psi_T^{-1}(\beta k;X_0)$, so we see the equivalence of the characteristic curves
\begin{equation} \label{eq:Pitequivalence} \Pi_t(k;x_0) = \left(\frac{\beta}{|\alpha|}\right) \Psi_t^{-1}\left( \beta k ; X_0 \right)  \end{equation}
 and, using Equation \ref{eq:Psiinvt}, we can now compute the following integrals along the characteristic curves $\Pi_t(k;x_0)$
\begin{align*} \int_0^t \Pi_s(k;x_0) ds &= \frac{\beta}{|\alpha|} \int_0^t \Psi_s^{-1}\left( \beta k ; X_0 \right)  ds \\ &=  \frac{\beta}{|\alpha|} \int_0^t \left[ \frac{X_0}{X_0 + \left( 1 - X_0 \right) e^{-\beta k s}} \right] ds \\ &= - \frac{\beta X_0}{\alpha} \int_1^{X_0 + \left(1 - X_0\right) e^{\beta k t}} \frac{dU}{\beta k U\left( U - X_0\right)} \: \: \textnormal{(where } U = X_0 + \left( 1 - X_0\right) e^{\beta k s }) \\ &= \frac{1}{|\alpha| k} \int_1^{X_0 + \left(1-X_0\right)e^{\beta k t}} \left( \frac{1}{U} - \frac{1}{U - X_0} \right) dU  \\ &=  \frac{1}{|\alpha| k} \left[ \log\left(U \right) - \log\left(U - X_0\right) \right] \bigg|_1^{X_0 + \left(1 - X_0\right) e^{-\beta k t}} \\ &= \frac{1}{|\alpha|} \left[\beta t + \frac{1}{k} \log\left( X_0 + \left(1 - X_0\right) e^{-\beta k t} \right) \right]  \end{align*} 
Now, we can write $X_0$ in terms of $t$ and $X$ using $X_0 = \Psi_t(\beta k ;X) = \frac{X}{X + \left(1-X\right)e^{\beta k t}}$ and use $X = \frac{|\alpha|}{\beta} x$ to tell us that 
\begin{align*} \int_0^t \Pi_s(k;x_0) ds &= \frac{1}{|\alpha|} \left[\beta t + \frac{1}{k} \log\left(  \frac{X}{X + \left(1-X\right)e^{\beta k t}} + \left(  \frac{\left(1-X\right)e^{\beta k t}}{X + \left(1-X\right)e^{\beta k t}} \right) e^{-\beta k t}  \right) \right]  \\ &= \frac{1}{|\alpha|} \left[ \beta t - \frac{1}{k} \log\left(X + \left(1-X\right) e^{\beta k t} \right) \right] \\ &= \frac{1}{|\alpha| k} \log\left( \left(1-X\right) + Xe^{-\beta k t} \right)  \\ &=  - \frac{1}{|\alpha| k} \left[  \log\left( \left(\beta-|\alpha|\right) + |\alpha|e^{-\beta k t} \right) + \log\left(\beta\right) \right]\end{align*}
We also want to calculate
\begin{align*} \int_0^t \Pi_s(k;x_0)^2 ds &= \frac{\beta^2}{|\alpha|^2} \int_0^t \Psi_s^{-1}\left( \beta k ; X_0 \right)^2  ds \\ &=  \frac{\beta^2}{|\alpha|^2} \int_0^t \left[ \frac{X_0^2}{\left(X_0 + \left( 1 - X_0 \right) e^{-\beta k s}\right)^2} \right] ds \\ &= - \frac{\beta^2 X_0^2}{|\alpha|^2} \int_1^{X_0 + \left(1 - X_0\right) e^{\beta k t}} \frac{dU}{\beta k U^2\left( U - X_0\right)} \: \: \textnormal{(where } U = X_0 + \left( 1 - X_0\right) e^{\beta k s }) \\ &= \frac{\beta}{|\alpha|^2 k}  \int_1^{X_0 + \left(1 - X_0\right) e^{-\beta k t}} \left( \frac{X_0}{U^2} + \frac{1}{U} - \frac{1}{U-X_0} \right) dU \\ &= \frac{\beta}{|\alpha|^2 k} \left[\frac{X_0}{U} + \log(U) - \log(U - X_0) \right] \bigg|_1^{X_0 + \left(1-X_0\right) e^{-\beta k t}} \\ &= \frac{\beta}{|\alpha|^2 k} \left[\beta k t + \log\left(X_0 + \left(1-X_0\right)e^{-\beta k t} \right) +  X_0 \left( 1 - \frac{1}{X_0 + \left(1 - X_0 \right) e^{- \beta k t} } \right)  \right] \end{align*}
From Equation \ref{eq:Pitequivalence}, we can also express $X_0$ in terms of $t$ and $X$ as $X_0 = \frac{X}{X+ (1-X)e^{\beta k t}}$. Plugging this in for $X_0$ above lets us write that
\begin{align*} \int_0^t \Pi_s(k;x_0)^2 ds &= \frac{\beta}{|\alpha|^2 k} \left[ \beta k t - \log\left(X + \left(1-X\right) e^{\beta k t} \right)   +  \frac{X(1-X)\left(1 - e^{\beta k t}\right)}{X + (1-X) e^{\beta k t}} \right]  \\ &=  \frac{\beta}{|\alpha|^2 k} \left[  \log\left((1-X) + Xe^{-\beta k t} \right)  +  \frac{X(1-X)\left(1 - e^{\beta k t}\right)}{X + (1-X) e^{\beta k t}} \right] \end{align*}
Then, using $X = \frac{|\alpha|}{\beta}x$, we can finally see that 
\begin{dmath} \int_0^t \Pi_s(k;x_0)^2 ds = - \frac{\beta}{|\alpha|^2 k} \left[ \log\left( \beta - |\alpha| x + |\alpha| x e^{- \beta k t} \right) - \log(\beta) \right] \\ - \frac{1}{|\alpha| k} \left[ \frac{x \left( \beta - |\alpha| x \right) \left(e^{\beta k t} - 1 \right)}{|\alpha| x + \left( \beta - |\alpha| x \right) e^{\beta k t}} \right] \end{dmath}

\bibliography{multilevelselection}
\bibliographystyle{ieeetr}

\end{document}